\documentclass[11pt]{article}
\pdfoutput=1
\usepackage{fullpage}
\usepackage[bookmarks]{hyperref}
\usepackage{amssymb}
\usepackage{amsmath}
\usepackage{amsthm}
\usepackage{graphicx,color,colordvi}
\usepackage{bbm}
\usepackage{varioref}
\usepackage{cleveref}
\usepackage{stmaryrd}
\usepackage[utf8]{inputenc}
\usepackage[blocks]{authblk}
\usepackage{dsfont}
\usepackage{mathtools}
\usepackage{authblk}
\newcommand{\norm}[1]{\left\| #1 \right\|}

\usepackage{comment}
\usepackage{tikz}
\usepackage{graphicx}
\usepackage{bbm}
\usepackage[T1]{fontenc}
\usepackage{url}
\usepackage{stmaryrd}
\usepackage{varioref}
\usepackage{cleveref}
\usepackage{braket}
\usepackage{ upgreek }
\usepackage{dsfont}
\usepackage{makecell}
\usepackage{fancyref} 
\usepackage{algorithm}
\usepackage{algpseudocode}
\newcommand{\identity}{\ensuremath{\mathds{1}}}
\usepackage{tikz}
\usepackage{ifthen}
\usepackage{pgfplots}
\pgfplotsset{compat=1.9}
\usetikzlibrary{shapes,arrows}
\usetikzlibrary{positioning}
\usetikzlibrary{shapes.geometric}

\usepackage{float}
\newfloat{algorithm}{t}{lop}

\newcommand{\abs}[1]{\left| #1 \right|}

\oddsidemargin0cm
\evensidemargin0cm
\textwidth16.1cm
\setlength\parindent{+4ex}
\usepackage{hyperref}



\theoremstyle{plain}
\newtheorem{thm}{Theorem}[section]
\newtheorem{lem}[thm]{Lemma}
\newtheorem{cor}[thm]{Corollary}
\newtheorem{prop}[thm]{Proposition}
\newtheorem{defi}[thm]{Definition}
\theoremstyle{remark}
\newtheorem{rem}[thm]{Remark}

\newcommand{\supp}{\operatorname{supp}}

\newcommand*{\fancyrefthmlabelprefix}{thm}
\newcommand*{\fancyreflemlabelprefix}{lem}
\newcommand*{\fancyrefcorlabelprefix}{cor}
\newcommand*{\fancyrefdefilabelprefix}{defi}
\frefformat{plain}{\fancyreflemlabelprefix}{lemma\fancyrefdefaultspacing#1}
\Frefformat{plain}{\fancyreflemlabelprefix}{Lemma\fancyrefdefaultspacing#1}
\frefformat{plain}{\fancyrefthmlabelprefix}{theorem\fancyrefdefaultspacing#1}
\Frefformat{plain}{\fancyrefthmlabelprefix}{Theorem\fancyrefdefaultspacing#1}
\frefformat{plain}{\fancyrefcorlabelprefix}{corollary\fancyrefdefaultspacing#1}
\Frefformat{plain}{\fancyrefcorlabelprefix}{Corollary\fancyrefdefaultspacing#1}
\frefformat{plain}{\fancyrefdefilabelprefix}{definition\fancyrefdefaultspacing#1}
\Frefformat{plain}{\fancyrefdefilabelprefix}{Definition\fancyrefdefaultspacing#1}
\newcommand*{\fancyrefalglabelprefix}{alg}
\newcommand*{\frefalgname}{algorithm}
\newcommand*{\Frefalgname}{Algorithm}
\frefformat{plain}{\fancyrefalglabelprefix}{%
  \frefalgname\fancyrefdefaultspacing#1%
}%
\Frefformat{plain}{\fancyrefalglabelprefix}{%
  \Frefalgname\fancyrefdefaultspacing#1%
}%

\newcommand*{\fancyrefapplabelprefix}{app}
\newcommand*{\frefappname}{appendix}
\newcommand*{\Frefappname}{Appendix}
\frefformat{plain}{\fancyrefapplabelprefix}{%
  \frefappname\fancyrefdefaultspacing#1%
}%
\Frefformat{plain}{\fancyrefapplabelprefix}{%
  \Frefappname\fancyrefdefaultspacing#1%
}%

\def\Block[#1,#2,#3]{

\def\r{0.3};

\ifthenelse{\NOT #3=0}{
\fill [#2] (-0.5,-0.5) rectangle ({#1-0.5},0.5);
}

\foreach \n in {1,...,#1}{

\shade[shading=ball, ball color=coolblue] ({\n-1},0) circle (\r);

}



}


\definecolor{Green}{HTML}{00AD69}  
\definecolor{coolblue}{RGB}{0,51,102}
\definecolor{lightblue}{RGB}{102,210,255}
\definecolor{lightpurple}{RGB}{140,30,255}
\definecolor{lightpink}{RGB}{204,0,204}
\definecolor{midblue}{RGB}{0,102,204}
\definecolor{midpink}{RGB}{153,0,153}
\definecolor{darkblue}{RGB}{0,0,153}
\definecolor{cyan}{RGB}{0,204,204}
\definecolor{lightgreen}{RGB}{0,255,128}
\definecolor{midgreen}{RGB}{0,204,0}
\definecolor{midyellow}{RGB}{204,204,0}
\definecolor{darkyellow}{RGB}{153,153,0}
\definecolor{darkpurple}{RGB}{102,0,102}

\def\beq{\begin{equation}}
\def\eeq{\end{equation}}
\def\bq{\begin{quote}}
\def\eq{\end{quote}}
\def\ben{\begin{enumerate}}
\def\een{\end{enumerate}}
\def\bit{\begin{itemize}}
\def\eit{\end{itemize}}

\def\l|{\left|}
\def\r|{\right|}

\newcommand\M{\mathcal{M}}

\newcommand\cB{\mathcal{B}}

\newcommand{\cD}{\mathcal{D}}

\newcommand{\cK}{\mathcal{K}}

\newcommand{\cL}{\mathcal{L}}

\newcommand{\cM}{\mathcal{M}}
\newcommand{\Ncal}{\mathcal{N}}



\newcommand{\tr}{{\operatorname{tr}}}
\newcommand{\si}{\sigma}

\newcommand{\Id}{\mathds{1}}

\newcommand{\id}{\operatorname{id}}

\newcommand{\ten}{\otimes}

\newcommand{\cH}{\mathcal{H}}

\newcommand{\eps}{\epsilon}
\newcommand{\cN}{\mathcal{N}}
\newcommand{\cT}{\mathcal{T}}
\newcommand{\Tr}{{\operatorname{tr}}}
\newcommand{\DF}{\mathcal{N}}

\begin{document}


\title{Entropy decay for Davies semigroups\\ of a one dimensional quantum lattice}







\author[1]{Ivan Bardet\thanks{ivan.bardet@inria.fr}}
\author[2,3,4]{{\'A}ngela Capel\thanks{angela.capel@uni-tuebingen.de}}
\author[2,3,5]{Li Gao\thanks{lgao20@central.uh.edu}}
\author[6,7]{Angelo Lucia\thanks{anglucia@ucm.es}}
\author[6,7]{David P\'{e}rez-Garc\'{i}a\thanks{dperezga@ucm.es}}
\author[2,3]{Cambyse Rouz\'{e}\thanks{cambyse.rouze@tum.de}}
\affil[1]{Inria Paris, 75012 Paris, France}
\affil[2]{Department of Mathematics, Technische Universit\"at M\"unchen, 85748 Garching, Germany}
\affil[3]{Munich Center for Quantum Science and Technology, 80799 M\"unchen, Germany}
\affil[4]{Fachbereich Mathematik, Universit\"at T\"ubingen, 72076 T\"ubingen, Germany}
\affil[5]{Department of Mathematics, University of Houston, 77204 Houston, USA}
\affil[6]{Departamento de An\'{a}lisis Matemático y Matemática Aplicada, Universidad Complutense de Madrid, 28040 Madrid, Spain}
\affil[7]{Instituto de Ciencias Matemáticas, 28049 Madrid, Spain}

\maketitle

\begin{abstract}
Given a finite-range, translation-invariant commuting system Hamiltonians on a spin chain, we show that the Davies semigroup describing the reduced dynamics resulting from the joint Hamiltonian evolution of a spin chain weakly coupled to a large heat bath thermalizes rapidly at any temperature. More precisely, we prove that the relative entropy between any evolved state and the equilibrium Gibbs state contracts exponentially fast with an exponent that scales logarithmically with the length of the chain. Our theorem extends a seminal result of Holley and Stroock \cite{holley1989uniform} to the quantum setting, up to a logarithmic overhead, as well as provides an exponential improvement over the non-closure of the gap proved by Brandao and Kastoryano \cite{kastoryano2016quantum}. This has wide-ranging applications to the study of many-body in and out-of-equilibrium quantum systems. Our proof relies upon a recently derived strong decay of correlations for Gibbs states of one dimensional, translation-invariant local Hamiltonians, and tools from the theory of operator spaces.
\end{abstract}

\newpage

\section{Introduction}

The mitigation of errors arising from noise represents a key challenge in the development of large-scale quantum architectures \cite{Preskill2018}. Unfortunately, noise adversely affects all stages of quantum
computation, from the storage of quantum information in quantum memories to its manipulation by means of quantum gates and quantum channels. In the future, error correcting codes will be used to suppress noise and achieve fully fault-tolerant computation. However, full fault-tolerance requires an unattainable overhead for current and near-term hardware. This observation leads to the need for a better understanding and
characterization of the underlying noise dynamics in current quantum devices, both in
order to inform hardware design, and for the preparation of noise-dependent optimized error correction and mitigation
protocols.

Davies generators \cite{Davies1976,Davies1979,archbold_1983} model the thermal dynamics that emerge for a system weakly interacting with a large thermal reservoir. They are the natural candidates for the description of the thermal noise occurring on self-correcting quantum memories \cite{Alicki2010,Alicki_2009,lucia2021thermalization}. The weak coupling between the system and its environment allows to consider only the reduced dynamics on the system alone, which can be modeled by a quantum Markov semigroup, that is a semigroup of quantum channels. Given a spin chain $\Lambda=\llbracket 1,n\rrbracket$ made of $n$ qudits, of corresponding quantum system Hilbert space $\cH_\Lambda:=\bigotimes_{k\in \Lambda}\cH_k$, $\cH_k\equiv \mathbb{C}^d$, the closed evolution of the system in absence of environmental noise is described by a local Hamiltonian $H_\Lambda$ acting on $\cH_\Lambda$ and of the form:
\begin{align*}
    H_\Lambda:=\sum_{A\subset \Lambda}h_A\,,
\end{align*}
where for each subregion $A\subset \Lambda$, the local self-adjoint operator $h_A$ acts non-trivially on $A$. In this article, we further impose that the Hamiltonian $H_\Lambda$ is commuting, which means that for any two regions $A,A'\subset \Lambda$, $[h_A,h_{A'}]=h_Ah_{A'}-h_{A'}h_A=0$; and translation-invariant, i.e. for any $A\subset\!\subset\mathbb{Z}$ and $k\in \mathbb{Z} $, $h_{A+k}=h_A$, were $A+k:=\{x+k|\,x\in A\}$.
We also assume that $H_\Lambda$ has finite-range $r$ and bounded interaction strength $J:=\max_{A\subset \Lambda}\|h_A\|_\infty$ independently of the system size $n$. The former means that for any subregion $A$ such that $\operatorname{diam}(A):=\max\{k|k\in A\}-\min\{k|k\in A\}>r$, $h_A=0$. Under some standard conditions on the generator $\cL_{\Lambda*}^D$ acting on the algebra $\cB(\cH_\Lambda)$ of linear operators on $\cH_\Lambda$ (see e.g.~\cite{SL78}), the Davies semigroup $(e^{\cL_{\Lambda*}^D})_{t\ge 0}$ admits the following asymptotic behavior: for all initial states $\rho\in \cD(\cH_\Lambda)$,
\begin{align}\label{thermalization}
    e^{t\cL_{\Lambda*}^D}(\rho)\underset{t\to\infty}{\longrightarrow}\,\sigma^\Lambda\,,\qquad \text{ where }\qquad \sigma^\Lambda:=\frac{e^{-\beta H_\Lambda}}{\tr[e^{-\beta H_\Lambda}]}
\end{align}
denotes the Gibbs state of the Hamiltonian $H_\Lambda$ at the inverse temperature $\beta>0$ of the reservoir.

A natural question emerging from the above discussion is that of the speed at which the thermalization \eqref{thermalization} occurs. In order to study the latter, we introduce the \textit{mixing time} of the semigroup: for any $\eps>0$,
\begin{align}\label{mixing_time}
    t_{\operatorname{mix}}(\eps):=\inf\big\{t\ge 0|\,\forall\rho\in\cD(\cH_\Lambda),\, \|e^{t\cL_{\Lambda*}^D}(\rho)-\sigma^\Lambda\|_1\le \eps\big\}\,,
\end{align}
where $\|\rho-\sigma\|_1:=\tr\big|\rho-\sigma\big|$ denotes the trace distance between the states $\rho$ and $\sigma$. In \cite{kastoryano2016quantum}, the authors proved that the spectral gap of $\cL_{\Lambda *}^D$ is uniformly lower bounded by a constant $\lambda_0>0$ independent of the system size $n$. This can be proven to imply the following bound on the mixing time \cite{KastoryanoTemme-LogSobolevInequalities-2013}:
\begin{align}\label{equatrapidmixing}
    t_{\operatorname{mix}}(\eps)= \frac{1}{\lambda_0}\,\mathcal{O}\Big(\ln\Big(\frac{1}{\eps }\Big)+n\Big)\,.
\end{align}
However, by analogy with the classical setting of Glauber dynamics, one would expect a logarithmic scaling with system size \cite{holley1989uniform,Zegarlinski1990}. In our main result, we answer this conjecture in the positive. In fact, we prove the following stronger result:
\begin{thm}[Informal formulation]\label{thm:main}
For any reservoir inverse temperature $\beta>0$ and all states $\rho\in\cD(\cH_\Lambda)$:
\begin{align}\label{eq:entdecay}
    D(e^{t\cL_{\Lambda*}^D}(\rho)\|\sigma^\Lambda)\le e^{-4\alpha_n t}\,D(\rho\|\sigma^\Lambda)\,,
\end{align}
where $\alpha_n=\Omega(\ln(n)^{-1})$, and where $D(\rho\|\sigma^\Lambda):=\tr(\rho\ln\rho)-\tr(\rho\ln\sigma^\Lambda)$ denotes the relative entropy between a state $\rho$ and $\sigma^\Lambda$. Moreover, the constant $\alpha_n$ scales exponentially with $\beta$.
\end{thm}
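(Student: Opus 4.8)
The plan is to recast the exponential entropy decay \eqref{eq:entdecay} as a \emph{modified logarithmic Sobolev inequality} (MLSI) and then prove that inequality by a quantum adaptation of the coarse-graining strategy of Holley and Stroock \cite{holley1989uniform}. By Gr\"onwall's lemma, \eqref{eq:entdecay} is equivalent to the MLSI
\begin{align}\label{eq:mlsi-plan}
4\alpha_n\, D(\rho\|\sigma^\Lambda)\ \le\ \operatorname{EP}_\Lambda(\rho):=-\frac{d}{dt}\Big|_{t=0^{+}}D\big(e^{t\cL_{\Lambda*}^D}(\rho)\,\big\|\,\sigma^\Lambda\big)\,,
\end{align}
required for all $\rho\in\cD(\cH_\Lambda)$, where $\operatorname{EP}_\Lambda$ is the entropy-production functional of the Davies generator and decomposes as a sum over its local terms. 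So it suffices to prove \eqref{eq:mlsi-plan} with $\alpha_n=\Omega(\ln(n)^{-1})$. Since the argument below applies the inequality to restrictions of $\rho$ to subintervals --- which may stay correlated with the rest of the chain --- every estimate has to be obtained in a form stable under tensoring with an arbitrary ancilla, i.e.\ at the level of \emph{completely bounded} norms; this is the first place where the theory of operator spaces enters.

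\textbf{Approximate tensorization.} The engine of the proof is an inequality of the shape: whenever a subinterval splits as $X=A\cup B$ with $A\cap B$ an interval of length $\ell$,
\begin{align}\label{eq:at-plan}
D(\rho\|\mathcal{E}_{X}^{*}\rho)\ \le\ \big(1+\varepsilon(\ell,n)\big)\,\big(D(\rho\|\mathcal{E}_{A}^{*}\rho)+D(\rho\|\mathcal{E}_{B}^{*}\rho)\big)\,,
\end{align}
where $\mathcal{E}_{A},\mathcal{E}_{B},\mathcal{E}_{X}$ are the conditional expectations onto the fixed-point algebras of the Davies generators restricted to $A$, $B$, $X$ (so that $\mathcal{E}_\Lambda^{*}\rho=\sigma^\Lambda$). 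Because $H_\Lambda$ is commuting, these conditional expectations have an explicit near-local form and $\mathcal{E}_A\circ\mathcal{E}_B$ agrees with $\mathcal{E}_X$ up to a correction supported near the overlap $A\cap B$; the size of this correction, hence of $\varepsilon(\ell,n)$, is controlled by the recently derived strong decay of correlations for Gibbs states of one-dimensional translation-invariant local (here commuting) Hamiltonians. Converting that correlation bound, which is of trace-norm type, into the entropic statement \eqref{eq:at-plan} is done via complex (Stein) interpolation on noncommutative $L_p$/Araki--Masuda spaces together with multivariate trace inequalities, and yields $\varepsilon(\ell,n)\le\operatorname{poly}(n)\,e^{-\ell/\xi}$ for a temperature-dependent correlation length $\xi$ independent of $n$; the polynomial prefactor is the price of expressing the decay of correlations in the dimension-sensitive, completely bounded, entropic form needed here.

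\textbf{Recursion over $O(\log n)$ scales.} Fix $\ell=\Theta(\xi\ln n)$ with a large enough constant, so that $\varepsilon(\ell,n)=o(1/\ln n)$. Starting from $\Lambda$, recursively bisect the current interval into two halves overlapping on an interval of length $\ell$, stopping when the pieces have length $\Theta(\ell)$; this takes $O(\log n)$ levels. Iterating \eqref{eq:at-plan} along this binary tree gives $D(\rho\|\sigma^\Lambda)\le(1+\varepsilon(\ell,n))^{O(\log n)}\sum_{S}D(\rho\|\mathcal{E}_{S}^{*}\rho)=O(1)\cdot\sum_{S}D(\rho\|\mathcal{E}_{S}^{*}\rho)$, the product of errors being $O(1)$ precisely because $\ell=\Theta(\ln n)$ --- this is the origin of the logarithmic overhead. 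For each fixed-size block $S$ one invokes the conditional MLSI $D(\rho\|\mathcal{E}_S^{*}\rho)\le\widetilde\alpha_\ell^{-1}\operatorname{EP}_S(\rho)$ and sums: since the Davies jump operators have finite range, each local term of $\cL_{\Lambda*}^{D}$ is relevant to only $O(1)$ of the blocks $S$, so $\sum_S\operatorname{EP}_S(\rho)=O(1)\cdot\operatorname{EP}_\Lambda(\rho)$, and \eqref{eq:mlsi-plan} follows with $\alpha_n=\Omega(\widetilde\alpha_\ell)$.

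\textbf{Finite blocks, $\beta$-dependence, and the main obstacle.} It remains to lower bound the conditional MLSI constant $\widetilde\alpha_\ell$ of a block of length $\ell=\Theta(\ln n)$: for any fixed finite block the Davies generator is primitive, so $\widetilde\alpha_\ell>0$, and quantitatively one combines the system-size-independent spectral-gap lower bound $\lambda_0$ of Brand\~{a}o and Kastoryano \cite{kastoryano2016quantum} with the universal comparison between gap and MLSI constant (a noncommutative Diaconis--Saloff-Coste estimate, costing a factor $\log\|(\sigma^{S})^{-1}\|_\infty=O(\beta\ell)$) to obtain $\widetilde\alpha_\ell=\Omega(\lambda_0/(\beta\ell))$; since $\lambda_0$ is exponentially small in $\beta$, so is $\widetilde\alpha_\ell$, and $\ell=\Theta(\ln n)$ then yields $\alpha_n=\Omega(\ln(n)^{-1})$ with the claimed exponential-in-$\beta$ scaling. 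The principal obstacle is \eqref{eq:at-plan}: converting the essentially trace-norm decay of correlations of $\sigma^\Lambda$ into an \emph{entropic}, dimension-free, ancilla-stable near-factorization of the conditional expectations. Relative entropy is not Lipschitz in its arguments, and the Davies conditional expectations are self-adjoint only for the KMS/GNS inner product of $\sigma^\Lambda$ rather than the trace, so controlling the correction terms with the correct additive form is exactly what forces the operator-space machinery; the commuting, one-dimensional structure of $H_\Lambda$ is what keeps those corrections localized and thus controllable.
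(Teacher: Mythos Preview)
Your high-level plan---recast as an MLSI, prove approximate tensorization driven by decay of correlations, then feed in a block-level MLSI---matches the paper's. The architecture, however, differs in two places, and one of them hides a genuine gap.

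\textbf{Decomposition geometry.} The paper does \emph{not} recurse. It performs a single covering of $\Lambda$ by two interlaced families $\{A_i\},\{B_j\}$ of intervals of size $\Theta(\ln n)$ and applies \emph{once} the approximate tensorization of \cite{capel2018quantum}, which is phrased in terms of the marginal-based quantity $\|h(\sigma_{A^cB^c})\|_\infty$ rather than the Davies conditional expectations you use. The hard work (Section~\ref{sec:mixingcondition}) is showing that although $A^c,B^c$ are unions of $O(n/\ln n)$ disjoint pieces, $\|h(\sigma_{A^cB^c})\|_\infty$ stays below $1/2$ once the overlaps are $\Theta(\ln n)$, by iterating the 1D mixing estimate of \cite{bluhm2021exponential}. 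Your recursive bisection is a legitimate alternative (it is how the classical proofs go) and has the advantage of only ever dealing with connected intervals; but the inequality you write, $D(\rho\|\mathcal E_X^*\rho)\le(1+\varepsilon)\big(D(\rho\|\mathcal E_A^*\rho)+D(\rho\|\mathcal E_B^*\rho)\big)$, is not the one in \cite{capel2018quantum} and is closer to the order-based results of \cite{laracuente2019quasi,gao2021spectral}, whose constants come out as $4k$ with $k$ governed by a cb-index, not as $1+\varepsilon$. The $\mathrm{poly}(n)$ prefactor you posit also has no clear source when $A,B,X$ are connected---the error there should be purely local in $\ell$.

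\textbf{Local MLSI: the actual gap.} You assert that ``for any fixed finite block the Davies generator is primitive'' and then appeal to a Diaconis--Saloff-Coste gap-to-MLSI comparison. But $\cL_S^D$ acts on all of $\cH_\Lambda$ and has a large fixed-point algebra (containing $\cB(\cH_{(S\partial)^c})$); it is \emph{not} primitive. What you need is the \emph{complete} (conditional) MLSI $D(\rho\|E_{S*}\rho)\le\tilde\alpha_\ell^{-1}\operatorname{EP}_S(\rho)$ uniformly in the ancilla, and the standard gap comparison only yields the non-complete version, which does not tensorize. The paper closes this gap by a \emph{second} approximate tensorization: it reduces each $\Theta(\ln n)$-block to single sites via the order inequality $\tfrac12 E_A\le_{\mathrm{cp}}(\prod_{i\in A}E_i)^{k_A}\le_{\mathrm{cp}}\tfrac32 E_A$ with $k_A=O(|A|)$ (Lemma~\ref{lemmatechnical1}), established through amalgamated $L_p$-space estimates together with the detectability lemma and the gap of \cite{kastoryano2016quantum}, and only then invokes the single-site CMLSI from \cite{gao2021spectral}. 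That reduction---not the global covering step---is where the operator-space machinery actually enters, and it is exactly the piece missing from your outline.
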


Note that this theorem is indeed stronger than a logarithmic scaling of the mixing time of Equation \eqref{mixing_time} as a consequence of Pinsker's inequality and Equation \eqref{eq:entdecay}. The inequality \eqref{eq:entdecay} is referred to as a \textit{modified logarithmic Sobolev inequality} (MLSI) \cite{KastoryanoTemme-LogSobolevInequalities-2013} and the constant $\alpha_n$ satisfying this bound is called the \textit{MLSI constant}. The problem of determining whether a quantum Markov semigroup satisfies a MLSI has been addressed in various settings in the last years. Some examples appear in \cite{MullerHermesFrancaWolf-DepolarizingChannels-2016,MullerHermesFrancaWolf-EntropyProduction-2016}, where it was shown that the MLSI constant of the depolarizing channel can be lower bounded by $1/2$. This was subsequently extended to the generalized depolarizing channel in \cite{BeigiDattaRouze-ReverseHypercontractivity-2018, capel2018quantum}. \cite{bardet2019modified} constitutes the first attempt to prove the inequality in the setting of spin systems, where the heat-bath  generator in 1D was shown to satisfy a MLSI under two conditions of decay of correlations on the Gibbs state. More recently, \cite{capel2020MLSI} proved the modified logarithmic Sobolev inequality for a family of Gibbs samplers of nearest neighbor commuting Hamiltonians above a certain threshold temperature. However, the drawback of \cite{capel2020MLSI} stems from the lack of physicality of the generator considered.

Although the proof of \Cref{thm:main} relies on a notion of clustering of correlations similar in spirit to that used in \cite{capel2020MLSI}, it also requires advanced techniques from the operator space structure of amalgamated $L_p$ spaces \cite{junge2010mixed}. To our knowledge, this is the first time that such techniques are being used to solve a problem from many-body quantum systems. It is also worth noticing that our result holds for any inverse temperature $\beta$.

\subsection{Applications}

As mentioned above, \Cref{eq:entdecay} implies the following corollary, as a consequence of Pinsker's inequality:
\begin{cor}\label{corollaryrapidmixing}
For any $\beta>0$, the semigroup $(e^{t\cL_\Lambda^D})_{t\ge 0}$ satisfies the following rapid mixing property:
 \begin{align*}
    t_{\operatorname{mix}}(\eps)\le \frac{1}{\alpha_n}\mathcal{O}\Big(\ln\Big(\frac{1}{\eps }\Big)+\ln(n)\Big)\,.
\end{align*}
\end{cor}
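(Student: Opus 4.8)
The plan is to derive Corollary~\ref{corollaryrapidmixing} directly from the entropy decay inequality \eqref{eq:entdecay} of Theorem~\ref{thm:main} via Pinsker's inequality, which states that $\frac{1}{2}\|\rho-\sigma\|_1^2 \le D(\rho\|\sigma)$ for any two states $\rho,\sigma$. First I would fix an arbitrary initial state $\rho\in\cD(\cH_\Lambda)$ and apply \eqref{eq:entdecay} to bound $D(e^{t\cL_{\Lambda*}^D}(\rho)\|\sigma^\Lambda) \le e^{-4\alpha_n t}D(\rho\|\sigma^\Lambda)$. Combining this with Pinsker's inequality applied to the evolved state gives
\begin{align*}
    \tfrac{1}{2}\big\|e^{t\cL_{\Lambda*}^D}(\rho)-\sigma^\Lambda\big\|_1^2 \le D(e^{t\cL_{\Lambda*}^D}(\rho)\|\sigma^\Lambda) \le e^{-4\alpha_n t}\,D(\rho\|\sigma^\Lambda)\,.
\end{align*}

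Next I would control the initial relative entropy $D(\rho\|\sigma^\Lambda)$ uniformly over all states $\rho$. Since $D(\rho\|\sigma^\Lambda) = -S(\rho) - \tr(\rho\ln\sigma^\Lambda)$ and the von Neumann entropy is nonnegative, we have $D(\rho\|\sigma^\Lambda) \le -\tr(\rho\ln\sigma^\Lambda) \le -\ln\lambda_{\min}(\sigma^\Lambda)$, where $\lambda_{\min}(\sigma^\Lambda)$ is the smallest eigenvalue of the Gibbs state. Because $\sigma^\Lambda = e^{-\beta H_\Lambda}/\tr[e^{-\beta H_\Lambda}]$ with $H_\Lambda$ a sum of $O(n)$ local terms of bounded strength $J$, one gets the standard bound $-\ln\lambda_{\min}(\sigma^\Lambda) = \mathcal{O}(\beta J n + n\ln d)$, i.e. $D(\rho\|\sigma^\Lambda) \le C n$ for a constant $C$ depending only on $\beta$, $J$ and $d$. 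Plugging this in yields $\big\|e^{t\cL_{\Lambda*}^D}(\rho)-\sigma^\Lambda\big\|_1 \le \sqrt{2Cn}\,e^{-2\alpha_n t}$.

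Finally I would solve for the time at which the right-hand side drops below $\eps$: it suffices that $\sqrt{2Cn}\,e^{-2\alpha_n t}\le\eps$, i.e. $t \ge \frac{1}{2\alpha_n}\big(\ln(1/\eps) + \tfrac{1}{2}\ln(2Cn)\big)$. Since this bound is uniform in $\rho$, it bounds $t_{\operatorname{mix}}(\eps)$, and absorbing constants gives $t_{\operatorname{mix}}(\eps) \le \frac{1}{\alpha_n}\mathcal{O}\big(\ln(1/\eps) + \ln(n)\big)$, as claimed. The only genuinely non-formal step is the uniform bound on $D(\rho\|\sigma^\Lambda)$, and even there the main point — that it is at most linear in $n$ — is standard; the $\ln n$ in the final mixing time then comes precisely from taking the logarithm of this $O(n)$ prefactor, which is the expected and well-understood price of passing from an entropy contraction to a trace-norm mixing time. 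There is essentially no obstacle here: the work is entirely in Theorem~\ref{thm:main}, and this corollary is a routine unpacking of it.
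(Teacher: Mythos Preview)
Your proposal is correct and follows exactly the approach the paper intends: the paper does not give a detailed proof but simply states that the corollary follows from \eqref{eq:entdecay} via Pinsker's inequality, which is precisely the argument you spell out (with the standard $\mathcal{O}(n)$ bound on $D(\rho\|\sigma^\Lambda)$ supplying the $\ln(n)$ term).
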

Combining \Cref{corollaryrapidmixing} with the main result in  \cite{CubittLuciaMichalakisPerezGarcia_StabilityRapidMixing_2015}, we can conclude that local observables and correlation functions are stable against local perturbations in the generator of  Davies semigroups corresponding to a one dimensional, short-range, translation-invariant commuting Hamiltonian at any inverse temperature. Moreover, entropic decays of the form of that of \Cref{thm:main} are known to imply a strengthening of Pinsker's inequality \cite{Rouz2019,DePalma2021,de2021quantum}, given the following quantum generalization of the Lipschitz constant corresponding to the Hamming distance: for any $H=H^*\in\cB(\cH_\Lambda)$,
\begin{align*}
    \|H\|_L:=2\max_{i\in[n]}\,\min\Big\{\|H-H^{(i)}\|_\infty|\,H^{(i)}=(H^{(i)})^*\,\text{ does not act on $i$-th qudit} \Big\}\,,
\end{align*}
we define the corresponding quantum Wasserstein distance between two quantum states $\rho,\sigma\in\cD(\cH_\Lambda)$ as
\begin{align}
    W_1(\rho,\sigma)=\max\big\{\tr[(\rho-\sigma)\,H]|\,H=H^*,\,\|H\|_L\le 1\big\}\,.
\end{align}
Next, a state $\sigma\in\cD(\cH_\Lambda)$ satisfies a transportation-cost inequality with constant $C>0$ if for any $\rho\in\cD(\cH_\Lambda)$,
\begin{align}\label{TCineq}
    W_1(\rho,\sigma)\le \sqrt{C\,D(\rho\|\sigma)}\,.
\end{align}
It was shown in \cite[Section 5]{de2021quantum} that fixed points of local quantum Markov semigroups satisfy a transportation-cost inequality with constant $C$ proportional to $n/\alpha$, where $\alpha$ is the MLSI constant of the semigroup. Combining this with \Cref{eq:entdecay}, we get
\begin{prop}
Gibbs states of finite-range, commuting, one dimensional Hamiltonians of bounded interaction strength at the inverse temperature $\beta>0$ satisfy a transportation-cost inequality with constant $C=\mathcal{O}(n\operatorname{polylog}(n))$.
\end{prop}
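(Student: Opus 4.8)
The plan is to combine the entropy-decay estimate of \Cref{thm:main} with the transportation-cost bound of \cite[Section 5]{de2021quantum}, using the Davies semigroup of the Hamiltonian in question as the auxiliary dynamics. First I would recall that, for a finite-range, translation-invariant, commuting Hamiltonian of bounded interaction strength, the Davies generator $\cL_\Lambda^D$ decomposes as a sum of local terms, one attached to each site of $\Lambda$, each acting non-trivially only on a neighbourhood of size $\mathcal{O}(r)$ and with coefficients bounded in terms of $\beta$, $J$, $r$ and $d$ but independent of $n$. Under the standard ergodicity hypothesis on $\cL_\Lambda^D$, its unique stationary state is precisely the Gibbs state $\sigma^\Lambda$ of the statement, so $(e^{t\cL_\Lambda^D})_{t\ge 0}$ is a local quantum Markov semigroup in the sense of \cite{de2021quantum} whose fixed point is exactly the state we wish to control.

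Next I would invoke \Cref{thm:main}: this semigroup satisfies the modified logarithmic Sobolev inequality \eqref{eq:entdecay} with constant $\alpha_n=\Omega(\ln(n)^{-1})$. The result of \cite[Section 5]{de2021quantum} then gives, for the fixed point of any local quantum Markov semigroup with MLSI constant $\alpha$, the transportation-cost inequality \eqref{TCineq} with a constant $C$ proportional to $n/\alpha$, the proportionality factor depending only on the locality parameters of the generator (here $\beta$, $J$, $r$, $d$, all $n$-independent). Substituting $\alpha=\alpha_n=\Omega(\ln(n)^{-1})$ yields $C=\mathcal{O}(n\ln n)=\mathcal{O}(n\operatorname{polylog}(n))$, which is the claim; the dependence of $C$ on $\beta$ is then inherited from that of $\alpha_n$ in \Cref{thm:main}.

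The only point that genuinely requires care — and hence the main obstacle — is verifying that the Davies semigroup really meets the hypotheses of \cite[Section 5]{de2021quantum}: concretely, that it is built from jump operators each supported on a bounded number of sites and of $n$-independent norm, so that the prefactor in the transportation-cost constant does not covertly reintroduce a dependence on the system size, and that this prefactor enters the bound multiplying $n/\alpha$ linearly. For finite-range, translation-invariant Hamiltonians of bounded interaction strength this is straightforward from the explicit form of the Davies generator and the boundedness of the local spectral data entering the Fourier decomposition of $e^{-\beta H_\Lambda}$, so once this bookkeeping is done the bound $C=\mathcal{O}(n\operatorname{polylog}(n))$ follows directly.
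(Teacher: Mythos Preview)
Your proposal is correct and matches the paper's own argument essentially verbatim: the paper derives the proposition in one sentence by combining the MLSI constant $\alpha_n=\Omega(\ln(n)^{-1})$ from \Cref{thm:main} with the fact, recalled just above the statement, that \cite[Section 5]{de2021quantum} gives a transportation-cost constant $C$ proportional to $n/\alpha$ for fixed points of local quantum Markov semigroups. Your additional care about checking that the Davies generator satisfies the locality hypotheses of \cite{de2021quantum} with $n$-independent parameters is a sensible elaboration that the paper leaves implicit.
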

Transportation-cost inequalities in the form of \eqref{TCineq} were recently used to produce sample efficient Gibbs state learning algorithms in \cite{rouze2021learning}. They also imply Gaussian concentration inequalities for the statistics of quasi-local observables in the state $\sigma$ and can be used to prove an exponential improvement over the weak
Eigenstate Thermalization Hypothesis \cite{DePalma2021}.

Finally, we remark that Theorem \ref{thm:main} has implications in the context of Symmetry Protected Topological (SPT) phases \cite{Chen2011},  i.e. in systems protected against perturbations that keep the symmetry of the system. There has been a recent increase on the interest of SPT in open quantum systems \cite{coser2019classification, mcginley2019interacting,mcginley2020fragility}, partially motivated by their connection with Majorana modes and their potential use in quantum information theory. In the weak coupling limit between the system and the reservoir, it is shown in \cite{mcginley2020fragility} that, for unitary symmetries, the decoherence time scales exponentially with the inverse temperature $\beta$, even in the case in which all interactions involved in the problem preserve the symmetry that gives the topological protection to the (closed) system. Therefore, the decoherence time increases exponentially with the decrease of the temperature. However, there is in principle no analysis on the dependence of the decoherence time on the system size.

The particular case of the 1D cluster state \cite{Briegel2001}  has a commuting Hamiltonian and it is a non-trivial SPT phase under a $\mathbb{Z}_2\times \mathbb{Z}_2$ symmetry \cite{Son2011}. Thus, as a consequence of Theorem \ref{thm:main} for this state we obtain an example of a non-trivial SPT phase with decoherence time scaling only logarithmically on the size of the system in the presence of thermal noise, under the assumption that all relevant interactions preserve the symmetry. A previous result in this direction was already presented in \cite{coser2019classification} for on-site depolarizing noise, for which a similar behavior was shown. For further information on this, see our companion paper \cite{bardet2021MLSIDavies1Dshort}.

\subsection{Outline}
In Section \ref{sec:preliminaries}, we introduce some notations and preliminary notions which are necessary for the rest of the paper. The main result is presented in Section \ref{sec:main_result}, where we also include a complete scheme of its proof, leaving the specifics of the most technical parts
 to Sections \ref{sec:mixingcondition} and \ref{sec:localcontrol} for sake of clarity.

\section{Preliminary facts and definitions}\label{sec:preliminaries}

\subsection{Basic notations}\label{subsec:notation}

Let $\cH$ be a finite dimensional Hilbert space and $\cB(\cH)$ (resp. $\cT_1(\cH)$) be the bounded operators (resp. trace class operators) on $\cH$. We denote $\tr$ for the standard matrix trace, $\langle \cdot,\cdot \rangle_{\operatorname{HS}}$ for the trace inner product and $\|.\|_{2}$ for the Hilbert-Schmidt norm. The corresponding Hilbert-Schmidt space is denoted by $\cT_2(\cH)$. Operators will be sometimes denoted by capital letters, and sometimes by lowercase letters, e.g. in order to emphasize their belonging to a subalgebra. We write $A^\dagger$ for the adjoint of an operator $A\in \cB(\cH)$, and  $\Phi^*$ (or $\Phi_*$) for the adjoint (or preadjoint) of a map $\Phi:\cB(\cH)\to \cB(\cH)$. The identity operator on $\cH$ is denoted as $\Id_{\cH}$ and the identity map on a von Neumann subalgebra $\cM\subseteq \cB(\cH)$ is $\id_{\cM}$. We also denote the dimension of $\cH$ by $d_\cH=\text{dim}(\cH)$. Given two maps $\Phi,\Psi:\cM\to \cM$ on a von Neumann subalgebra $\cM\subseteq \cB(\cH)$, we say that $\Phi\le_{\operatorname{cp}} \Psi$ if $\Psi-\Phi$ is completely positive.

We say that an operator $\rho$ is a state (or density operator) if $\rho\ge 0$ and $\tr(\rho)=1$. We denote by $\cD(\cH)$ the set of states on $\cH$. Given a system $A$, we write $\rho_A$ to emphasize that $\rho_A \in \mathcal{D}(\mathcal{H}_A)$, though we will drop this subindex whenever it is clear from the context. With a slight abuse of notation, we write $\Psi(\rho):=(\Psi\otimes  \id) (\rho)$ for a bipartite state $\rho\in\cD(\cH\otimes \mathbb{C}^n)$ and a quantum channel $\Psi:\cM_*\to\cM_*$. For two states $\rho$ and $\si$, their relative entropy is defined as \begin{align*}D(\rho\|\si)=\begin{cases}
                             \tr(\rho\ln\rho-\rho\ln\si), & \mbox{if } \supp(\rho)\subseteq \supp(\si) \\
                             +\infty, & \mbox{otherwise},
                           \end{cases}
\end{align*}
where $\supp(\rho)$ (resp. $\supp(\si)$) is the support projection of $\rho$ (resp. $\si$). Let $\cN\subseteq \cM\subseteq  \cB(\cH)$ be two von Neumann subalgebras. Recall that a conditional expectation onto $\cN$ is a completely positive unital map $E_\cN:\cM\to \cN$ satisfying \begin{enumerate}
\item[i)]for all $X\in \cN$, $E_\cN(X)=X$
\item[ii)]for all  $a,b\in\cN,X\in \cB(\cH)$, $E_\cN(aXb)=aE_\cN(X)b$.
 \end{enumerate}
We denote by $E_{\cN*}$ its adjoint map with respect to the trace inner product, i.e.
\[\tr(E_{\cN*}(X)Y)=\tr(XE_\cN(Y))\,. \]
A quantum Markov semigroup (QMS) $(\mathcal{P}_t)_{t\ge 0}$ is a semigroup of completely positive, unital maps over the bounded operators $\cB(\cH)$ of a finite dimensional Hilbert space $\cH$. By the semigroup property, the QMS is generated by a map $\cL:\cB(\cH)\to\cB(\cH)$ that is referred to as a Lindbladian. The QMS is said to be GNS-symmetric with respect to a state $\sigma\in\cD(\cH)$ if for any two $X,Y\in\cB(\cH)$,
\begin{align}
    \tr[\sigma\,\cL(X)Y]=\tr[\sigma X\cL(Y)]\,.
\end{align}
Whenever the state $\sigma$ is full-rank, there exists a conditional expectation $E$ onto the kernel of $\cL$ such that for any $X\in\cB(\cH)$ \cite{frigerio1982long}
\begin{align}
    e^{t\mathcal{L}}(X)\underset{t\to\infty}{\longrightarrow}E[X]\,.
\end{align}

\subsection{Hamiltonians and Davies local generators}\label{subsec:Davies}

In this paper, we consider a spin chain $\Lambda:=\llbracket1,n\rrbracket$ made of $n$ sites, where at each site $k\in \Lambda$ lies a qudit system $\cH_k\equiv \mathbb{C}^d$, and denote by $\cH_\Lambda:=\bigotimes_{k\in\Lambda} \cH_k$ the Hilbert space of the whole system. At thermal equilibrium, the system is in a Gibbs state $\sigma^\Lambda:=e^{-\beta H_\Lambda}/\tr{e^{-\beta H_\Lambda}}$ at inverse temperature $\beta>0$, for some fixed Hamiltonian $H_\Lambda$. Here, we further assume $H_\Lambda$ to be finite-range and denote by $r$ its interaction range. This means that there exist self-adjoint operators $h_A$ supported on regions $A\subset\Lambda$ such that $H_\Lambda=\sum_{A\subset\Lambda}h_A\otimes\Id_{A^c}$ and $h_A=0$ whenever $\operatorname{diam}(A)>r$, where $\operatorname{diam}(A)$ denotes the diameter of $A$. The interaction strength of $H_\Lambda$ is defined as $J:=\max_{A\subset \Lambda}\|h_A\|_\infty$. In this paper, we assume that neither $r$, nor $J$ depend on the system size $\Lambda$.

The thermalization of the system due to its interaction with a heat-bath modeled by the Hamiltonian $H^{\operatorname{HB}}$ can be described as follows: assume a set of chain-bath interaction operators $\{ S_{\alpha,k}\otimes B_{\alpha,k} \}$, where $\alpha$ labels operators $S_{\alpha,k}$ and $B_{\alpha,k}$ associated to the site $k\in\Lambda$. For sake of simplicity, we assume that the operators $S_{\alpha,k}$ form an orthonormal basis of self-adjoint operators in $\cH_\Lambda$ with respect to the Hilbert-Schmidt inner product (e.g.~qubit Pauli matrices). The Hamiltonian of the universe composed of the spin chain and its heat-bath is thus given by
 \begin{align}
 H=H_\Lambda+H^{\operatorname{HB}}+\sum_{\alpha,k\in\Lambda}S_{\alpha,k}\otimes B_{\alpha,k}\,.
 \end{align}
 Assuming that the bath is in a Gibbs state, by a standard argument (e.g. weak coupling limit, see \cite{SL78}), the evolution on the system can be approximated by a quantum Markov semigroup whose generator is of the form
 \begin{align}\label{totalgenerator}
 {\cL}^D_{\Lambda*}(\rho):=-i[H_\Lambda,\rho]+\sum_{k\in\Lambda}\,\cL^{D}_{k*}(\rho)\,,
 \end{align}
for some local generators $\cL^{{D}}_{k*}$ in GKLS form
 \begin{align}\label{lindblad}
 &\cL^{{D}}_{k*}(\rho)=\sum_{\omega,\alpha}\,\chi^{\beta,\omega}_{\alpha,k}\,\Big(  S_{\alpha,k}^{\omega}\rho S_{\alpha,k}^{\omega,\dagger}-\frac{1}{2}\,\big\{ \rho, S_{\alpha,k}^{\omega,\dagger}S_{\alpha,k}^{\omega} \big\}   \Big)\,.
\end{align}
The sum in \eqref{lindblad} ranges over the index $\alpha$ of the local basis $\{S_{\alpha,k}\}$ as well as the Bohr frequencies $\omega$ of the Hamiltonian $H_\Lambda$. Similarly, we denote the generator $\cL^D_{A*}$ by restricting the sum in \eqref{totalgenerator} to the subregion $A$ in the interior of $ \Lambda$. Note that $\mathcal{L}^{{D}}_{A*}$ acts non-trivially on $A\partial:=\{k\in\Lambda:\,\operatorname{dist}(k,A)\le r\}$. Above, the Fourier coefficients of the two-point correlation functions of the environment $\chi_{\alpha,k}^{\beta,\omega}$, which we assume to be uniformly upper and lower bounded along the spin chain, $0<\chi_{\min}^\beta\le \chi_{\alpha,k}^{\beta,-\omega}\le \chi_{\max}^\beta$, satisfy the KMS condition $\chi_{\alpha,k}^{\beta,-\omega}=e^{-\beta\omega}\,\chi_{\alpha,k}^{\beta,\omega}$. The operators $S_{\alpha,k}^{\omega}$ are the Fourier coefficients of the system couplings $S_{\alpha,k}$, which means that they satisfy the following defining equation for any $t\in\mathbb{R}$:
 \begin{align}\label{eq!}
 e^{-itH_\Lambda}\,S_{\alpha,k}e^{it H_\Lambda}=\sum_\omega e^{it\omega}S_{\alpha,k}^{\omega}\,.
 \end{align}

\subsection{Modified logarithmic Sobolev inequality}\label{subsec:functineq}



In this section, we consider the generator $\cL$ of a quantum Markov semigroup $(\mathcal{P}_t)_{t\ge 0}$ over the finite dimensional algebra $\cB(\cH)$, which we assume GNS-symmetric with respect to a full-rank invariant state $\sigma$, and we denote by $E:\cB(\cH)\to \operatorname{Ker}(\cL)$ the conditional expectation onto the kernel of $\cL$ such that $\mathcal{P}_t\to E$ as $t\to\infty$. The \textit{entropy production} of $\cL$ is defined for any other state $\rho\in\cD(\cH)$ by
\begin{align*}
\operatorname{EP}_{\cL}(\rho):=-\left.\frac{dD(\operatorname{e}^{t\cL_*}(\rho)\|E_*(\rho))}{dt}\right|_{t=0}\,.
\end{align*}
The entropy production is always non-negative, by the monotonicity of the relative entropy under quantum channels. Moreover, it satisfies the following useful property:
\begin{lem}\label{EPexpress}
For any other full-rank invariant state $\omega$:
	\begin{align*}
	\operatorname{EP}_{\cL}(\rho)=-\tr\big[\cL_*(\rho)\big( \ln(\rho)-\ln(\omega)  \big)\big]\,.
	\end{align*}
\end{lem}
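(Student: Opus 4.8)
The plan is to compute the time-derivative at $t=0$ directly and then use the fact that $E_*$ preserves $\omega$. First I would write $\rho_t := e^{t\cL_*}(\rho)$ and observe that $E_*(\rho_t) = E_*(\rho) =: \bar\rho$ is constant in $t$, since $E$ is the conditional expectation onto $\operatorname{Ker}(\cL)$ and hence $E_* \cL_* = 0$; this is the key structural fact that collapses the naive two-term derivative into one. Consequently
\begin{align*}
\operatorname{EP}_{\cL}(\rho) = -\left.\frac{d}{dt}\right|_{t=0}\!\big[\tr(\rho_t\ln\rho_t) - \tr(\rho_t\ln\bar\rho)\big]\,.
\end{align*}
For the first term, the derivative of $\tr(\rho_t\ln\rho_t)$ is $\tr(\dot\rho_t\ln\rho_t) + \tr(\dot\rho_t)$, and the second summand vanishes because $\cL_*$ is trace-preserving (equivalently $\cL$ is unital), so $\tr(\dot\rho_t) = \tr(\cL_*(\rho_t)) = 0$. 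The second term contributes $-\tr(\cL_*(\rho)\ln\bar\rho)$. At $t=0$ this yields $\operatorname{EP}_{\cL}(\rho) = -\tr\big[\cL_*(\rho)\big(\ln\rho - \ln\bar\rho\big)\big]$, i.e.\ the claimed identity but with $\bar\rho = E_*(\rho)$ in place of the arbitrary invariant state $\omega$.

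It then remains to replace $E_*(\rho)$ by $\omega$, i.e.\ to show $\tr[\cL_*(\rho)\ln E_*(\rho)] = \tr[\cL_*(\rho)\ln\omega]$ for every full-rank invariant state $\omega$. The cleanest route is duality: $\tr[\cL_*(\rho)\,A] = \tr[\rho\,\cL(A)]$, so it suffices to check $\cL\big(\ln E_*(\rho) - \ln\omega\big)$ pairs to zero against $\rho$ — better, to show $\ln E_*(\rho) - \ln\omega \in \operatorname{Ker}(\cL)$, or more precisely that $\cL$ annihilates it. Here one uses the structure of GNS-symmetric QMS fixed-point algebras: $\operatorname{Ker}(\cL) = \Ncal$ is a von Neumann subalgebra, $E$ is the (unique) $\sigma$-preserving conditional expectation onto $\Ncal$, and by the decomposition of $\Ncal$ into matrix blocks every invariant state has the same "off-block-diagonal" and intra-factor structure dictated by $\sigma$; in particular $\ln\omega$ and $\ln E_*(\rho)$ differ by an element affiliated to $\Ncal$ plus a term on which $\cL$ and its modular data act trivially. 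Concretely, since $E_*(\rho)$ and $\omega$ are both densities whose relevant part lies in $\Ncal$ and is governed by the same conditional expectation, $\ln E_*(\rho) - \ln\omega$ is supported in $\Ncal$; and for $Y\in\Ncal = \operatorname{Ker}(\cL)$ one has $\tr[\cL_*(\rho)Y] = \tr[\rho\,\cL(Y)] = 0$.

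The main obstacle is this last algebraic step — justifying that $\ln E_*(\rho)$ and $\ln\omega$ really differ by something in $\operatorname{Ker}(\cL)$. This is not literally true term-by-term, because $E_*(\rho)$ need not be a full-rank invariant state (its support can be a proper subprojection), so one must be a little careful: the honest statement is that for \emph{any} invariant full-rank $\omega$, $\operatorname{EP}_{\cL}(\rho)$ computed with $\omega$ equals $\operatorname{EP}_{\cL}(\rho)$ computed with $E_*(\rho)$, which follows from the chain rule for relative entropy $D(\rho_t\|\omega) = D(\rho_t\|\bar\rho_t) + (\text{terms linear in }\rho_t)$ together with invariance. So an alternative and perhaps safer implementation is to avoid $E_*(\rho)$ entirely: directly differentiate $D(e^{t\cL_*}(\rho)\|\omega) = \tr(\rho_t\ln\rho_t) - \tr(\rho_t\ln\omega)$, note this differs from $D(\rho_t\|E_*(\rho))$ by $t$-independent quantities because $E_*(\rho_t)$ is constant and the relative-entropy "difference" $D(\rho_t\|\omega)-D(\rho_t\|E_*(\rho_t))$ is itself independent of $t$ (a standard consequence of $\cL$-invariance of $\omega$ and the conditional-expectation property), and read off the formula. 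I would present the proof via this second route, invoking only trace-preservation of $\cL_*$, the duality $\tr[\cL_*(\rho)A]=\tr[\rho\cL(A)]$, and $\cL|_{\operatorname{Ker}(\cL)}=0$.
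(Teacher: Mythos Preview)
Your proposal is correct and follows essentially the same route as the paper. The paper's proof is much terser: it takes the formula $\operatorname{EP}_{\cL}(\rho)=-\tr[\cL_*(\rho)(\ln\rho-\ln E_*(\rho))]$ for granted and simply asserts (citing \cite[Eq.~(2.10)]{bardet2018hypercontractivity} for the structure of invariant states) that $\ln(\omega)-\ln(E_*(\rho))$ lies in the fixed-point algebra $\mathcal{F}(\cL)$, whence $\tr[\cL_*(\rho)(\ln E_*(\rho)-\ln\omega)]=\tr[\rho\,\cL(\ln E_*(\rho)-\ln\omega)]=0$ by duality---exactly the mechanism you identified. Your worry about $E_*(\rho)$ failing to be full-rank is legitimate in principle but is handled in the paper by appeal to the explicit block decomposition of invariant states; in that decomposition the $\tau_i$ factors are fixed by the conditional expectation and full-rank (since $\sigma$ is), so the difference of logarithms is indeed an element of $\cN=\operatorname{Ker}(\cL)$, and your ``alternative route'' is not needed.
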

\begin{proof}
	This simply follows from the fact that the difference of the logarithms of the two invariant states $\ln(\omega)-\ln(E_*[\rho])$ belongs to the fixed point algebra $\mathcal{F}(\cL)$ (see for instance the structure of invariant states in Equation (2.10) of \cite{bardet2018hypercontractivity}). Therefore
	\begin{align*}
	\tr\big[\cL_*(\rho)\big(\ln(E[\rho])-\ln\omega\big)\big]=\tr\big[\rho\cL\big(\ln(E[\rho])-\ln\omega\big)\big]=0\,.
	\end{align*}
\end{proof}

\begin{defi}[\cite{KastoryanoTemme-LogSobolevInequalities-2013,CarboneMartinelli_LogSobolev_2015, Bardet-NonCommFunctInequalities-2017,gao2020fisher}]
	The quantum Markov semigroup $(\mathcal{P}_t)_{t\ge 0}$ is said to satisfy a (non-primitive) \emph{modified logarithmic Sobolev inequality} $\operatorname{(MLSI)}$ if there exists a constant $\alpha>0$ such that, for all $\rho\in\cD(\cH)$:
	\begin{align}\tag{MLSI}\label{MLSI}
	4\alpha\, D(\rho\|E_*(\rho))\le \operatorname{EP}_\cL(\rho)\,.
	\end{align}
	The best constant satisfying \eqref{MLSI} is called the \emph{modified logarithmic Sobolev constant} and denoted by $\alpha(\cL)$. Moreover, the semigroup satisfies a \emph{complete modified logarithmic Sobolev inequality} $\operatorname{(CMLSI)}$ if, for any reference system $\cH_R$, the semigroup $(\operatorname{e}^{t\cL}\otimes \id_R)_{t\ge 0}$ satisfies a modified logarithmic Sobolev inequality with a constant $\alpha$ independent of $R$. In this case, the best constant satisfying $\operatorname{CMLSI}$ is called the \emph{complete modified logarithmic Sobolev constant} and is denoted by $\alpha_{\operatorname{c}}(\cL)$.
\end{defi}
	 The reason for the introduction of the complete modified logarithmic Sobolev constant is due to its tensorization property:
	 \begin{lem}[\cite{brannan2021complete2}]
	     Let $\cL$ and $\cK$ be two generators of $\operatorname{KMS}$-symmetric quantum Markov semigroups, and denote by $E_\cL$, resp. by $E_\cK$, their corresponding conditional expectations. Moreover, assume that $[\cL,\cK]=0$. Then
	     \begin{align*}
	        \alpha_{\operatorname{c}}(\cL+\cK)\ge \min\{ \alpha_{\operatorname{c}}(\cL),\,\alpha_{\operatorname{c}}(\cK)\}\,.
	     \end{align*}
	 \end{lem}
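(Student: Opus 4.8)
The plan is to prove the statement by appealing directly to the definition of the complete modified logarithmic Sobolev constant and exploiting the commutation hypothesis $[\cL,\cK]=0$. First I would recall that since $\cL$ and $\cK$ commute, the semigroups $(e^{t\cL})_{t\ge 0}$ and $(e^{t\cK})_{t\ge 0}$ commute as well, and hence so do their limiting conditional expectations $E_\cL$ and $E_\cK$; moreover the conditional expectation onto the kernel of $\cL+\cK$ is exactly the product $E_{\cL+\cK}=E_\cL\circ E_\cK=E_\cK\circ E_\cL$. The key identity to establish is the additivity of entropy production along the combined semigroup relative to the product conditional expectation: for every state $\rho$,
\begin{align*}
\operatorname{EP}_{\cL+\cK}(\rho)=\operatorname{EP}_{\cL}(\rho)+\operatorname{EP}_{\cK}(\rho)\,,
\end{align*}
which follows from \Cref{EPexpress} (taking $\omega=\sigma_{\cL+\cK}$, the common invariant state, and using linearity of $(\cL+\cK)_*$ together with the fact that the log-difference of the two invariant states lies in the relevant fixed-point algebras so the cross terms vanish exactly as in the proof of that lemma).

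Next I would decompose the relative entropy to the product conditional expectation by a chain-rule/telescoping argument: write
\begin{align*}
D(\rho\|E_{\cL+\cK *}(\rho))=D(\rho\|E_{\cL *}(\rho))+D\big(E_{\cL *}(\rho)\,\big\|\,E_{\cK *}E_{\cL *}(\rho)\big)\,,
\end{align*}
using that $E_{\cL+\cK *}=E_{\cK *}E_{\cL *}$ and the standard additivity of relative entropy along a composition of conditional expectations (or equivalently, that $E_{\cL *}(\rho)$ already has the $\cL$-invariant structure so $D(E_{\cL*}(\rho)\|E_{\cL+\cK*}(\rho))=D(E_{\cL*}(\rho)\|E_{\cK*}E_{\cL*}(\rho))$). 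Applying the $\cL$-MLSI to the first term and the $\cK$-MLSI to the second term — the latter evaluated at the state $E_{\cL *}(\rho)$, which is legitimate since MLSI holds for all states — gives
\begin{align*}
4\min\{\alpha_{\operatorname{c}}(\cL),\alpha_{\operatorname{c}}(\cK)\}\,D(\rho\|E_{\cL+\cK*}(\rho))\le \operatorname{EP}_{\cL}(\rho)+\operatorname{EP}_{\cK}\big(E_{\cL*}(\rho)\big)\le \operatorname{EP}_{\cL}(\rho)+\operatorname{EP}_{\cK}(\rho)=\operatorname{EP}_{\cL+\cK}(\rho)\,,
\end{align*}
where the middle inequality uses the data-processing-type bound $\operatorname{EP}_{\cK}(E_{\cL*}(\rho))\le \operatorname{EP}_{\cK}(\rho)$ (entropy production decreases under a channel that commutes with the semigroup — here $E_{\cL*}$ commutes with $e^{t\cK}$). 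This yields the MLSI bound; to upgrade it to the \emph{complete} version one tensors everything with an arbitrary ancilla $\cH_R$, noting that $[\cL\otimes\id_R,\,\cK\otimes\id_R]=0$ and that the constants $\alpha_{\operatorname{c}}(\cL),\alpha_{\operatorname{c}}(\cK)$ are by definition stable under this ampliation, so the identical argument run on the ampliated semigroups gives the claimed lower bound on $\alpha_{\operatorname{c}}(\cL+\cK)$.

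The main obstacle I anticipate is justifying the two structural facts cleanly: (i) that $E_{\cL+\cK}=E_\cL\circ E_\cK$ when $[\cL,\cK]=0$ — this requires identifying $\operatorname{Ker}(\cL+\cK)$ with the intersection of the fixed-point algebras and checking the composition is a conditional expectation onto it, which is standard but needs the commutation in an essential way — and (ii) the exact additivity of entropy production together with the relative-entropy chain rule through $E_{\cL*}(\rho)$, which is where one must be careful that the "cross" contributions genuinely vanish rather than merely being controlled; this is precisely the point where the argument of \Cref{EPexpress} (log-difference of invariant states lying in the fixed-point algebra) is invoked a second time. Everything else is a routine combination of MLSI applied at two different states and monotonicity of entropy production under the commuting conditional expectation.
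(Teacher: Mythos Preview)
The paper does not prove this lemma; it is quoted from \cite{brannan2021complete2}, so there is no in-paper proof to compare against. I will therefore assess your argument on its own merits.

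Your overall strategy is sound and essentially the standard one: additivity of the entropy production, the chain rule for relative entropy through the composed conditional expectation $E_{\cL+\cK*}=E_{\cK*}E_{\cL*}$, and application of the two MLSI's. The structural facts you flag as obstacles (that $E_{\cL+\cK}=E_\cL\circ E_\cK$ under commutation, and the exact additivity of EP via \Cref{EPexpress}) are both correct and routine.

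The genuine gap is the middle inequality
\[
\operatorname{EP}_{\cK}\big(E_{\cL*}(\rho)\big)\le \operatorname{EP}_{\cK}(\rho)\,,
\]
which you justify only by the phrase ``entropy production decreases under a channel that commutes with the semigroup.'' This is a data-processing statement for the \emph{entropy production} (i.e.\ a Fisher-information DPI), not for the relative entropy itself, and it does not follow from ordinary DPI by differentiating: knowing $D(E_{\cL*}e^{t\cK_*}\rho\|E_{\cL*}E_{\cK*}\rho)\le D(e^{t\cK_*}\rho\|E_{\cK*}\rho)$ for all $t$ says nothing about the ordering of the derivatives at $t=0$. The inequality is in fact true for GNS-symmetric generators, but its proof requires a separate argument (e.g.\ an integral/operator-convexity representation of EP), so as written this step is unjustified.

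The fix is painless and avoids the issue entirely: apply the $\cK$-CMLSI at the state $\rho$ rather than at $E_{\cL*}(\rho)$, obtaining $4\alpha_{\operatorname{c}}(\cK)\,D(\rho\|E_{\cK*}(\rho))\le \operatorname{EP}_{\cK}(\rho)$, and then use ordinary relative-entropy DPI under the channel $E_{\cL*}$ (which commutes with $E_{\cK*}$) to get
\[
D\big(E_{\cL*}(\rho)\,\big\|\,E_{\cK*}E_{\cL*}(\rho)\big)=D\big(E_{\cL*}(\rho)\,\big\|\,E_{\cL*}E_{\cK*}(\rho)\big)\le D(\rho\|E_{\cK*}(\rho))\,.
\]
Combined with your chain rule and the $\cL$-CMLSI, this yields the desired bound without ever invoking monotonicity of the entropy production. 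The ampliation step to pass to the complete constant is fine as you wrote it.
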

The following result was recently proved in \cite{gao2021spectral} (see also \cite{gao2021geometric} in the tracial setting):
\begin{thm}[\cite{gao2021spectral}]\label{CMLSItheorem}
	For any $\operatorname{GNS}$-symmetric $\operatorname{QMS}$ $(e^{t\cL})_{t\ge 0}$ over the algebra $\cB(\cH)$ of linear operators on a finite dimensional Hilbert space $\cH$, $\alpha_{\operatorname{c}}(\cL)>0$.
	\end{thm}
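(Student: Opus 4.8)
The plan is to prove $\alpha_{\operatorname{c}}(\cL)>0$ for every GNS-symmetric QMS on $\cB(\cH)$ with $\cH$ finite dimensional, i.e. to establish a \emph{complete} modified logarithmic Sobolev inequality with a dimension-of-the-reference-independent constant. The key conceptual point is that CMLSI, unlike MLSI, is stable under tensoring with a reference system, so the natural strategy is to first reduce to a tractable canonical form of the generator and then to compare the entropy production and the relative entropy in a way that automatically survives tensorization.

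First I would recall the structure theory of GNS-symmetric Lindbladians: after splitting off the modular/Hamiltonian part (which does not contribute to entropy production and whose conditional expectation agrees with that of the dissipative part), one can assume $\cL$ is purely dissipative and GNS-symmetric with respect to the full-rank state $\sigma$. The fixed-point algebra $\cN=\operatorname{Ker}(\cL)$ carries the conditional expectation $E=E_\cN$, and by Takesaki's theorem $E$ is $\sigma$-preserving. The heart of the argument, following \cite{gao2021spectral}, is to relate the MLSI constant to the \emph{complete} spectral gap (i.e. the gap of $\cL\otimes\id_R$, or equivalently a quantity like $\lambda_{\mathrm{c}}(\cL)$ defined via an operator-space / amalgamated-$L_2$ norm) together with a finite quantity measuring how far $\sigma$ is from the trace on the center, typically $\ln\|E_*\|_{\text{cb},\,\cN\to L_1(\sigma)}$ or the "return time" $t_{\mathrm{CB}}$ after which $\|\mathcal{P}_t - E\|_{\mathrm{cb},\,L_1\to L_1}\le 1/e$. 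Concretely, one proves an inequality of the shape $\alpha_{\mathrm{c}}(\cL)\ge \dfrac{\lambda_{\mathrm{c}}(\cL)}{C(\sigma)}$ where $C(\sigma)$ is a finite constant depending only on $\sigma$ and $\cN$ (and not on $R$), and then observes that $\lambda_{\mathrm{c}}(\cL)>0$ always holds in finite dimension because it is bounded below by the ordinary spectral gap of $\cL$ restricted to the orthogonal complement of $\cN$ in a suitable GNS inner product — the gap is strictly positive since $\cL$ is self-adjoint on that finite-dimensional space with trivial kernel there.

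The mechanism for converting a complete spectral gap into a complete MLSI is an interpolation/approximate-tensorization argument: one uses that $t\mapsto D(\mathcal{P}_{t*}\rho\|E_*\rho)$ decays, controls its derivative at large times by the gap (giving exponential decay for $t\ge t_0$), and then uses a short-time estimate — essentially that relative entropy is comparable to a $\chi^2$-type divergence after an initial ``regularization'' time $t_0$, via a comparison $D(\rho\|E_*\rho)\le c\, \| \mathcal{P}_{t_0*}\rho - E_*\rho\|_{L_2(\sigma)}^2 + (\text{small})$, or more robustly the reverse-hypercontractivity-free argument of \cite{gao2021spectral} bounding $D(\mathcal{P}_{t_0*}\rho\|E_*\rho)$ by the $2$-norm and then running the spectral-gap decay. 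Every estimate in this chain is phrased with completely bounded norms (amalgamated over $\cN$), so appending $\id_R$ changes nothing, which is exactly why the constant is $R$-independent. I would present this as: (i) reduce to dissipative GNS-symmetric $\cL$; (ii) show $\lambda_{\mathrm{c}}(\cL)>0$ from finite-dimensionality; (iii) quote/prove the finiteness of the relevant $\sigma$-dependent comparison constant $C(\sigma)<\infty$; (iv) chain them to get $\alpha_{\mathrm{c}}(\cL)\ge \lambda_{\mathrm{c}}(\cL)/C(\sigma)>0$.

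The main obstacle — and the genuinely hard part, which is why this is cited as an external theorem rather than proved inline — is step (iii) together with the short-time comparison in the mechanism above: controlling relative entropy by an $L_2$-type quantity \emph{completely}, i.e. uniformly over reference systems, cannot be done by naive operator-norm bounds (those would lose a factor of $d_\cH$), and instead requires the machinery of asymmetric/amalgamated $L_p$-spaces and the fact that $E_*$ is completely bounded from $\cN$ into $L_1(\sigma)$ with a dimension-free-in-$R$ constant. Establishing this completely bounded comparison — equivalently, that the Pimsner–Popa type index of $(\cN\subset\cB(\cH),\sigma)$ is finite in the CB sense — is the technical crux; once it is in hand, the rest is a soft interpolation argument. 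In the write-up I would therefore state explicitly that the finite-dimensional full-rank hypothesis is used precisely to guarantee both the strict positivity of the complete gap and the finiteness of this CB-comparison constant, and refer to \cite{gao2021spectral,junge2010mixed} for the operator-space input.
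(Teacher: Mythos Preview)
The paper does not give its own proof of this statement: it is stated as a citation of \cite{gao2021spectral} and invoked as a black box (specifically to obtain the one-site constant $\alpha_0>0$ in \eqref{eq:onesiteMLSI}). So there is nothing to compare your argument against inside this paper.

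That said, your sketch is a faithful outline of the strategy of \cite{gao2021spectral}: reduce to a GNS-symmetric dissipative generator, use the (complete) spectral gap, and convert gap into CMLSI via a completely bounded $L_2$-to-relative-entropy comparison governed by a finite CB-index quantity for the inclusion $\cN\subset\cB(\cH)$ with respect to $\sigma$; you also correctly flag that the nontrivial input is the operator-space/amalgamated $L_p$ machinery ensuring the comparison constant is independent of the reference system $R$. Since the present paper only needs the conclusion $\alpha_{\mathrm{c}}(\cL)>0$ and already develops the relevant amalgamated $L_p$ tools in \Cref{sec:amalgamatedLp} for a different purpose, your write-up could simply point the reader to \cite{gao2021spectral} for the proof and to \Cref{sec:amalgamatedLp} for the flavor of the CB-index input, rather than reproducing the argument.
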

By Gr\"{o}nwall's inequality, the (complete) modified logarithmic Sobolev inequality is directly related to the exponential convergence of the evolution towards its equilibrium, as measured in relative entropy:
\begin{align*}
	D\big(\operatorname{e}^{t\cL_*}(\rho)\big\|E_*(\rho) \big)\le \operatorname{e}^{-4\alpha (\mathcal{L}) t}\,D(\rho\|E_*(\rho))\,.
	\end{align*}

 In this paper, we consider the MLSI constants $\alpha (\mathcal{L}^D_{\llbracket1,n\rrbracket})$ of the family $\{\cL^D_{\llbracket1,n\rrbracket}\}_{n\in\mathbb{N}}$ of Davies Lindbladians defined in \Cref{subsec:Davies}.


\subsection{Operator space theory}\label{sec:opspace}
In this section, we briefly recall some basic definitions from operator space theory, and refer the interested reader to the standard books \cite{effros2000operator,Pisier03} for further details. Given a complex vector space $V$ and an integer $n$, we denote by $\mathbb{M}_n(V)$ the space of $V$-valued $n\times n$ matrices. We also denote by $\mathbb{M}_{m,n}$ the space of $m\times n$ complex valued matrices, and write $\mathbb{M}_{n,n}\equiv \mathbb{M}_n$.
An (abstract) operator space $V$ is a complex vector space equipped with a sequence of norms on the spaces $\mathbb{M}_n(V)$, which satisfy Ruan's axioms:
\begin{itemize}
\item[(i)] For all integers $m,n$ and two elements $v\in \mathbb{M}_m(V),w\in\mathbb{M}_n(V)$,
\begin{align}
    \|v\oplus w\|_{\mathbb{M}_{m+n}(V)}=\max\{\|v\|_{\mathbb{M}_m(V)},\|w\|_{\mathbb{M}_n}(V)\}\,;
\end{align}
\item[(ii)] For all integers $m,n$ and $v\in \mathbb{M}_m(V)$, $A\in\mathbb{M}_{n,m},B\in \mathbb{M}_{m,n}$,
\begin{align}
\|A\cdot v\cdot B\|_{\mathbb{M}_{n}(V)}\le \|A\|_{\mathbb{M}_{m,n}}\,\|v\|_{\mathbb{M}_{m}(V)}\,\|B\|_{\mathbb{M}_{m,n}}\,,
\end{align}
\end{itemize}
where $v\oplus w$ denote the $(m+n) \times (m+n)$ diagonal matrix
$\left[\begin{smallmatrix} v& 0\\ 0&w\end{smallmatrix}\right]$,
the product $A\cdot v \cdot B$ is the multiplication on matrix coefficients and $\|.\|_{\mathbb{M}_{m,n}}$ denotes the usual matrix norm. With this definition, $\mathbb{M}_n(V)$ is again an operator space with the identification $\mathbb{M}_m(\mathbb{M}_n(V))\cong \mathbb{M}_{mn}(V)$ in the canonical way.

Given two operator spaces $V,W$ and a linear mapping $\varphi:V\to W$, we denote for each $n\in\mathbb{N}$ the map $\varphi_n\equiv \id_n\otimes \varphi:\mathbb{M}_n(V)\to \mathbb{M}_n(W)$ by simply mapping each matrix component of $v\in \mathbb{M}_n(V)$ to its image under the map $\varphi$, i.e. $[\varphi_n(v)]_{i,j}:=[\varphi(v_{ij})]_{i,j}$. The \textit{completely bounded norm} (in short $\operatorname{CB}$-norm) of $\varphi$ is defined as
\begin{align*}
    \|\varphi\|_{\operatorname{cb}}:=\sup_{n\in \mathbb{N}}\,\|\id_n\otimes \varphi: \mathbb{M}_n(V)\to \mathbb{M}_n(W)\|\,,
\end{align*}
given the above supremum is finite.
Here, for each $n$, $\|\id_n\otimes \varphi:\mathbb{M}_n(V)\to \mathbb{M}_n(W)\|$ is taken with respect to the norm structures of $\mathbb{M}_n(V)$ and $\mathbb{M}_n(W)$. A map $\varphi$ is said to be \textit{completely isometric} if for all $n\in\mathbb{N}$, $\id_n\otimes \varphi:\mathbb{M}_n(V)\to\mathbb{M}_n(W)$ is isometric. Two operator spaces $V$ and $W$ are said to be completely isometric, which we denote by $V\cong W$, if there exists a complete isometry $\varphi:V\to W$. We denote by $\operatorname{CB}(V,W)$ the space of completely bounded maps $\varphi:V\to W$ equipped with the CB-norm $\norm{\cdot}_{\operatorname{cb}}$. The space $\operatorname{CB}(V,W)$ itself is an operator space by the identification
\[ \mathbb{M}_n\big(\operatorname{CB}(V,W)\big)\cong \operatorname{CB}\big(V,\mathbb{M}_n(W)\big)\ .\]
Let $V^*$ be the Banach space dual of $V$. The dual operator space structure of $V^*$ is then given by $V^*\cong \operatorname{CB}(V,\mathbb{C})$. 
This induces a norm on each $\mathbb{M}_n(V^*)$, which can be easily seen to verify condition (i) and (ii) above. For example, $\cT_1(\cH)^*=\cB(\cH)$ as an operator space's dual space.




A concrete operator space $V$ is a norm-closed subspace of $\cB(\cH)$ for some Hilbert space $\cH$. The natural operator space structure is given by the embedding $\mathbb{M}_n(V)\subset \mathbb{M}_n(\cB(\cH))\cong \cB(\mathbb{C}^n\ten \cH)$. It was proved by Ruan \cite{Ruan88} that every abstract operator space is a concrete operator space.
We also need the notion of \textit{operator space minimal tensor product} (also known as \textit{operator space injective tensor product}, or \textit{spatial tensor product}). Let $V\subset \cB(\cH)$ and $W\subset \cB(\cK)$ be two (concrete) operator spaces. The operator space minimal tensor product $V\otimes_{\min}W$ is given by the inclusion
\[V\otimes_{\min}W\subset \cB(\cH\ten \cK)\,. \ \]
Note that this definition is independent of the choice of embedding $V\subset \cB(\cH)$ and $W\subset \cB(\cK)$.
More precisely,
\begin{align*}
   & \|u\|_{\min,n}\\&\quad :=\sup\{\|(f\otimes g)(u)\|_{n}\,: \,f:V \to  \mathbb{M}_p,\,g:W \to \mathbb{M}_q,\,\|f:V \to  \mathbb{M}_p\|_{\operatorname{cb}},\,\|g:W \to \mathbb{M}_q\|_{\operatorname{cb}}\le 1\}\,,
\end{align*}
where the supremum is over any pair of integers $q,p$. Here, the norm $\|.\|_n$ is taken with respect to the natural operator space structure on $\mathbb{M}_p\otimes \mathbb{M}_q$
 obtained when identifying $\mathbb{M}_n(\mathbb{M}_p\otimes \mathbb{M}_q)$ with $\cB((\mathbb{C}^p\otimes \mathbb{C}^q)^{\oplus n},(\mathbb{C}^p\otimes \mathbb{C}^q)^{\oplus n})$.

 In this manuscript, we are exclusively interested in the operator space structure of a non-commutative amalgamated $L_p$ space, which is discussed in Section \ref{sec:amalgamatedLp}. We will use the following proposition (see e.g. \cite[Proposition 8.1.2, Corollary 8.1.3]{effros2000operator}):
 \begin{prop}\label{thm:ruan}
 For any two operator spaces $V$ and $W$, the natural embedding $\theta:V^*{\otimes_{\min}} W\hookrightarrow \operatorname{CB}(V,W)$ defined by
 \begin{align}
     \theta(f\otimes w)(v)=f(v)\,w\,,\qquad \forall f\in V^*,\,w\in W,\,v\in V\,,
 \end{align}
 is completely isometric. In particular, for any operator space $W$ and $n\in\mathbb{N}$, $\mathbb{M}_n(W)\cong \mathbb{M}_n{\otimes_{\min}}\,W$.
 \end{prop}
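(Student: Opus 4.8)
The statement is the classical Effros--Ruan identification of the minimal tensor product with completely bounded maps, and I would prove it by reducing to finite dimensions and then matching norms at every matrix level. First, $\theta$ is well defined and injective on the algebraic tensor product $V^*\otimes W$: writing $u=\sum_k f_k\otimes w_k$ with the $w_k$ linearly independent, if $u\neq 0$ then some $f_k\neq 0$, so $\theta(u)(v)=\sum_k f_k(v)w_k\neq 0$ for a suitable $v\in V$. Since $V^*\otimes W$ is $\|\cdot\|_{\min}$-dense in $V^*\otimes_{\min}W$, and since I will show $\theta$ is a complete contraction (so it extends to the completion), it suffices to prove that $\theta$ restricted to $V^*\otimes W$ is completely isometric: a map that is completely isometric on a dense subspace extends to a complete isometry of the completion onto the closure of the image.

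Second, I would reduce to $V$ and $W$ finite dimensional. Any $u\in\mathbb{M}_n(V^*\otimes W)$ involves only finitely many $f_k\in V^*$ and $w_k\in W$; set $F:=\operatorname{span}\{w_k\}\subseteq W$ and $E:=\operatorname{span}\{f_k\}$, and let $q:V\twoheadrightarrow V_0:=V/{}^{\perp}E$ be the complete quotient map, so that the dual embedding identifies $E$ completely isometrically with $V_0^*$. Because $\otimes_{\min}$ is injective (subspaces sit completely isometrically), and because precomposition with a complete quotient map and postcomposition with a complete isometry are completely isometric on $\operatorname{CB}$-spaces, the factorization $\theta_{V,W}(u)=\iota_F\circ\theta_{V_0,F}(u)\circ q$, where $\iota_F:F\hookrightarrow W$, gives $\|u\|_{\mathbb{M}_n(V^*\otimes_{\min}W)}=\|u\|_{\mathbb{M}_n(V_0^*\otimes_{\min}F)}$ and the equality of the corresponding $\operatorname{CB}$-norms of $\theta$. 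Hence it is enough to treat finite-dimensional $V,W$, where $\theta:V^*\otimes W\to\operatorname{CB}(V,W)$ is automatically a linear bijection by dimension count.

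Third, for finite-dimensional $V$ and an arbitrary finite-dimensional operator space $X$, I would establish the scalar-level identity $\|u\|_{V^*\otimes_{\min}X}=\|\theta(u)\|_{\operatorname{CB}(V,X)}$; the completely isometric statement then follows by taking $X=\mathbb{M}_n(W)$ together with $\mathbb{M}_n(V^*\otimes_{\min}W)\cong V^*\otimes_{\min}\mathbb{M}_n(W)$ and $\mathbb{M}_n(\operatorname{CB}(V,W))\cong\operatorname{CB}(V,\mathbb{M}_n(W))$. Both inequalities rest on the canonical complete isometry $\mathbb{M}_p(V)\cong\operatorname{CB}(V^*,\mathbb{M}_p)$ sending $v$ to the map $f\mapsto[f(v_{ij})]_{ij}$, which is surjective here because $V$ is finite dimensional. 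For $\|\theta(u)\|_{\operatorname{cb}}\le\|u\|_{\min}$: given $v\in\mathbb{M}_m(V)$ with $\|v\|\le 1$, the associated $\sigma_v:V^*\to\mathbb{M}_m$ is a complete contraction, and for any completely contractive $\rho:X\to\mathbb{M}_q$ (compressions of an embedding $X\subseteq\cB(\cK)$ suffice to compute the norm in $\mathbb{M}_m(X)$) one has the identity $\rho_m(\theta(u)_m(v))=(\sigma_v\otimes\rho)(u)$ in $\mathbb{M}_m\otimes_{\min}\mathbb{M}_q$, whence $\|\theta(u)_m(v)\|\le\|u\|_{\min}$; taking suprema over $\rho$, $v$, $m$ gives the bound. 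For the reverse inequality: an arbitrary completely contractive $\sigma:V^*\to\mathbb{M}_p$ equals $\sigma_v$ for some $v\in\mathbb{M}_p(V)$ with $\|v\|\le 1$, and then for any completely contractive $\rho:X\to\mathbb{M}_q$ the element $(\sigma\otimes\rho)(u)$ again equals $\rho_p(\theta(u)_p(v))$, whose norm is at most $\|\theta(u)_p(v)\|\le\|\theta(u)\|_{\operatorname{cb}}$; taking suprema gives $\|u\|_{\min}\le\|\theta(u)\|_{\operatorname{cb}}$. Finally, the displayed special case $\mathbb{M}_n(W)\cong\mathbb{M}_n\otimes_{\min}W$ is immediate from the definition of $\otimes_{\min}$ via the common embedding into $\mathbb{M}_n(\cB(\cK))$, or alternatively follows from the proposition applied to $V$ the predual operator space of $\mathbb{M}_n$, using $\operatorname{CB}(V,W)\cong\mathbb{M}_n(W)$.

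The only genuine content is the identification $\mathbb{M}_p(V)\cong\operatorname{CB}(V^*,\mathbb{M}_p)$ together with the bookkeeping that turns probing the minimal tensor norm by matrix-valued complete contractions into postcomposing $\rho$ with $\theta(u)$ evaluated at a test element $v$; all the rest is soft functoriality of $\otimes_{\min}$ and of $\operatorname{CB}$ under subspaces and complete quotients.
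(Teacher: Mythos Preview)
The paper does not give its own proof of this proposition: it is quoted as a standard fact with a reference to \cite[Proposition 8.1.2, Corollary 8.1.3]{effros2000operator}. Your argument is a correct and complete account of the standard Effros--Ruan proof: the reduction to finite dimensions via a complete quotient $V\twoheadrightarrow V/{}^{\perp}E$ and the injectivity of $\otimes_{\min}$ is sound (the identification $({}^{\perp}E)^{\perp}=E$ for finite-dimensional $E\subseteq V^*$ is the bipolar theorem), and the core computation using $\mathbb{M}_p(V)\cong\operatorname{CB}(V^*,\mathbb{M}_p)$ together with the identity $(\sigma_v\otimes\rho)(u)=\rho_m(\theta(u)_m(v))$ establishes the norm equality cleanly. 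The upgrade to a complete isometry by replacing $W$ with $\mathbb{M}_n(W)$ is legitimate because you prove $\mathbb{M}_n(W)\cong\mathbb{M}_n\otimes_{\min}W$ independently from the concrete embedding, so there is no circularity. In short, your proof is correct and is essentially the textbook argument the paper is citing.
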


 \subsection{Conditional expectations and subalgebra indices}\label{sec:indices}

 Let $\cN\subseteq \cM\subseteq  \cB(\cH)$ be two von Neumann subalgebras of $\cB(\cH)$. Recall that a conditional expectation onto $\cN$ is a completely positive unital map $E_\cN:\cM\to \cN$ satisfying \begin{enumerate}
\item[i)]for all $X\in \cN$, $E_\cN(X)=X$,
\item[ii)]for all  $a,b\in\cN,X\in \cM$, $E_\cN(aXb)=aE_\cN(X)b$  .
 \end{enumerate}
We denote by $E_{\cN*}$ its adjoint map with respect to the trace inner product, i.e.
\begin{align*}
\tr(E_{\cN*}(X)Y)=\tr(XE_\cN(Y))\,.
\end{align*}
For a state $\rho$, the relative entropy with respect to $\cN$ is defined as follows
\[D(\rho\|\mathcal{N}):=D(\rho\|E_{\cN*}(\rho))=\inf_{E_{\cN*}(\si)=\si} D(\rho\|\si)\,,\]
where the infimum is always attained by $E_{\cN*}(\rho)$. Indeed, for any $\si$ satisfying $E_{\cN*}(\si)=\si$,  we have the identity (see \cite[Lemma 3.4]{junge2019stability})
\[D(\rho\|\si)=D(\rho\|E_{\cN*}(\rho))+D(E_{\cN*}(\rho)\|\si)\,.\]
Hence the infimum is attained if and only if $D(E_{\cN*}(\rho)\|\si)=0$. More explicitly, a finite dimensional von Neumann (sub)algebra is always given by a direct sum of matrix algebras with multiplicity, i.e.
\[\cN=\bigoplus_{i=1}^n \cB(\cH_i)\ten \mathbb{C}\Id_{\cK_i}\,, ~~~~~~~ \cH=\bigoplus_{i=1}^n \cH_i\ten \cK_i\,.\]
Denote $P_i$ as the projection onto $\cH_i\ten \cK_i$. There exists a family of density operators $\tau_i\in \cD(\cK_i)$ such that
\begin{align}E_\cN(X)=\bigoplus_{i=1}^n \tr_{\cK_i}(P_iXP_i(\Id_{\cK_i}\ten \tau_i))\ten \Id_{\cK_i}\,, ~~~~E_{\cN*}(\rho)=\bigoplus_{i=1}^n \tr_{\cK_i}(P_i\rho P_i)\ten \tau_i\,, \label{cd}\end{align}
where $\tr_{\cK_i}$ is the partial trace with respect to $\cK_i$.
A state $\si$ satisfies $E_{\cN*}(\si)=\si$ if and only if
\[ \si=\bigoplus_{i=1}^n p_i\,\si_i\ten \tau_i\,\]
for some density operators $\si_i\in \cD(\cH_i)$ and a probability distribution $\{p_i\}_{i=1}^n$. Denote $\cD(E_\cN):=\{\si\in \cD(\cH) \, | \,  \si=E_{\cN*}(\si)\}$ as the subset of states that are invariant under $E_{\cN*}$. For any $\si\in \cD(E_\cN)$,
\begin{align*}
E_{\cN*}(\si^{\frac12}X\si^{\frac12})= \si^{\frac12}E_{\cN}(X)\si^{\frac12}\,. \end{align*}

Let $\cM\subset \cB(\cH)$ be a finite dimensional von Neumann algebra and $\cN\subset \cM$ be a subalgebra.
The trace preserving conditional expectation $E_{\cN,\tr}:\cM\to\cN$ is defined so that for any $ X\in \cM$ and $ Y\in\cN$,
\[ \tr(XY)=\tr(E_{\cN,\tr}(X)Y)\,.\]
$E_{\cN,\tr}$ is self-adjoint and corresponds to taking $\displaystyle \tau_i=d_{\cK_i}^{-1}\Id_{\cK_i}$ in \eqref{cd}. We recall the definition of the index associated to the algebra inclusion $\cN\subset\cM$,
\begin{align*}&C(\cM:\cN)=\inf\{ c>0\, | \, \rho\le  c\,E_{\cN,\tr}(\rho) \text{ for all states $\rho\in \M$} \}\,, \\ &C_{\operatorname{cb}}(\cM:\cN)=\sup_{n\in\mathbb{N}}C(\cM\ten \mathbb{M}_n:\cN\ten \mathbb{M}_n)\,,\end{align*}
where the supremum in $C_{\operatorname{cb}}(\cM:\cN)$ is taken over all finite dimensional matrix algebras $\mathbb{M}_n$.
The index $C(\cM:\cN)$ was first introduced by Pimsner and Popa in \cite{popapimser} for the connection to
subfactor index and Connes entropy, and the completely bounded version $C_{\operatorname{cb}}(\cM:\cN)$ was studied in \cite{gaoindex}.
These indices are closely related to the notion of maximal relative entropy. Recall that for two states $\rho,\omega$, their maximal relative entropy is \cite{datta2009min} \[D_{\max}(\rho\|\omega)=\ln \inf\{\, c>0 \,   | \, \rho\le c \,\omega \, \}\,.\]
Indeed, $$\displaystyle\ln C(\cM:\cN)=\sup_{\rho\in \cD(E_{\cM,\tr})} D_{\max}(\rho\|E_{\cN,\tr}(\rho))\,.$$ For all finite dimensional inclusion $\cN\subset\cM$, the index $C(\cM:\cN)$ is explicitly calculated in \cite[Theorem 6.1]{popapimser} (hence also $C_{\operatorname{cb}}(\cM:\cN)$). In particular, for $\cM=\cB(\cH)$ and $\cN=\bigoplus_{i=1}^n \cB(\cH_i)\ten \mathbb{C}\Id_{\cK_i}$,
\begin{align}
    C(\cB(\cH):\cN)=\sum_{i=1}^n \min\{d_{\cH_i},d_{\cK_i}\}\,d_{\cK_i}\,,~~~~~~ C_{\operatorname{cb}}(\cB(\cH):\cN)=\sum_{i=1}^n d_{\cK_i}^2\,. \label{formula}\end{align}
For example, if we take $\cD\subset \cB(\cH)$ to be the subalgebra of diagonal matrices and $\mathbb{C}$ as the multiple of identity
\begin{align}&C(\cB(\cH):\cD)=C_{\operatorname{cb}}(\cB(\cH):\cD)=d_\cH\,, \nonumber\\ &C(\cB(\cH):\mathbb{C})=d_\cH\,,~~~ C_{\operatorname{cb}}(\cB(\cH):\mathbb{C})=d_\cH^2\,.\label{eq:indicesex}
\end{align}
In \cite{gao2021spectral}, the authors considered a generalization of these indices for a general conditional expectation $E_{\cN}:\cM\to\cN$. We recall that here $\mathbb{M}_n$ is the $n$-dimensional matrix algebra and $E_{\cN}\ten \id_{\mathbb{M}_n}\equiv E_{\cN}\otimes \id_n$ is a conditional expectation from $\cM\ten \mathbb{M}_n\to\cN\ten \mathbb{M}_n$.
Denote \begin{align}\tau=\bigoplus_{i=1}^n \Id_{\cH_i}\ten \tau_i \,.\label{tau}\end{align} Note that
 $E_{\cN}$ and $E_{\cN*}$ are uniquely defined by $\tau$ as follows,
\begin{align}\label{eqENENtr}
E_{\cN}(X)= E_{\cN,\tr}(\tau^{\frac12}X\tau^{\frac12})\,,~~~~~~ E_{\cN*}(\rho)= \tau^{\frac12}E_{\cN,\tr}(\rho)\tau^{\frac12}.
\end{align}
In particular, $E_{\cN}$ is faithful if and only if $\tau$ is. Next, we define
\begin{align}&C_{\tau}(\cM:\cN):=\inf_{c}\{ c>0\, | \, \rho\le c\,E_{\cN*}(\rho) \text{ for all states $\rho\in \M$} \}\nonumber \\
&C_{\tau,\operatorname{cb}}(\cM:\cN):                            =\sup_{n\in\mathbb{N}}C_{\tau\otimes \Id_n}(\cM\otimes {\mathbb{M}_n}:\cN\otimes \mathbb{M}_n)\,. \label{index}\end{align}
 Since $\tau$ commutes with $\cN$,
\begin{align}\label{fromtracialtonontracial} C_\tau(\cM:\cN)\le \mu_{\operatorname{min}}(\tau)^{-1}C(\cM:\cN)\,,~~~~~~~~~~~ C_{\tau,\operatorname{cb}}(\cM:\cN)\le \mu_{\operatorname{min}}(\tau)^{-1}C_{\operatorname{cb}}(\cM:\cN)\end{align}
where $\mu_{\operatorname{min}}(\tau)=\min_{i}\mu_{\min}(\tau_i)$ is the minimal eigenvalue of $\tau$. Combined with \eqref{formula}, this implies $C_{\tau}(\cM:\cN)$ and $C_{\tau,\operatorname{cb}}(\cM:\cN)$ are finite iff $\tau$ is faithful. Moreover, for any invariant state $\sigma\in\cD(E_\cN)$, by the obvious bound $\sigma\le \tau$, we also have
\begin{align}\label{eq:CtoCsigma} C_\tau(\cM:\cN)\le \mu_{\operatorname{min}}(\sigma)^{-1}C(\cM:\cN)\,,~~~~~~~~~~~ C_{\tau,\operatorname{cb}}(\cM:\cN)\le \mu_{\operatorname{min}}(\sigma)^{-1}C_{\operatorname{cb}}(\cM:\cN)\,.\end{align}

\subsection{Bimodule maps and module Choi operators}\label{subsec:bimodule_maps}

Let $E_\cN:\cM\to \cN$ be a conditional expectation onto $\cN$. Recall that
there exists a module basis $\{\xi_i\}_{i=1}^n\in \cM$ satisfying (\cite[Theorem 3.15]{paschke1973inner}, see also \cite[Cons\'{e}quence 1.8]{baillet1988indice}):
\begin{align}\label{eq:modulebasis}
E_{\cN}(\xi_i^\dagger\xi_j)=\delta_{ij}p_i\,,
\end{align}
where $p_i\in \cN$ are some projections. Also recall that $\Phi:\cM\to \cM$ is a $\cN$-bimodule map if
$\Phi(aXb)=a\Phi(X)b$ for all $a,b\in \cN$ and $X\in \cM$. In particular, this implies
 $\Phi\circ E_\cN=E_{\cN}$ if $\Phi$ is unital. 
Next, we define the \textit{module Choi operator} of a bimodule map $\Phi$ as
\[ \chi_\Phi=\sum_{i,j=1}^n\ket{i}\bra{j}\ten \Phi(\xi_i^\dagger\xi_j)\in \cB(l_2^n)\ten \cM\,,\]
where $l_2^n$ denotes the space of $n$-dimensional vectors and $\{\ket{i}\}_{i=1}^n$ is a fixed orthonormal basis in $l_2^n$. Thus $\Phi$ and $\chi_\Phi$ determine each other because for each $x\in \cM$, we have a unique decomposition $x=\sum_{i}\xi_i x_i $ with $x_i\in\cN$ satisfying $p_ix_i=x_i$. Indeed, we have $x_i=E_\cN(\xi_i^\dagger x)$.
Moreover, $\Phi$ is completely positive if and only if $\chi_\Phi$ is a positive operator in $\cB(l_2^n)\ten \cM$. Indeed, for any finite family $y_1,\cdots,y_m\in \cM$, we assume the decomposition $y_j=\sum_{l}\xi_l x_{jl}$ with $x_{jl}\in \cN$. Then
\begin{align*}
(\id\ten \Phi)\big(\sum_{i,j}\ket{i}\bra{j}\ten y_i^\dagger y_j\big)
=& \sum_{i,j}\ket{i}\bra{j}\ten \Phi(y_i^\dagger y_j)
\\
=& \sum_{i,j,k,l}\ket{i}\bra{j}\ten x_{ik}^\dagger \Phi(\xi_k^\dagger \xi_l)x_{jl}
\\
=& \big(\sum_{i,k}\ket{i}\bra{k}\ten x_{ik}^\dagger \big)\chi_\Phi
\big(\sum_{j,l}\ket{l}\bra{j}\ten x_{jl}\big)\ ,
\end{align*}
from which the equivalence claimed directly follows. We remark that when $\cN=\mathbb{C}1$, $\chi_\Phi$ is the standard Choi matrix (up to a unitary equivalence).

\section{The main result}\label{sec:main_result}

In this section, we present the main result of the paper, namely the existence of a positive MLSI constant for any Davies generator in 1D converging to the Gibbs state of a finite-range, translation-invariant and commuting Hamiltonian, at any temperature, with a logarithmic dependence with the system size. We leave the specifics of the more technical parts of the proof, Lemma \ref{lemma:globaltolocal} and Lemma \ref{lemmatechnical1} below, to Sections \ref{sec:mixingcondition} and \ref{sec:localcontrol}, respectively, for sake of clarity.

\begin{thm}\label{thm:main_result_formal}
Let $\Lambda= \llbracket 1,n\rrbracket$. For any $\beta>0$, we denote by $\sigma \equiv \sigma^\beta$ the Gibbs state of a finite-range, translation-invariant, commuting Hamiltonian at inverse temperature $\beta>0$. Consider $\mathcal{L}^D_{\Lambda*}$ the Davies generator of a quantum Markov semigroup $\{e^{t\cL_{\Lambda*}^D}\}_{t \geq 0}$ with unique fixed point $\sigma$. Then, there exists $\alpha_n=\Omega(\ln(n)^{-1})$ such that, for all $\rho\in\cD(\cH_\Lambda)$ and all $t\ge 0$,
\begin{align}\label{eq:entropic_decay_appendix}
D(\rho_t\|\sigma)\le e^{-\alpha_n t}\,D(\rho\|\sigma)\,,
\end{align}
where $\rho_t:=e^{t\cL_{\Lambda*}^D}(\rho)$. Moreover, $\alpha_n=e^{-\mathcal{O}(\beta)}$ as a function of $\beta$.
\end{thm}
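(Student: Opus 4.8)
The plan is to prove the stronger statement that the \emph{complete} modified logarithmic Sobolev constant $\alpha_{\operatorname{c}}(\mathcal{L}^D_\Lambda)$ is $\Omega(\ln(n)^{-1})$, from which \eqref{eq:entropic_decay_appendix} follows by Gr\"onwall's inequality. I work with the complete constant because, unlike the plain MLSI constant, it tensorizes (the lemma of \cite{brannan2021complete2}), and tensorization is what lets a recursion over the chain close. Two structural facts are used throughout. First, since the Hamiltonian is commuting, every local Davies generator $\mathcal{L}^D_A$ is GNS-symmetric with respect to $\sigma$ and admits $\sigma$ as a fixed point, hence comes with a conditional expectation $E_A$ onto its kernel $\operatorname{Ker}(\mathcal{L}^D_A)$; since the global fixed point is unique, $E_{\Lambda*}(\rho)=\sigma$ and $D(\rho\|\sigma)=D(\rho\|E_{\Lambda*}(\rho))$. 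Second, the reversible part $-i[H_\Lambda,\cdot\,]$ contributes nothing to the entropy production (as $\ln\sigma$ commutes with $H_\Lambda$ and $\ln\rho$ with $\rho$), so applying \Cref{EPexpress} with the reference state $\sigma$ to each summand gives $\operatorname{EP}_{\mathcal{L}^D_\Lambda}(\rho)=\sum_{k\in\Lambda}\operatorname{EP}_{\mathcal{L}^D_k}(\rho)$ and likewise $\operatorname{EP}_{\mathcal{L}^D_A}(\rho)=\sum_{k\in A}\operatorname{EP}_{\mathcal{L}^D_k}(\rho)$.

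\textbf{Step 1 (local control; \Cref{lemmatechnical1}).} I establish $\alpha_{\operatorname{c}}(\mathcal{L}^D_A)\ge e^{-\mathcal{O}(\beta)}/|A|$ for a block $A$ whose size may grow slowly with $n$. Positivity is \Cref{CMLSItheorem}; for the quantitative estimate I combine the system-size-independent spectral gap of $\mathcal{L}^D_A$ from \cite{kastoryano2016quantum} (whose lower bound degrades exponentially in $\beta$, which is what produces the $\beta$-scaling of $\alpha_n$) with the comparison between CMLSI constant and spectral gap governed by the completely bounded index $C_{\tau,\operatorname{cb}}(\mathcal{B}(\mathcal{H}_A):\operatorname{Ker}(\mathcal{L}^D_A))$ of \Cref{sec:indices}, using the bound $C_{\tau,\operatorname{cb}}\le\mu_{\operatorname{min}}(\sigma)^{-1}C_{\operatorname{cb}}$ of \eqref{eq:CtoCsigma}, the explicit formula \eqref{formula}, and $\ln\mu_{\operatorname{min}}(\sigma)^{-1}=\mathcal{O}(\beta|A|)$ on a block.

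\textbf{Step 2 (global to local; \Cref{lemma:globaltolocal}).} This is the crux. Cover $\Lambda$ by overlapping blocks $\{A_i\}$ of width $\ell$, consecutive blocks sharing a window of width $\sim\ell/2$, so that each site lies in at most two $A_i$, and prove an entropic approximate tensorization
\begin{align*}
D(\rho\|\sigma)\ \le\ \big(1-\varepsilon(\ell)\big)^{-1}\sum_i D\big(\rho\,\big\|\,E_{A_i*}(\rho)\big)\,,
\end{align*}
where $\varepsilon(\ell)$ controls the failure of the $E_{A_i}$ to commute across their overlaps. The point that forces the operator-space machinery is that $\varepsilon(\ell)$ must be controlled by a \emph{completely bounded} clustering quantity, so that the inequality survives tensoring with an ancilla; the natural framework is the operator-space structure of the amalgamated $L_p$ space attached to $\operatorname{Ker}(\mathcal{L}^D_M)\subset\mathcal{B}(\mathcal{H}_\Lambda)$ for an overlap window $M$ \cite{junge2010mixed}, into which I feed the recently derived strong decay of correlations for $1$D translation-invariant Gibbs states to obtain $\varepsilon(\ell)\le e^{-\Omega(\ell)}$, with a decay rate degrading like $e^{-\mathcal{O}(\beta)}$ and, plausibly, a polynomial-in-$n$ prefactor — which is exactly what forces the choice $\ell=\Theta(\ln n)$.

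\textbf{Step 3 (assembly and main obstacle).} Take $\ell=\Theta(\ln n)$ so that $\varepsilon(\ell)\le\tfrac12$. Using the bound $\sum_i\operatorname{EP}_{\mathcal{L}^D_{A_i}}(\rho)=\sum_k\#\{i:k\in A_i\}\,\operatorname{EP}_{\mathcal{L}^D_k}(\rho)\le 2\operatorname{EP}_{\mathcal{L}^D_\Lambda}(\rho)$, then Step 1, then Step 2,
\begin{align*}
\operatorname{EP}_{\mathcal{L}^D_\Lambda}(\rho)\ \ge\ \tfrac12\sum_i\operatorname{EP}_{\mathcal{L}^D_{A_i}}(\rho)\ &\ge\ 2\min_i\alpha_{\operatorname{c}}(\mathcal{L}^D_{A_i})\sum_i D\big(\rho\|E_{A_i*}(\rho)\big)\\
&\ge\ \min_i\alpha_{\operatorname{c}}(\mathcal{L}^D_{A_i})\,D(\rho\|\sigma)\ \ge\ \frac{e^{-\mathcal{O}(\beta)}}{\ell}\,D(\rho\|\sigma)\,,
\end{align*}
which yields $\alpha_n=e^{-\mathcal{O}(\beta)}\,\Omega(\ln(n)^{-1})$ and hence \eqref{eq:entropic_decay_appendix}. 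The main obstacle is Step 2: extracting from the Gibbs-state decay of correlations a bound on the non-commutativity of the conditional expectations $E_{A_i}$ in \emph{completely bounded} norm rather than merely in operator norm, and in a form that plugs directly into an entropic — not $L_2$ — tensorization estimate. This is precisely what the amalgamated $L_p$ machinery is recruited for, and it is the technically heaviest ingredient of the proof.
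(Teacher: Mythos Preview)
Your overall architecture—cover $\Lambda$ by blocks of size $\mathcal{O}(\ln n)$, prove an approximate tensorization reducing $D(\rho\|\sigma)$ to block relative entropies, bound the block CMLSI, and assemble via additivity of the entropy production—matches the paper's. But you have swapped the technical roles of the two main ingredients, and this affects how the proof actually goes through.

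In the paper, the global-to-local step (\Cref{lemma:globaltolocal}, \Cref{globaltolocalprop}) does \emph{not} use operator space theory. It uses the state-level approximate tensorization of \cite{capel2018quantum,bardet2019modified}, whose multiplicative constant is $(1-2\|h(\sigma_{A^cB^c})\|_\infty)^{-1}$ with $h$ as in \eqref{eq:definition_h_mixing_condition}—a purely Gibbs-state quantity—and then feeds in the 1D decay of correlations of \cite{bluhm2021exponential} via a recursive splitting/joining argument (\Cref{sec:mixingcondition}) to make this constant $\mathcal{O}(1)$ once the overlaps have size $\Theta(\ln n)$. Crucially, this step is carried out only for $\rho\in\cD(\cH_\Lambda)$ with no ancilla: the theorem asks for plain MLSI, not CMLSI, so there is no need for a completely bounded version of the global-to-local reduction, and the paper does not prove one. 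Your decision to target the full CMLSI is what makes Step 2 look like it requires CB clustering; the paper simply sidesteps this.

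Conversely, the amalgamated-$L_p$ machinery (\Cref{lemmatechnicalgeneral}, module Choi operators, \Cref{lemmaindexlpspace}) is used in the paper's \emph{local} step (\Cref{lemmatechnical1}), and not to bound a block CMLSI directly but to prove a second approximate tensorization $D(\rho\|E_{A*}(\rho))\le 4k_A\sum_{i\in A}D(\rho\|E_{i*}(\rho))$ with $k_A=\mathcal{O}(|A|)$. The detectability lemma and the gap of \cite{kastoryano2016quantum} enter here to control $\lambda=\|\prod_{i\in A}E_i-E_A\|_{L_2(\sigma)\to L_2(\sigma)}$ uniformly in $n$; the index bound \eqref{eq:CtoCsigma} and $\ln\mu_{\min}(\sigma^{(A\partial)})^{-1}=\mathcal{O}(\beta|A|)$ then give \eqref{eq:kA}. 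The argument is finished not with a block CMLSI but with the \emph{single-site} CMLSI constant $\alpha_0>0$ of \cite{gao2021spectral}, which is uniform in $n$ because each $\cL^D_k$ is supported on a region of size $\mathcal{O}(r)$. Your Step~1 (block CMLSI from gap plus index) is a plausible repackaging of the same ingredients, but it is not how the paper organizes the local control.

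In short: the proposal is not wrong in outline, but your identification of ``the main obstacle'' is misplaced. The operator-space work lives in the local step; the global step is handled by the Gibbs-state mixing condition of \cite{capel2018quantum} together with \cite{bluhm2021exponential}, and needs no CB upgrade because the paper proves only the plain MLSI stated in the theorem.
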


\begin{rem}
The scaling of the MLSI constant with the inverse temperature is optimal, as proved in \cite{nacu2003glauber}.
\end{rem}

Note that the previous result is equivalent to the existence of a positive MLSI constant $\alpha_n=\Omega(\ln(n)^{-1})$ for $\mathcal{L}^D_{\Lambda*}$ (cf. Section \ref{subsec:functineq}): Indeed, \eqref{eq:entropic_decay_appendix} holds if, and only if, for all $\rho\in\cD(\cH_\Lambda)$ and all $t\ge 0$,
\begin{equation}\label{eq:global_MLSI_definition}
    \alpha_n D(\rho \|\sigma)\le -\tr[\mathcal{L}^D_{\Lambda*}(\rho) (\ln \rho - \ln \sigma)]=:\operatorname{EP}(\rho) \, ,
\end{equation}
where $\alpha_n=\mathcal{O}(\ln(n))$ and the entropy production in the right-hand side of the inequality is obtained by
\begin{equation*}
        \operatorname{EP}(\rho)=-\left.\frac{d}{dt}\right|_{t=0} D(\rho_t\| \sigma) \, .
\end{equation*}
One way to prove \eqref{eq:global_MLSI_definition}, and therefore entropic convergence of the Davies semigroup $(e^{t\cL^D_{\Lambda*}})_{t\ge0}$ as claimed in Theorem \ref{thm:main_result_formal}, is by reducing the MLSI constant in $\Lambda$ to the complete MLSI constants in smaller regions $A, B \subset \Lambda$. This procedure was introduced in \cite{cesi2001quasi} and \cite{daipra2002classicalMLSI} to simplify the traditional strategy for proving positivity of MLSI constants for classical spin systems. In the past few years, a similar idea has been explored in a number of works regarding quantum spin lattice systems \cite{capel2018quantum, bardet2019modified, capel2019thesis, bardet2020approximate, capel2020MLSI, gao2021spectral}, in which   a strategy to prove positivity of a MLSI constant via results of quasi-factorization (a.k.a. approximate tensorization) of the relative entropy has been conceived. The proof of Theorem \ref{thm:main_result_formal} also follows this direction.

In the next few lines, we provide an intuition for the reduction from MLSI constant in $\Lambda$ to complete MLSI constants in $A,B \subset \Lambda$. Given any sub-region $A\subseteq \Lambda$, we denote the projection onto the fixed points of $\cL^D_A$ by $E_A$. Its adjoint map $E_{A*}$ is a quantum channel that corresponds to the infinite time limit of the semigroup generated by $\cL_{A*}^D$, i.e. $e^{t\cL_{A*}^D}\to E_{A*}$ as $t\to\infty$. Then, $e^{t\cL_{A*}^D}$ satisfies a positive complete MLSI if there exists $\alpha_A > 0$ such that for any $m\in \mathbb{N}$ and all states $\rho\in \mathcal{D}(\cH_\Lambda\otimes \mathbb{C}^m)$,
\begin{align}\label{eq:CMLSI}
   \alpha_A\,D(\rho\|(E_{A*}\otimes \id_m)(\rho))\le \operatorname{EP}_{A}(\rho)\,,
\end{align}
where $\operatorname{EP}_A(\rho)$ denotes the entropy production of $\rho$ in $A$ and is defined as
\begin{align*}
    \operatorname{EP}_A(\rho):=-\left.\frac{d}{dt}\right|_{t=0} D(\rho^A_t\|(E_{A*}\otimes \id_m)(\rho))
\end{align*}
for $\rho^A_t:=(e^{t\cL_{A*}^D}\otimes \id_m)(\rho)$. Indeed, \eqref{eq:CMLSI} is equivalent to the existence of a uniform MLSI constant for the semigroup $e^{t\cL_{A*}^D}$ coupled with environment of all dimensions.  By a straightforward derivation, the entropy production in $A$ can be shown to be equal to
$$\operatorname{EP}_A(\rho)=-\tr\Big[\cL_{A*}^D(\rho)\big( \ln (\rho)-\ln(\sigma\otimes \tr_\Lambda(\rho))\big)\Big]\, $$
where $\tr_\Lambda(\rho)$ is the reduced density on the environment system.
Note that, in particular, for any two non-overlapping regions $A,B\subseteq \Lambda$,
\begin{align}\label{eq:additivity}
    \operatorname{EP}_{A\cup B}(\rho)=  \operatorname{EP}_{A}(\rho)+  \operatorname{EP}_{ B}(\rho)\,.
\end{align}
This simple yet important observation regarding the linearity of the entropy production (inherited by the linearity of the generator) allows us to reduce the right-hand side of Equation \eqref{eq:global_MLSI_definition} from a chain $\Lambda$ to two sub-regions $A, B$ such that $A \cup B = \Lambda$. The remaining part of the reduction procedure concerns the splitting of the left-hand side, namely the relative entropy. For that, we need the following two-steps strategy:
\begin{itemize}
\item[(i)] \textbf{Global-to-local reduction} (\Cref{subsec:global-to-local}): We first prove the following \textit{approximate tensorization} of the relative entropy that for any state $\rho\in \cD(\cH_\Lambda)$,
\begin{align}\label{eq:AT}
D(\rho\|\sigma)    \le \mathcal{C} \sum_{i}D(\rho\|E_{A_i*}(\rho))\,,
\end{align}
for some universal constant $\mathcal{C} \ge 1$ and a covering $\{A_i\}_i$ of the chain $\Lambda$ by intervals $A_i$ of size $|A_i|=O(\ln(n))$. This turns out to be a consequence of a generic approximate tensorization for quantum states introduced in \cite{capel2018quantum,capel2017superadditivity} (see also \cite{bardet2019modified}) and a recently proven mixing property of $1$D quantum Gibbs states at any temperature \cite{bluhm2021exponential}.
\item[(ii)] \textbf{Local control of the constant} (\Cref{sec:quasilocalcontrol}): We then prove that the complete MLSI \eqref{eq:CMLSI} holds for the local subregions $A_i$ with a constant $\alpha_A:=\min_i\,\alpha_{A_i}=\min_i\Omega (|A_i|^{-1})$. In order to prove this second step, we will in fact resort to another approach of approximate tensorization for conditional expectations recently studied in \cite{laracuente2019quasi,gao2021spectral}.
\end{itemize}

The next two subsections detail the plan drawn above.



\subsection{Global-to-local reduction}\label{subsec:global-to-local}

The first step in the proof of \eqref{eq:AT} consists in the following abstract approximate tensorization for the relative entropy proved in \cite{capel2018quantum,capel2017superadditivity} (see also \cite{bardet2019modified}). For any finite lattice $\Lambda \subset \subset \mathbb{Z}^d$ in any dimension $d$, consider any two regions $A,B\subset \Lambda$ with non-overlapping complements and such that $A \cup B = \Lambda$, as in the following picture:
\begin{figure}[H]
\begin{center}
   \begin{tikzpicture}[scale=0.7]
\Block[6,lightblue!60!white,1];
\begin{scope}[yshift=1cm]
\Block[6,lightblue!60!white,1];
\end{scope}
\begin{scope}[yshift=2cm]
\Block[6,lightblue!60!white,1];
\end{scope}
\begin{scope}[yshift=3cm]
\Block[6,lightblue!60!white,1];
\draw [thick,lightblue,decorate,decoration={brace,amplitude=10pt,mirror},xshift=-0.5pt,yshift=-0.6pt](-0.5,-3.6) -- (7.5,-3.6) node[black,midway,yshift=-0.7cm] { \textcolor{lightblue}{$A$}};
\end{scope}
\begin{scope}[xshift=6cm]
\Block[2,lightpurple!40!white,1];
\end{scope}
\begin{scope}[xshift=6cm,yshift=1cm]
\Block[2,lightpurple!40!white,1];
\end{scope}
\begin{scope}[xshift=6cm,yshift=2cm]
\Block[2,lightpurple!40!white,1];
\end{scope}
\begin{scope}[xshift=6cm,yshift=3cm]
\Block[2,lightpurple!40!white,1];
\end{scope}
\begin{scope}[xshift=8cm]
\Block[6,lightpink!50!white,1];
\end{scope}
\begin{scope}[xshift=8cm,yshift=1cm]
\Block[6,lightpink!50!white,1];
\end{scope}
\begin{scope}[xshift=8cm,yshift=2cm]
\Block[6,lightpink!50!white,1];
\end{scope}
\begin{scope}[xshift=8cm,yshift=3cm]
\Block[6,lightpink!50!white,1];
\draw [thick,lightpink,decorate,decoration={brace,amplitude=10pt},xshift=-0.5pt,yshift=-0.6pt](-2.5,0.6) -- (5.5,0.6) node[black,midway,yshift=0.7cm] { \textcolor{lightpink}{$B$}};
\end{scope}
\node at (16,3.8) {\huge $\Lambda$};
\end{tikzpicture}
\end{center}
  \caption{Possible splitting of a lattice $\Lambda$ into two subregions $A, B$ with non-overlapping complements and such that $\Lambda = A \cup B$.}
  \label{fig:0}
\end{figure}
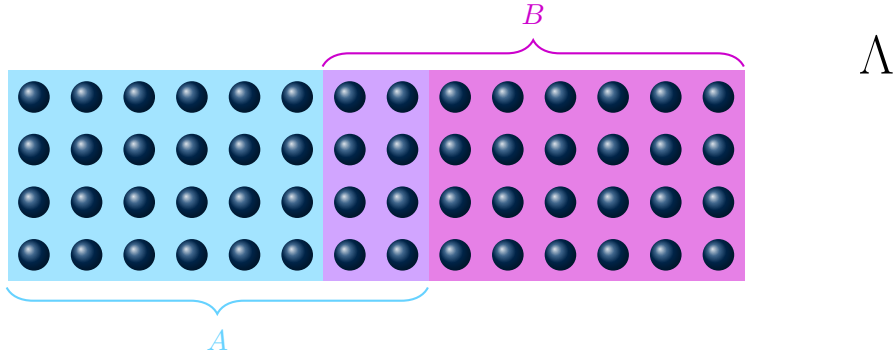
\noindent Given $\sigma  \in \cD(\cH_\Lambda)$, let us denote
\begin{equation}\label{eq:definition_h_mixing_condition}
    h(\sigma_{A^cB^c}):=\left(\sigma_{A^c}^{-1/2}\otimes \sigma_{B^c}^{-1/2}\right)\sigma_{A^cB^c}\left(\sigma_{A^c}^{-1/2}\otimes \sigma_{B^c}^{-1/2}\right)-\Id_{A^cB^c} \, .
\end{equation}
Then, for all states $\rho, \sigma \in \cD(\cH_\Lambda)$ such that $\norm{ h(\sigma_{A^cB^c})}_\infty < 1/2$, the following holds \cite{capel2018quantum}:
\begin{align}\label{ATmixingcondrel}
    D(\rho \| \sigma )\le \frac{1}{1-2\,\|h(\sigma_{A^cB^c})\|_\infty}\,\big[D_A(\rho \| \sigma )+D_B(\rho \| \sigma )\big]\,,
\end{align}
where for any region $C\subseteq \Lambda$, we define the \textit{conditional relative entropy between $\rho$ and $\sigma$ on $C$} as
\begin{align*}
    D_C(\rho \| \sigma ):=D(\rho\|\sigma)-D(\rho_{C^c}\|\sigma_{C^c})\,.
\end{align*}
The operator $h(\sigma_{A^cB^c})$, which constitutes a natural quantum generalization of the mixing condition of \cite{daipra2002classicalMLSI}, is a measure of independence of the regions $A$ and $B$ as measured in the Gibbs state $\sigma$.
Now, we focus on a unidimensional $\Lambda \subset \subset \mathbb{Z}$ and, following the lines of \cite{bardet2019modified}, we construct $A, B$ as two non-connected regions. Indeed, we choose the regions $A$ and $B$ to be unions of small intervals $\cup_{i=1}^m A_i$ and $\cup_{j=1}^m B_j$ respectively, such that subregions $A_i$ (resp.~$B_j$) do not overlap with each other, and $A\cup B$ covers the spin chain $\Lambda$. More specifically, for a certain $\ell \in \mathbb{N}$ to be determined later, we assume the following conditions on $\{ A_i \}_{i=1}^m $ and $\{ B_j \}_{j=1}^m $:
\begin{itemize}
    \item $\abs{A_i}=\abs{B_j} = 2 (r + \ell) -1 \, $ for every $1\leq i, j \leq m$, where $r$ is the range of the interaction of the Hamiltonian.
    \item $\abs{A_i \cap B_i} = \abs{B_i \cap A_{i+1}}= \ell \, $ for every $1 \leq i \leq m-1$.
\end{itemize}
For a better intuition on this construction, see Figure \ref{fig:1}:
\begin{figure}[H]
\begin{center}
   \begin{tikzpicture}[scale=0.7]
\Block[4,lightblue!60!white,1];
\draw [thick,midblue,decorate,decoration={brace,amplitude=10pt,mirror},xshift=-0.5pt,yshift=-0.6pt](-0.5,-0.6) -- (4.5,-0.6) node[black,midway,yshift=-0.7cm] { \textcolor{midblue}{$A_1$}};
\begin{scope}[xshift=4cm]
\Block[1,lightpurple!40!white,1];
\end{scope}
\begin{scope}[xshift=5cm]
\Block[3,lightpink!50!white,1];
\draw [thick,midpink,decorate,decoration={brace,amplitude=10pt},xshift=-0.5pt,yshift=-0.6pt](-1.5,0.6) -- (3.5,0.6) node[black,midway,yshift=+0.7cm] { \textcolor{midpink}{$B_1$}};
\end{scope}
\begin{scope}[xshift=8cm]
\Block[1,lightpurple!40!white,1];
\end{scope}
\begin{scope}[xshift=9cm]
\Block[3,lightblue!60!white,1];
\draw [thick,midblue,decorate,decoration={brace,amplitude=10pt,mirror},xshift=-0.5pt,yshift=-0.6pt](-1.5,-0.6) -- (3.5,-0.6) node[black,midway,yshift=-0.7cm] { \textcolor{midblue}{$A_2$}};
\end{scope}
\begin{scope}[xshift=12cm]
\Block[1,lightpurple!40!white,1];
\end{scope}
\begin{scope}[xshift=13cm]
\Block[4,lightpink!50!white,1];
\draw [thick,midpink,decorate,decoration={brace,amplitude=10pt},xshift=-0.5pt,yshift=-0.6pt](-1.5,0.6) -- (3.5,0.6) node[black,midway,yshift=+0.7cm] { \textcolor{midpink}{$B_2$}};
\end{scope}
\node at (18,0.8) {\huge $\Lambda$};
\end{tikzpicture}
\end{center}
  \caption{Representation of an interval $\Lambda$ split into two regions $A$ and $B$, which are the union of small intervals  $\cup_{i=1}^m A_i$ and $\cup_{j=1}^m B_j$, such that $A_i \cap B_i \neq \emptyset \neq B_i \cap A_{i+1}$ for all $i = 1, \ldots m-1$ and subregions $A_i \cap A_j = \emptyset = B_i \cap B_j$ for all $i \neq j$. In the picture, $m=2$, $r=2$ and $\ell=1$.}
  \label{fig:1}
\end{figure}
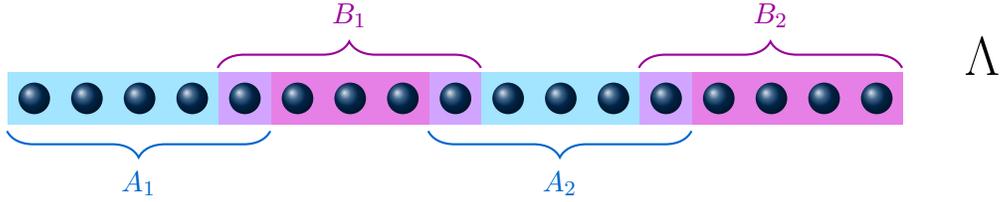

Next, using the geometry of the chain that we have just introduced, we are able to further upper bound the conditional relative entropies at the right-hand side of \eqref{ATmixingcondrel}. Indeed, since the $A_i$'s and $B_j$'s have been defined so that their boundaries do not overlap,  if we consider the splitting of the chain as $A_i \leftrightarrow \partial (A_i) \leftrightarrow (A_i \partial)^c$ for any $1 \leq i \leq m$ (and analogously for $B_j$), the Gibbs state $\sigma$ is, in particular, a quantum Markov chain between these three regions, as shown in Figure \ref{fig:2}:
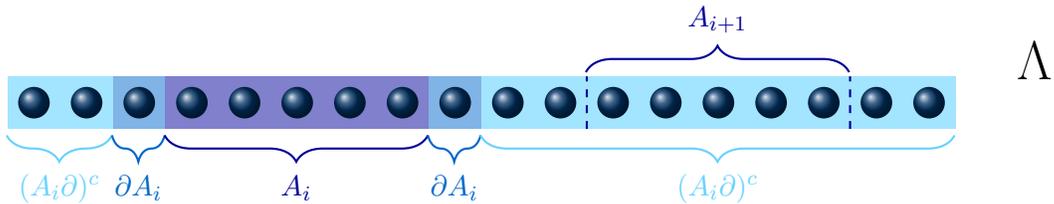
\begin{figure}[H]
\begin{center}
   \begin{tikzpicture}[scale=0.7]
\Block[2,lightblue!60!white,1];
\draw [thick,lightblue,decorate,decoration={brace,amplitude=10pt,mirror},xshift=-0.5pt,yshift=-0.6pt](-0.5,-0.6) -- (1.5,-0.6) node[black,midway,yshift=-0.7cm] { \textcolor{lightblue}{$(A_i \partial )^c$}};
\begin{scope}[xshift=2cm]
\Block[1,midblue!50!white,1];
\draw [thick,midblue,decorate,decoration={brace,amplitude=10pt,mirror},xshift=-0.5pt,yshift=-0.6pt](-0.5,-0.6) -- (0.5,-0.6) node[black,midway,yshift=-0.7cm] { \textcolor{midblue}{$\partial A_i  $}};
\end{scope}
\begin{scope}[xshift=3cm]
\Block[5,darkblue!50!white,1];
\draw [thick,darkblue,decorate,decoration={brace,amplitude=10pt,mirror},xshift=-0.5pt,yshift=-0.6pt](-0.5,-0.6) -- (4.5,-0.6) node[black,midway,yshift=-0.7cm] { \textcolor{darkblue}{$A_i$}};
\end{scope}
\begin{scope}[xshift=8cm]
\Block[1,midblue!50!white,1];
\draw [thick,midblue,decorate,decoration={brace,amplitude=10pt,mirror},xshift=-0.5pt,yshift=-0.6pt](-0.5,-0.6) -- (0.5,-0.6) node[black,midway,yshift=-0.7cm] { \textcolor{midblue}{$\partial A_i  $}};
\end{scope}
\begin{scope}[xshift=9cm]
\Block[9,lightblue!60!white,1];
\draw [thick,lightblue,decorate,decoration={brace,amplitude=10pt,mirror},xshift=-0.5pt,yshift=-0.6pt](-0.5,-0.6) -- (8.5,-0.6) node[black,midway,yshift=-0.7cm] { \textcolor{lightblue}{$(A_i \partial )^c$}};
\draw [thick,darkblue,decorate,decoration={brace,amplitude=10pt},xshift=-0.5pt,yshift=-0.6pt](1.5,0.6) -- (6.5,0.6) node[black,midway,yshift=0.7cm] { \textcolor{darkblue}{$A_{i+1}$}};
\draw [dashed,darkblue,thick](1.5,-0.5) -- (1.5,0.5) ;
\draw [dashed,darkblue,thick](6.5,-0.5) -- (6.5,0.5) ;
\end{scope}
\node at (19,0.8) {\huge $\Lambda$};
\end{tikzpicture}
\end{center}
  \caption{Splitting of $\Lambda$ into the three regions $A_i \leftrightarrow \partial (A_i) \leftrightarrow (A_i \partial)^c$. Note that, in particular, any other $A_j$ satisfies $A_j \subset (A_i \partial)^c$. }
  \label{fig:2}
\end{figure}

This property yields a privileged structural decomposition of $\sigma$ of the following form:
\begin{equation}\label{eq:decomposition_sigma_Gibbs_state}
    \sigma = \underset{j}{\bigoplus} \, q_j \, \sigma_{A_i \partial (A_i)_j^L  } \otimes \sigma_{\partial (A_i)_j^R (A_i \partial )^c  } \, .
\end{equation}
This decomposition allowed some of the authors to show in \cite[Step 2 of Theorem 7]{bardet2019modified} the following inequality for such a state $\sigma$:
\begin{align}\label{ATmixingcondrel2}
    D(\rho \| \sigma )\le \frac{1}{1-2\,\|h(\sigma_{A^cB^c})\|_\infty}\, \underset{i=1}{\overset{m}{\sum}} \big[D_{A_i}(\rho \| \sigma )+D_{B_i}(\rho \| \sigma )\big]\,.
\end{align}
Now, we aim at estimating the multiplicative term appearing in \eqref{eq:additivity} (and in \eqref{ATmixingcondrel2}) given this geometry. Intuitively, the quantity $\|h(\sigma_{A^cB^c})\|_\infty$ should decrease with the size of the overlap between $A$ and $B$, since in this case their complements get more separated. This intuition was recently given a rigorous justification in \cite[Proposition 8.1]{bluhm2021exponential}. There, building up on the seminal result of Araki \cite{araki1969gibbs}, the authors prove the following \textit{mixing condition}:  Consider any three convex regions $X,Y,Z\subset\mathbb{Z}$ with $Y$ shielding $X$ away from $Z$ as in Figure \ref{fig:3},
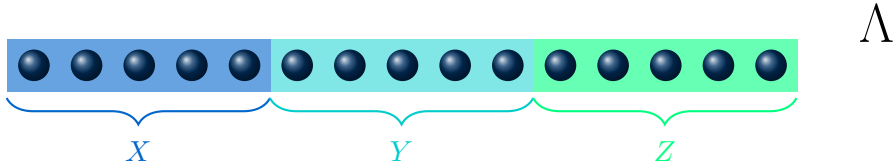
\begin{figure}[H]
\begin{center}
   \begin{tikzpicture}[scale=0.7]
\Block[5,midblue!60!white,1];
\draw [thick,midblue,decorate,decoration={brace,amplitude=10pt,mirror},xshift=-0.5pt,yshift=-0.6pt](-0.5,-0.6) -- (4.5,-0.6) node[black,midway,yshift=-0.7cm] { \textcolor{midblue}{$X$}};
\begin{scope}[xshift=5cm]
\Block[5,cyan!50!white,1];
\draw [thick,cyan,decorate,decoration={brace,amplitude=10pt,mirror},xshift=-0.5pt,yshift=-0.6pt](-0.5,-0.6) -- (4.5,-0.6) node[black,midway,yshift=-0.7cm] { \textcolor{cyan}{$Y $}};
\end{scope}
\begin{scope}[xshift=10cm]
\Block[5,lightgreen!60!white,1];
\draw [thick,lightgreen,decorate,decoration={brace,amplitude=10pt,mirror},xshift=-0.5pt,yshift=-0.6pt](-0.5,-0.6) -- (4.5,-0.6) node[black,midway,yshift=-0.7cm] { \textcolor{lightgreen}{$Z$}};
\end{scope}
\node at (16,0.8) {\huge $\Lambda$};
\end{tikzpicture}
\end{center}
  \caption{Splitting of a finite interval $\Lambda$ into three subintervals  $X,Y,Z$ with $Y$ shielding $X$ away from $Z$. }
  \label{fig:3}
\end{figure}
\noindent Then, given the Gibbs state of a possibly non-commuting, finite-range, translation-invariant 1D Hamiltonian, for any inverse temperature $\beta > 0$ there exist constants $\mathcal{K}\ge 0$ and $\gamma >0$ independent of $n$ such that,
\begin{align}
 \|\sigma_{XZ}(\sigma_{X}\otimes \sigma_{Z})^{-1}-\Id_{XZ}\|_\infty \le \mathcal{K}\,e^{-\gamma |Y|}\,,\label{eq:mixing}
\end{align}
Moreover, note that by \cite[Proposition IX.1.1]{bhatia2013matrix}, we have for any $P > 0$ and any real observable $Q$ that
\begin{equation}\label{equa:boundBSrelativeEntropyAux1}
\| P^{1/2} Q P^{1/2} - \identity\| \,\, \leq \,\, \| Q P - \identity\|\, .
\end{equation}
Therefore, putting both inequalities together, we have
\begin{align}
      \|h(\sigma_{XZ})\|_\infty&\le \mathcal{K}\,e^{-\gamma |Y|}\, \label{mixing2}
\end{align}

With the geometry $\Lambda=A\cup B$ described above, this bound can be used in cascade in order to yield a good enough control over $\|h(\sigma_{A^cB^c})\|_\infty$. This is the content of the following Lemma, whose proof we defer to Section \ref{sec:mixingcondition}.

\begin{lem}\label{lemma:globaltolocal}Let $\{A_i , B_i \}_{i=1}^m$ be a covering  of $\Lambda$ with $m=\Omega(n/\ln(n))$ intervals of size $|A_i|= |B_i|=\mathcal{O}(\ln(n))$ for every $1 \leq i \leq m$.
For any inverse temperature $\beta$, there exists a constant $\mathcal{C}$ independent of $|\Lambda|\equiv n$ such that for any state $\rho\in \cD(\cH_\Lambda)$,
\begin{align*}
    D(\rho\|\sigma)\le \mathcal{C} \sum_{i=1}^{m} \left[ D_{A_i}(\rho\| \sigma ) + D_{B_i}(\rho\| \sigma ) \right]\,,
\end{align*}
\end{lem}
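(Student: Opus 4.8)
The plan is to feed into the abstract approximate tensorization \eqref{ATmixingcondrel2} a bound on the mixing term $\|h(\sigma_{A^cB^c})\|_\infty$ that is uniform in the chain length $n$. Since the windows $A_i,B_i$ are chosen so that their $r$-boundaries are pairwise disjoint, the commuting finite-range Gibbs state $\sigma$ has the Markov form \eqref{eq:decomposition_sigma_Gibbs_state}, so \eqref{ATmixingcondrel2} applies and already yields
\[
D(\rho\|\sigma)\ \le\ \frac{1}{1-2\,\|h(\sigma_{A^cB^c})\|_\infty}\,\sum_{i=1}^m\big[D_{A_i}(\rho\|\sigma)+D_{B_i}(\rho\|\sigma)\big]\,.
\]
Hence it will be enough to fix the overlap parameter $\ell$ -- equivalently the window size $|A_i|=|B_i|=2(r+\ell)-1$ -- so that $\|h(\sigma_{A^cB^c})\|_\infty\le\tfrac14$ for every $n$; the lemma then holds with $\mathcal{C}=2$.

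The first step is to describe the geometry of $A^c$ and $B^c$. With the notation of the figures above one has $B^c=\bigsqcup_{i=1}^m\tilde A_i$ and $A^c=\bigsqcup_{j=1}^m\tilde B_j$, where $\tilde A_i$ (resp.\ $\tilde B_j$) is the sub-interval of $A_i$ (resp.\ $B_j$) not covered by any other window; these $2m$ short intervals alternate along $\Lambda$, and any two consecutive ones among them are separated by an interval of length at least $\ell$ (one of the overlaps $A_i\cap B_i$ or $B_i\cap A_{i+1}$). I list them as $I_1,\dots,I_{2m}$ from left to right.

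The second step upgrades the mixing condition \eqref{mixing2} to reduced states. If $X$ lies entirely to the left of $Z$ with a gap of length $t$ between them, extend $X$ and $Z$ to intervals $X'\supseteq X$, $Z'\supseteq Z$ so that $X'$, the gap, and $Z'$ are consecutive intervals whose union is a single interval; then \eqref{eq:mixing}--\eqref{mixing2} give $(1-\mathcal{K}e^{-\gamma t})\,\sigma_{X'}\!\otimes\sigma_{Z'}\le\sigma_{X'Z'}\le(1+\mathcal{K}e^{-\gamma t})\,\sigma_{X'}\!\otimes\sigma_{Z'}$, and taking the partial trace over $X'\setminus X$ and $Z'\setminus Z$ preserves this two-sided operator bound, so $\|h(\sigma_{XZ})\|_\infty\le\mathcal{K}e^{-\gamma t}$ as well. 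Applying this in cascade -- peeling $I_1$ off $I_2\cup\dots\cup I_{2m}$, then $I_2$ off $I_3\cup\dots\cup I_{2m}$, and so on -- and tensoring the successive operator inequalities, I obtain, with $\delta_0:=\mathcal{K}e^{-\gamma\ell}$,
\[
(1-\delta_0)^{2m-1}\bigotimes_{k=1}^{2m}\sigma_{I_k}\ \le\ \sigma_{A^cB^c}\ \le\ (1+\delta_0)^{2m-1}\bigotimes_{k=1}^{2m}\sigma_{I_k}\,.
\]
Running the same peeling inside $A^c$ and inside $B^c$ produces $(1\pm\delta_0)^{2m-2}\bigotimes_k\sigma_{I_k}$ as two-sided bounds for $\sigma_{A^c}\!\otimes\sigma_{B^c}$; combining with the previous display and conjugating by $(\sigma_{A^c}\!\otimes\sigma_{B^c})^{-1/2}$ gives
\[
\|h(\sigma_{A^cB^c})\|_\infty\ \le\ \max\!\Big\{\tfrac{(1+\delta_0)^{2m-1}}{(1-\delta_0)^{2m-2}}-1,\ 1-\tfrac{(1-\delta_0)^{2m-1}}{(1+\delta_0)^{2m-2}}\Big\}\ =\ O(m\,\delta_0)\,.
\]
Since $m=\Theta(n/\ln n)$, choosing $\ell=\Theta(\ln n)$ large enough makes $m\delta_0$ as small as desired, in particular $\|h(\sigma_{A^cB^c})\|_\infty\le\tfrac14$; this pins down the window size claimed in the statement, and substituting back into the first display proves the lemma with $\mathcal{C}=2$.

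The main obstacle is the cascade step: because $A^c$ and $B^c$ are interleaved rather than separated, \eqref{mixing2} cannot be applied directly to the pair $(A^c,B^c)$, so one is forced to strip the $2m$ elementary intervals off one at a time, and the delicate point is to check that the multiplicative errors then compound only as $(1\pm\delta_0)^{O(m)}$ and no worse. A logarithmic overlap $\ell=\Theta(\ln n)$ -- hence a logarithmic window size -- is exactly what is needed to dampen the $\Theta(n/\ln n)$ interfaces and keep $\|h(\sigma_{A^cB^c})\|_\infty$ bounded away from $\tfrac12$ uniformly in $n$.
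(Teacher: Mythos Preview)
Your argument is correct and follows the same overall strategy as the paper: introduce the $2m$ elementary intervals alternating between $A^c$ and $B^c$, use the mixing condition \eqref{eq:mixing} in cascade to split $\sigma_{A^cB^c}$ into a tensor product over these intervals, and then rejoin them into $\sigma_{A^c}\otimes\sigma_{B^c}$, picking up $(1\pm\delta_0)^{O(m)}$ factors that are controlled by taking $\ell=\Theta(\ln n)$.

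The one noteworthy difference is in the bookkeeping. The paper works with the asymmetric quantity $\|\sigma_{XZ}(\sigma_X\otimes\sigma_Z)^{-1}-\Id\|_\infty$, introduces the operators $\xi_k$ of \eqref{eq:definition_xi} and \eqref{eq:definition_xi_CD}, and telescopes $\|\prod_k\xi_k-\Id\|_\infty$ via triangle and submultiplicativity, arriving at $(\eta(\ell)+1)^{4m-3}-1$. You instead pass immediately to the two-sided operator inequality $(1-\delta_0)\,\sigma_X\!\otimes\!\sigma_Z\le\sigma_{XZ}\le(1+\delta_0)\,\sigma_X\!\otimes\!\sigma_Z$, which is preserved by partial trace and tensors multiplicatively; this makes the ``restriction to $C\cup D$'' step (the paper's passage from \eqref{eq:recursion_splitting_part} to \eqref{eq:recursion_splitting_part2}) completely transparent and slightly streamlines the cascade. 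Both routes give the same $O(m\delta_0)$ scaling and the same conclusion.
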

To conclude the proof of \eqref{eq:AT}, we recall \cite[Proposition 5]{bardet2020approximate}, in which it was proven that the conditional relative entropy between $\rho$ and $\sigma$ in a region $A \subset \Lambda$ is upper bounded by the relative entropy between $\rho$ and its conditional expectation onto $A$, i.e.
\begin{align}\label{eq:ineq_conditional_relative_entropies}
    D_A(\rho \| \sigma)  \leq D( \rho \| E_{A^*}(\rho)) \, .
\end{align}
This, together with Lemma \ref{lemma:globaltolocal}, allows us to conclude the following approximate tensorization:
\begin{lem}\label{globaltolocalprop}
In the notations of \Cref{lemma:globaltolocal}, we have that for any state $\rho\in\cD(\cH_\Lambda)$,
\begin{align}\label{eq:final_global_to_local}
    D(\rho\|\sigma)\le \mathcal{C} \sum_{i=1}^{m} \left[ D( \rho \| E_{A_i^*}(\rho)) + D( \rho \| E_{B_i^*}(\rho)) \right]\,.
\end{align}
\end{lem}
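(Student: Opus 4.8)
The plan is simply to chain the two ingredients that have just been assembled: the geometric approximate tensorization of \Cref{lemma:globaltolocal}, which distributes the global relative entropy over the covering $\{A_i,B_i\}_{i=1}^m$ at the cost of a uniform multiplicative constant $\mathcal{C}$, and the pointwise comparison \eqref{eq:ineq_conditional_relative_entropies} (that is, \cite[Proposition 5]{bardet2020approximate}), which upper bounds each conditional relative entropy $D_C(\rho\|\sigma)$ by the relative entropy of $\rho$ against the local Davies conditional expectation $E_{C^*}(\rho)$.

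Concretely, I would first apply \Cref{lemma:globaltolocal} to the state $\rho$ and the covering of $\Lambda$ by the intervals $A_i,B_i$ of length $\mathcal{O}(\ln n)$, obtaining an $n$-independent constant $\mathcal{C}\ge 1$ with
\begin{align*}
    D(\rho\|\sigma)\le \mathcal{C}\sum_{i=1}^m\big[D_{A_i}(\rho\|\sigma)+D_{B_i}(\rho\|\sigma)\big]\,.
\end{align*}
Then, for every $i$, I would invoke \eqref{eq:ineq_conditional_relative_entropies} on the regions $A_i$ and $B_i$ separately to obtain
\begin{align*}
    D_{A_i}(\rho\|\sigma)\le D(\rho\|E_{A_i^*}(\rho))\quad\text{and}\quad D_{B_i}(\rho\|\sigma)\le D(\rho\|E_{B_i^*}(\rho))\,.
\end{align*}
Summing these $2m$ inequalities and substituting into the previous display yields exactly \eqref{eq:final_global_to_local}.

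There is essentially no hard step here: all the analytic content sits in \Cref{lemma:globaltolocal} (proved in \Cref{sec:mixingcondition}) and in the cited inequality. The only point that deserves a line of justification is that \eqref{eq:ineq_conditional_relative_entropies} may legitimately be applied on each interval; this requires that $E_{A_i}$ be the projection onto $\operatorname{Ker}(\cL^D_{A_i})$ and that $\sigma$ be invariant under $E_{A_i^*}$, i.e.\ $E_{A_i^*}(\sigma)=\sigma$. The latter is immediate because $\cL^D_{A_i}$ is obtained by restricting the sum in \eqref{totalgenerator} to the subregion $A_i$, so $\sigma$ remains a fixed point and therefore lies in $\cD(E_{A_i})$; the same holds for $B_i$. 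Finally, when $D(\rho\|\sigma)=+\infty$ the claimed inequality is trivial, so one may freely assume $\supp(\rho)\subseteq\supp(\sigma)$ throughout.
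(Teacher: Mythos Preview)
Your proof is correct and follows exactly the paper's approach: combine \Cref{lemma:globaltolocal} with the pointwise bound \eqref{eq:ineq_conditional_relative_entropies} applied to each $A_i$ and $B_i$. The paper's proof is the one-line statement ``The result follows directly from \Cref{lemma:globaltolocal} and \eqref{eq:ineq_conditional_relative_entropies}'', and your write-up simply spells this out with the additional (harmless) remarks on invariance of $\sigma$ under $E_{A_i^*}$.
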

\begin{proof}
The result follows directly from \Cref{lemma:globaltolocal} and  \eqref{eq:ineq_conditional_relative_entropies}.
\end{proof}

\subsection{Quasi-local control of the constant}\label{sec:quasilocalcontrol}

In the next step of the proof, we need to reduce the conditional expectations in the relative entropies on the  right-hand side of \eqref{eq:final_global_to_local} to single-site conditional expectations in each of the sites composing the region where the latter was conditioning. Inspired by the work of \cite{laracuente2019quasi}, some of the authors provide in
\cite[Corollary 5.5]{gao2021spectral} that, in the case of tracial conditional expectations $\{\tilde{E}_B\}_{B\subseteq \Lambda}$ and for any $A \subset \Lambda$,
\begin{align}\label{eq:corollaryAT}
    D&(\rho\|\tilde{E}_{A*}(\rho))\le 4k_A\,\sum_{i\in A}\,D(\rho\|\tilde{E}_{i*}(\rho))\, ,
    \end{align}
whenever $k_A\in\mathbb{N}$ satisfies
\begin{align}
   \frac{1}{2}\,\tilde{E}_A\le_{\operatorname{cp}} \Big(\prod_{i\in A} \tilde{E}_i\Big)^{k_A}\le_{\operatorname{cp}} \frac{3}{2}\,\tilde{E}_A \,, \label{eq:orderinequality}
\end{align}
where $\prod_{i\in A} \tilde{E}_i$ is a product of conditional expectations in an arbitrary ordering, and $\le_{\operatorname{cp}}$ stands for the completely positive partial order. In \Cref{sec:localcontrol}, we make use of operator space theory methods together with spectral gap estimates to (i) extend \eqref{eq:corollaryAT} to non-tracial conditional expectations like the ones corresponding to the infinite time limit of local Davies semigroups, and (ii) further control the integer $k_A$ appearing in \eqref{eq:corollaryAT} (see \Cref{lemmatechnicalgeneral} for technical details).

\begin{lem}\label{lemmatechnical1}
 The approximate tensorization \eqref{eq:corollaryAT} is satisfied for
 \begin{align}\label{eq:kA}
 k_A=\bigg\lceil\frac{\ln(2d^{2|A\partial|}e^{2\beta|A\partial|J})}{-\ln \lambda}\bigg\rceil\,,
 \end{align}
where $J$ is the interaction strength of $H_\Lambda$ and
$$\lambda:=\norm{\prod_{i\in A}E_i-E_A:L_2(\sigma)\to L_2(\sigma)}<1\,.$$
is a constant independent of the system size $n=|\Lambda|$.
\end{lem}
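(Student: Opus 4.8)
The plan is to invoke the general machinery of \cite{gao2021spectral} (cited here as \Cref{lemmatechnicalgeneral}), which establishes that the approximate tensorization \eqref{eq:corollaryAT} holds for any $k_A\in\mathbb{N}$ satisfying the completely positive sandwich relation \eqref{eq:orderinequality}, and then to \emph{verify that the specific choice of $k_A$ in \eqref{eq:kA} indeed satisfies \eqref{eq:orderinequality}}. The core observation is that the operator $\prod_{i\in A}E_i - E_A$, which is self-adjoint on $L_2(\sigma)$ (once the product is suitably symmetrized, or by passing to $(\prod_i E_i)^*\prod_i E_i$), has $L_2(\sigma)\to L_2(\sigma)$ norm $\lambda<1$ strictly bounded away from $1$: this is a consequence of each $E_i$ being an orthogonal projection in $L_2(\sigma)$ onto the fixed-point space of $\cL^D_{i*}$, together with the fact that $\bigcap_{i\in A}\operatorname{Ran}(E_i) = \operatorname{Ran}(E_A)$, so that the product of projections converges geometrically to $E_A$. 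That $\lambda$ is independent of $n$ follows because $A$ has bounded size $O(\ln n)$ and the local Davies generators are translation-invariant, so the relevant spectral quantity depends only on $|A\partial|$, $d$, and $\beta$ — not on the ambient chain length. Hence $(\prod_{i\in A}E_i)^{k} \to E_A$ in $L_2(\sigma)$-operator norm at rate $\lambda^k$.

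The second ingredient is the conversion from an $L_2(\sigma)$-norm bound to a completely bounded order bound. Here is where the subalgebra-index estimates of \Cref{sec:indices} enter: a map that is small in $\|\cdot\|_{L_2(\sigma)\to L_2(\sigma)}$ can be promoted to a small $\|\cdot\|_{\cB(\cH)\to\cB(\cH)}$ (and indeed $\operatorname{cb}$) bound at the cost of a dimensional factor governed by $C_{\tau,\operatorname{cb}}$ of the relevant inclusion, and by \eqref{fromtracialtonontracial}, \eqref{eq:CtoCsigma} this factor is controlled by $\mu_{\min}(\sigma_{A\partial})^{-1}$ times a combinatorial constant $\le d^{2|A\partial|}$. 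Since $\sigma$ is the Gibbs state of a finite-range Hamiltonian of interaction strength $J$, the minimal eigenvalue of the reduced state on the boundary region obeys $\mu_{\min}(\sigma_{A\partial})^{-1}\le d^{|A\partial|}e^{2\beta |A\partial| J}$ (up to normalization). Combining, we get that $\big\|(\prod_{i\in A}E_i)^{k}-E_A\big\|_{\operatorname{cb}} \le d^{2|A\partial|}e^{2\beta |A\partial| J}\,\lambda^{k}$, and requiring this to be $\le 1/2$ — which, since $E_A$ is completely positive and unital, forces $(\prod_i E_i)^{k}$ into the cp-order interval $[\tfrac12 E_A,\tfrac32 E_A]$ — gives exactly the threshold $k_A = \lceil \ln(2d^{2|A\partial|}e^{2\beta|A\partial|J})/(-\ln\lambda)\rceil$ of \eqref{eq:kA}.

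The main obstacle I anticipate is the passage from the Hilbert-space ($L_2(\sigma)$) contraction estimate to the \emph{completely positive} order inequality \eqref{eq:orderinequality}, rather than just a norm inequality: one needs the difference $(\prod_i E_i)^k - E_A$ to be small in a norm that dominates the cb-norm and then to argue that a cb-small perturbation of the unital completely positive map $E_A$ stays cp-sandwiched between $\tfrac12 E_A$ and $\tfrac32 E_A$. This requires care with the non-tracial weighting $\sigma$ (the projections $E_i$ are self-adjoint with respect to the GNS or KMS inner product, not the trace inner product), which is precisely why the index bounds $C_{\tau,\operatorname{cb}}$ of \Cref{sec:indices} and the amalgamated $L_p$ / operator-space tools of \Cref{sec:opspace} are needed; I would handle this by writing $\prod_i E_i - E_A$ as $(\id - E_A)\circ(\prod_i E_i)$ and estimating its cb-norm via interpolation between the $L_1(\sigma)$ and $L_\infty$ endpoints, where the $L_\infty\to L_\infty$ (equivalently cb) bound picks up the dimensional factor $d^{2|A\partial|}e^{2\beta|A\partial|J}$ and the $L_2\to L_2$ bound supplies the decay $\lambda^k$. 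A secondary, more bookkeeping-level point is confirming the $n$-independence of $\lambda$: this uses translation invariance to reduce to a fixed finite problem on $A\partial$ together with the fact that the local Davies conditional expectations $E_i$ depend only on the restriction of $H_\Lambda$ to a bounded neighborhood of $i$.
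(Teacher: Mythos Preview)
Your overall strategy---invoke \Cref{lemmatechnicalgeneral}, control the cb-index via $\mu_{\min}(\sigma^{(A\partial)})^{-1}\le d^{|A\partial|}e^{2\beta|A\partial|J}$ and $C_{\operatorname{cb}}(\cB(\cH_{A\partial}):\cN_A)\le d^{|A\partial|}$, then solve $\lambda^k \cdot d^{2|A\partial|}e^{2\beta|A\partial|J}\le \tfrac12$ for $k$---is exactly what the paper does, and your identification of the $L_2$-to-cb passage as the delicate step is correct.

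There is, however, a genuine gap in your argument that $\lambda$ is a constant independent of $n$. You argue that by translation invariance and locality, $\lambda$ depends only on $|A\partial|$, $d$, and $\beta$, not on the ambient chain. But in the regime of interest $|A|=\mathcal{O}(\ln n)$, so $|A\partial|$ itself grows with $n$; saying $\lambda$ depends only on $|A\partial|$ does not prevent $\lambda\to 1$ as $|A|\to\infty$, which would make $k_A$ blow up and destroy the $\mathcal{O}(\ln n)$ scaling. Your appeal to ``product of projections converges geometrically to $E_A$'' gives $\lambda<1$ for each fixed $A$, but with a rate that a priori degrades as the number of projections increases.

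The paper closes this gap not by localization but by invoking two global inputs: the uniform positivity of the spectral gap $\lambda(\cL^D_\Lambda)>0$ for 1D commuting Davies generators proved in \cite{kastoryano2016quantum}, together with the detectability lemma \cite{aharonov2009detectability,anshu2016simple}, which gives
\[
\lambda^2\le \frac{1}{\lambda(\cL_\Lambda^D)/g^2+1}<1,
\]
where $g$ is the maximal number of $E_j$ that fail to commute with any given $E_i$. Since the $E_i$ have uniformly bounded support (controlled by the interaction range $r$), $g$ is finite and independent of both $|A|$ and $n$. This yields the required uniform bound $\lambda<1$. Your proof would be complete once you replace the translation-invariance heuristic with this detectability-lemma argument.
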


For $|A|=\mathcal{O}(\ln (n))$, the constant $k_A$ in \eqref{eq:corollaryAT} scales logarithmically with the system size $n$ if the constant $\lambda$ is independent of $n$. As we show in Section \ref{sec:localcontrol}, this is a direct consequence of the non-closure of the spectral gap proved for 1D commuting Gibbs samplers in \cite{kastoryano2016quantum} and the detectability lemma, which precisely relates the gap of a commuting, finite-range Gibbs sampler to $\lambda$ \cite{aharonov2009detectability,anshu2016simple,kastoryano2016quantum,gao2021spectral}. Moreover, the exponential dependence of $k_A$ on the inverse temperature $\beta$ causes the scaling of $\alpha_n=e^{-\mathcal{O}(\beta)}$ stated in \Cref{thm:main_result_formal}.

\subsection{Merging global and quasi-local analysis}
Theorem \ref{thm:main_result_formal} is now a simple consequence of the reasoning provided in the last two subsections, and of Lemmas \ref{globaltolocalprop} and \ref{lemmatechnical1} in particular. Indeed, putting both lemmas together, we have proven that
\begin{align}\label{eq:ATtot}
    D(\rho\|\sigma)\le \mathcal{O}(\ln(n))\sum_{i\in\Lambda}D(\rho\|E_{i*}(\rho))\,.
\end{align}
Moreover, it was proved in \cite[Theorem 3.3]{gao2021spectral} that the local generators $\cL_k^D$ always satisfy a complete modified logarithmic Sobolev inequality (cf. \eqref{eq:CMLSI}). That is, there exists a constant $\alpha_0> 0$ such that for all $i\in \Lambda$ and any state $\rho\in\cD(\cH_\Lambda)$,
\begin{align}\label{eq:onesiteMLSI}
 \alpha_0   \,D(\rho\|E_{i*}(\rho))\le \operatorname{EP}_{i}(\rho)\,.
\end{align}
Combining the bounds \eqref{eq:onesiteMLSI} and \eqref{eq:ATtot} together with the additivity of the entropy production \eqref{eq:additivity}, we conclude the existence of a constant $\alpha_n=\alpha_0\,\Omega(\ln(n)^{-1})=\Omega(\ln(n)^{-1})$ such that
\begin{align*}
\forall\,\rho \in \cD (\cH_\Lambda),\,\alpha_n  \, D(\rho\|\sigma)\le \operatorname{EP}_\Lambda(\rho)~~~~~\Rightarrow~~~~~\forall\,\rho \in \cD (\cH_\Lambda),\, D(e^{t\cL^D_{\Lambda*}}(\rho)\|\sigma)\le e^{-\alpha_nt}D(\rho\|\sigma)\,.
\end{align*}
This concludes the proof of Theorem \ref{thm:main_result_formal}.
\begin{flushright}
   \qedsymbol
\end{flushright}

\section{Proof of the mixing condition (Lemma \ref{lemma:globaltolocal})}\label{sec:mixingcondition}

This section is devoted to the proof of Lemma \ref{lemma:globaltolocal}. Let us first recall the construction devised in Section \ref{subsec:global-to-local} for the covering of $\Lambda \subset \mathbb{Z}$. We consider two regions $A$ and $B$ composed of small intervals, $A:= \cup_{i=1}^m A_i$, $B:= \cup_{j=1}^m B_j$ such that $A\cup B$ covers the spin chain $\Lambda$ and the following holds for a certain $\ell \in \mathbb{N}$:
\begin{itemize}
\item $A_i \cap A_j = B_i \cap B_j = \emptyset$ for all $1 \leq i, j \leq m$.
    \item $\abs{A_i}=\abs{B_j} = 2 (r + \ell) -1 \, $ for every $1\leq i, j \leq m$, where $r$ is the range of the interaction of the Hamiltonian.
    \item $\abs{A_i \cap B_i} = \abs{B_i \cap A_{i+1}}= \ell \, $ for every $1 \leq i \leq m-1$.
\end{itemize}
Next, let us write $C:= B^c$ and $D:= A^c$. Note that both of them are composed of $m$ disjoint segments, namely $C:= \cup_{i=1}^m \, C_i$ and $D:= \cup_{i=1}^m \, D_i$, respectively. Moreover, for every $i= 1 , \dots , m$ (resp. $i=1, \ldots, m-1$) let us denote by $E_i$, resp. $F_i$, the connected set that separates $C_i$ from $D_i$, resp. $D_i$ from $C_{i+1}$. See this construction in Figure \ref{fig:4}.
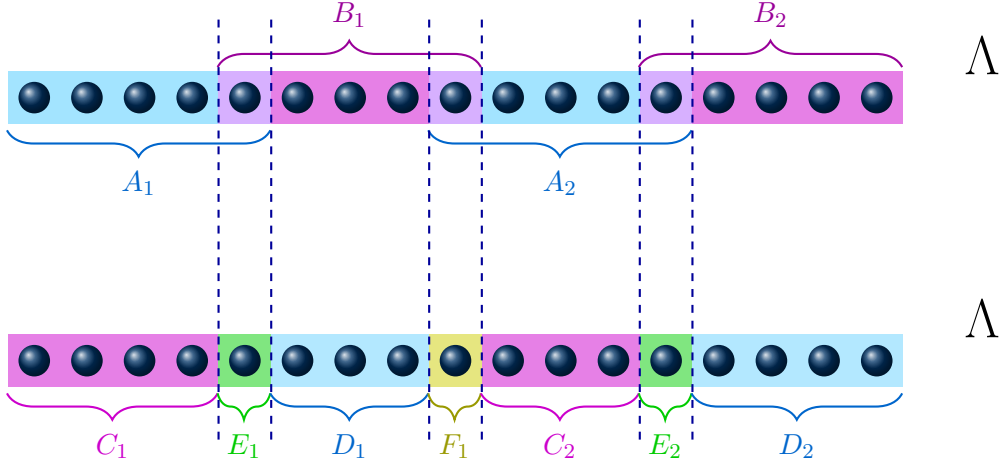
\begin{figure}[H]
\begin{center}
   \begin{tikzpicture}[scale=0.7]
\Block[4,lightblue!60!white,1];
\draw [thick,midblue,decorate,decoration={brace,amplitude=10pt,mirror},xshift=-0.5pt,yshift=-0.6pt](-0.5,-0.6) -- (4.5,-0.6) node[black,midway,yshift=-0.7cm] { \textcolor{midblue}{$A_1$}};
\begin{scope}[xshift=4cm]
\Block[1,lightpurple!35!white,1];
\end{scope}
\begin{scope}[xshift=5cm]
\Block[3,lightpink!50!white,1];
\draw [thick,midpink,decorate,decoration={brace,amplitude=10pt},xshift=-0.5pt,yshift=-0.6pt](-1.5,0.6) -- (3.5,0.6) node[black,midway,yshift=+0.7cm] { \textcolor{midpink}{$B_1$}};
\end{scope}
\begin{scope}[xshift=8cm]
\Block[1,lightpurple!35!white,1];
\end{scope}
\begin{scope}[xshift=9cm]
\Block[3,lightblue!60!white,1];
\draw [thick,midblue,decorate,decoration={brace,amplitude=10pt,mirror},xshift=-0.5pt,yshift=-0.6pt](-1.5,-0.6) -- (3.5,-0.6) node[black,midway,yshift=-0.7cm] { \textcolor{midblue}{$A_2$}};
\end{scope}
\begin{scope}[xshift=12cm]
\Block[1,lightpurple!35!white,1];
\end{scope}
\begin{scope}[xshift=13cm]
\Block[4,lightpink!50!white,1];
\draw [thick,midpink,decorate,decoration={brace,amplitude=10pt},xshift=-0.5pt,yshift=-0.6pt](-1.5,0.6) -- (3.5,0.6) node[black,midway,yshift=+0.7cm] { \textcolor{midpink}{$B_2$}};
\end{scope}
\begin{scope}[yshift=-5cm]
\Block[4,lightpink!50!white,1];
\draw [thick,lightpink,decorate,decoration={brace,amplitude=10pt,mirror},xshift=-0.5pt,yshift=-0.6pt](-0.5,-0.6) -- (3.5,-0.6) node[black,midway,yshift=-0.7cm] { \textcolor{lightpink}{$C_1$}};
\end{scope}
\begin{scope}[xshift=4cm,yshift=-5cm]
\Block[1,midgreen!50!white,1];
\draw [thick,midgreen,decorate,decoration={brace,amplitude=10pt,mirror},xshift=-0.5pt,yshift=-0.6pt](-0.5,-0.6) -- (0.5,-0.6) node[black,midway,yshift=-0.7cm] { \textcolor{midgreen}{$E_1$}};
\end{scope}
\begin{scope}[xshift=5cm,yshift=-5cm]
\Block[3,lightblue!50!white,1];
\draw [thick,midblue,decorate,decoration={brace,amplitude=10pt,mirror},xshift=-0.5pt,yshift=-0.6pt](-0.5,-0.6) -- (2.5,-0.6) node[black,midway,yshift=-0.7cm] { \textcolor{midblue}{$D_1$}};
\end{scope}
\begin{scope}[xshift=8cm,yshift=-5cm]
\Block[1,midyellow!50!white,1];
\draw [thick,darkyellow,decorate,decoration={brace,amplitude=10pt,mirror},xshift=-0.5pt,yshift=-0.6pt](-0.5,-0.6) -- (0.5,-0.6) node[black,midway,yshift=-0.7cm] { \textcolor{darkyellow}{$F_1$}};
\end{scope}
\begin{scope}[xshift=9cm,yshift=-5cm]
\Block[3,lightpink!50!white,1];
\draw [thick,lightpink,decorate,decoration={brace,amplitude=10pt,mirror},xshift=-0.5pt,yshift=-0.6pt](-0.5,-0.6) -- (2.5,-0.6) node[black,midway,yshift=-0.7cm] { \textcolor{lightpink}{$C_2$}};
\end{scope}
\begin{scope}[xshift=12cm,yshift=-5cm]
\Block[1,midgreen!50!white,1];
\draw [thick,midgreen,decorate,decoration={brace,amplitude=10pt,mirror},xshift=-0.5pt,yshift=-0.6pt](-0.5,-0.6) -- (0.5,-0.6) node[black,midway,yshift=-0.7cm] { \textcolor{midgreen}{$E_2$}};
\end{scope}
\begin{scope}[xshift=13cm,yshift=-5cm]
\Block[4,lightblue!50!white,1];
\draw [thick,midblue,decorate,decoration={brace,amplitude=10pt,mirror},xshift=-0.5pt,yshift=-0.6pt](-0.5,-0.6) -- (3.5,-0.6) node[black,midway,yshift=-0.7cm] { \textcolor{midblue}{$D_2$}};
\draw [dashed,darkblue,thick](-9.5,-1.5) -- (-9.5,6.5) ;
\draw [dashed,darkblue,thick](-8.5,-1.5) -- (-8.5,6.5) ;
\draw [dashed,darkblue,thick](-5.5,-1.5) -- (-5.5,6.5) ;
\draw [dashed,darkblue,thick](-4.5,-1.5) -- (-4.5,6.5) ;
\draw [dashed,darkblue,thick](-1.5,-1.5) -- (-1.5,6.5) ;
\draw [dashed,darkblue,thick](-0.5,-1.5) -- (-0.5,6.5) ;
\end{scope}
\node at (18,0.8) {\huge $\Lambda$};
\node at (18,-4.2) {\huge $\Lambda$};
\end{tikzpicture}
\end{center}
  \caption{Notation introduced for the splitting of $\Lambda$ into non-overlapping regions $\{ C_i\}_{i=1}^m, \{ D_i\}_{i=1}^m, \{ E_i\}_{i=1}^m$ and $\{ F_i\}_{i=1}^{m-1}$. Here we are taking $m=2$ for simplicity.}
  \label{fig:4}
\end{figure}
 \noindent Because of the definition of $A$ and $B$, it is clear that
\begin{itemize}
\item $\abs{C_i }=\abs{D_i}=2 r -1 $ for every $1 \leq i \leq m$.
    \item $\abs{E_i}= \abs{F_j} = \ell $, for every $1 \leq i \leq m$, $1 \leq j \leq m-1$.
\end{itemize}
Note that, with this new notation, we aim to prove the following inequality
\begin{equation}\label{eq:mixingconditionCD}
\norm{ \left(\sigma_C^{-1/2}  \otimes \sigma_D^{-1/2} \right) \; \sigma_{CD} \;  \left(\sigma_C^{-1/2}  \otimes \sigma_D^{-1/2} \right) - \identity_{CD}}_\infty  \leq \mathcal{C}  < 1/2 \, ,
\end{equation}
for $\mathcal{C}$ independent of $\Lambda$, whenever d$(C,D) = \mathcal{O} (\log n)$. The proof of this result follows from a repeated use of the following estimate:
\begin{equation}\label{eq:result_andreas_antonio}
\norm{ \sigma_{XZ} \, (\sigma_{X}^{-1} \otimes \sigma_Z^{-1}) - \identity_{XZ}}_\infty \leq \mathcal{K} \operatorname{e}^{- \gamma \ell} \, ,
\end{equation}
 for finite intervals $XYZ$ with $Y$ shielding $X$ from $Z$ and $\abs{Y}=\ell$. Let us denote $\eta(\ell):= \mathcal{K} \operatorname{e}^{- \gamma \ell}$ hereafter for simplicity. As the aforementioned inequality only holds for constructions such as the one presented in Figure \ref{fig:3} (i.e. $\Lambda$ consisting of three connected parts  $X, Y$ and $Z$ such that $Y$ shields $X$ from $Z$ and the given estimate scales with the size of $Y$) while the construction devised in Figure \ref{fig:4} consists of a more complex structure, we need to use \eqref{eq:result_andreas_antonio} recursively to prove the statement of the proposition. We divide the proof into two parts:
\begin{enumerate}
\item \textbf{Splitting step:} In this part, we use \eqref{eq:result_andreas_antonio} repeatedly to approximate
\begin{equation*}
\sigma_{CD} \sim \underset{i=1}{\overset{m}{\bigotimes}} \left( \sigma_{C_i} \otimes \sigma_{D_i} \right) \, .
\end{equation*}
\item \textbf{Joining step: } Now, we use \eqref{eq:result_andreas_antonio} again to entangle $\sigma$ in $C$ and $D$ separately, namely
\begin{equation*}
 \underset{i=1}{\overset{m}{\bigotimes}} \left( \sigma_{C_i} \otimes \sigma_{D_i} \right) \sim \sigma_C \otimes \sigma_D \,  .
\end{equation*}
\end{enumerate}
Note that the approximations stated above are in the sense that the distance of the product of one of the terms by the inverse of the other in operator norm decays exponentially with the distance between $C$ and $D$ (i.e. the size of the overlap between $A$ and $B$, denoted by $\ell$). Therefore, these two steps together allow to conclude the proof of the lemma.

\vspace{0.2cm}
\subsection{ Splitting step}
Given the construction introduced above,  let us further denote, for $j=1, \ldots , 2m-1$, the intervals obtained by considering the convex hull of joining the first $j$ segments from the set $\{ C_i, D_i \}_{i=1}^m$, namely
\begin{equation}\label{eq:def_Xi_Yi_odd}
X_j := \left( \underset{k=1}{\overset{\frac{j-1}{2}}{\bigcup}} \, C_k \cup E_k \cup D_k \cup F_k   \right) \cup \; C_{\frac{j+1}{2}} \text{ and } Z_j := D_{\frac{j+1}{2}}  \; \cup \left( \underset{k=\frac{j+3}{2}}{\overset{n}{\bigcup}} \, C_k \cup E_k \cup D_k \cup F_{k-1}   \right)   ,
\end{equation}
if $j $ is odd, and
\begin{equation}\label{eq:def_Xi_Yi_even}
X_j := \left( \underset{k=1}{\overset{\frac{j}{2}}{\bigcup}} \, C_k \cup E_k \cup D_k \cup F_k   \right) \setminus F_{\frac{j}{2}}  \text{ and } Z_j :=  \left( \underset{k=\frac{j}{2}+1}{\overset{n}{\bigcup}} \, C_k \cup E_k \cup D_k \cup F_{k-1}   \right)  \setminus F_{\frac{j}{2}}  ,
\end{equation}
if $j $ is even. Note that both constructions, as shown in Figure \ref{fig:5} below, are in the form of Figure \ref{fig:3}, where the role of $Y$ is played in each case by the unique $E_i$ or $F_i$ in between the corresponding $X_j$ and $Z_j$.
\begin{figure}[H]
\begin{center}
   \begin{tikzpicture}[scale=0.7]
\Block[4,lightpink!50!white,1];
\draw [thick,lightpink,decorate,decoration={brace,amplitude=10pt,mirror},xshift=-0.5pt,yshift=-0.6pt](-0.5,-0.6) -- (3.5,-0.6) node[black,midway,yshift=-0.7cm] { \textcolor{lightpink}{$C_1$}};
\draw [thick,darkpurple,decorate,decoration={brace,amplitude=10pt},xshift=-0.5pt,yshift=-0.6pt](-0.5,0.6) -- (11.5,0.6) node[black,midway,yshift=0.7cm] { \textcolor{darkpurple}{$X_3$}};
\begin{scope}[xshift=4cm]
\Block[1,midgreen!50!white,1];
\draw [thick,midgreen,decorate,decoration={brace,amplitude=10pt,mirror},xshift=-0.5pt,yshift=-0.6pt](-0.5,-0.6) -- (0.5,-0.6) node[black,midway,yshift=-0.7cm] { \textcolor{midgreen}{$E_1$}};
\end{scope}
\begin{scope}[xshift=5cm]
\Block[3,lightblue!50!white,1];
\draw [thick,midblue,decorate,decoration={brace,amplitude=10pt,mirror},xshift=-0.5pt,yshift=-0.6pt](-0.5,-0.6) -- (2.5,-0.6) node[black,midway,yshift=-0.7cm] { \textcolor{midblue}{$D_1$}};
\end{scope}
\begin{scope}[xshift=8cm]
\Block[1,midyellow!50!white,1];
\draw [thick,darkyellow,decorate,decoration={brace,amplitude=10pt,mirror},xshift=-0.5pt,yshift=-0.6pt](-0.5,-0.6) -- (0.5,-0.6) node[black,midway,yshift=-0.7cm] { \textcolor{darkyellow}{$F_1$}};
\end{scope}
\begin{scope}[xshift=9cm]
\Block[3,lightpink!50!white,1];
\draw [thick,lightpink,decorate,decoration={brace,amplitude=10pt,mirror},xshift=-0.5pt,yshift=-0.6pt](-0.5,-0.6) -- (2.5,-0.6) node[black,midway,yshift=-0.7cm] { \textcolor{lightpink}{$C_2$}};
\end{scope}
\begin{scope}[xshift=12cm]
\Block[1,midgreen!50!white,1];
\draw [thick,midgreen,decorate,decoration={brace,amplitude=10pt,mirror},xshift=-0.5pt,yshift=-0.6pt](-0.5,-0.6) -- (0.5,-0.6) node[black,midway,yshift=-0.7cm] { \textcolor{midgreen}{$E_2$}};
\end{scope}
\begin{scope}[xshift=13cm]
\Block[4,lightblue!50!white,1];
\draw [thick,midblue,decorate,decoration={brace,amplitude=10pt,mirror},xshift=-0.5pt,yshift=-0.6pt](-0.5,-0.6) -- (3.5,-0.6) node[black,midway,yshift=-0.7cm] { \textcolor{midblue}{$D_2$}};
\draw [thick,darkpurple,decorate,decoration={brace,amplitude=10pt},xshift=-0.5pt,yshift=-0.6pt](-0.5,0.6) -- (3.5,0.6) node[black,midway,yshift=0.7cm] { \textcolor{darkpurple}{$Z_3$}};
\end{scope}
\node at (18,0.8) {\huge $\Lambda$};
\end{tikzpicture}
\end{center}
  \caption{Definition of $X_j$ and $Z_j$ from $C_i$, $D_i$, $E_i$ and $F_i$. }
  \label{fig:5}
\end{figure}
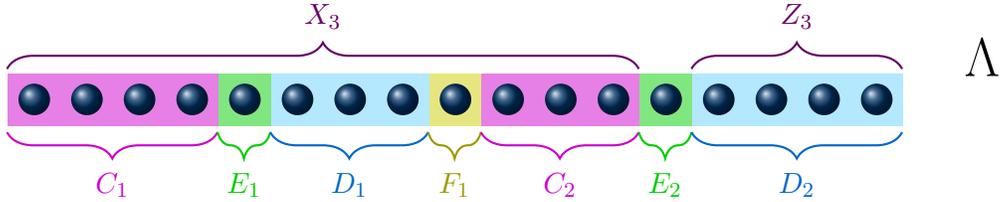
\noindent Moreover, note that in both cases we are in position to apply \eqref{eq:result_andreas_antonio} to $X_j$ and $Z_j$. Then, the following holds for every $j=1, \ldots, 2m-1$:
\begin{equation}\label{eq:recursion_splitting_part}
\norm{\sigma_{X_j Z_j} \left( \sigma_{X_j} \otimes \sigma_{Z_j} \right)^{-1} - \identity_{X_j Z_j}}_\infty \leq  \eta (\ell) \, .
\end{equation}
In particular, the previous expression also holds when restricting $X_j$ and $Z_j$ to $C \cup D$, i.e.
\begin{equation}\label{eq:recursion_splitting_part2}
\norm{\sigma_{CD} \left( \sigma_{(C \cup D) \cap  X_j} \otimes \sigma_{(C \cup D) \cap Z_j} \right)^{-1} - \identity_{CD}}_\infty \leq  \eta (\ell) \, .
\end{equation}
Let us further define for simplicity, for $j=1, \ldots , 2m$, the intervals $ R_j $ from the set $\{ C_i, D_i \}_{i=1}^m$ from left to right, such that
\begin{equation}\label{eq:def_Rj}
R_j := C_{(j+1)/2 } \quad\text{ if $j $ is odd, and } \quad \quad R_j := D_{j/2} \quad \text{if $j $ is even .}
\end{equation}
Furthermore, let us write, for $1 \leq k \leq 2m$, the union of the of all the segments $\{ R_j \}_{j=1}^{2m}$ except for the first $k-1$, namely:
\begin{equation}\label{eq:def_R(k)}
    R^{(k)}:= \underset{i=k}{\overset{2m}{\bigcup}} \,  R_i \, .
\end{equation}
 Then, \eqref{eq:def_R(k)} jointly with \eqref{eq:recursion_splitting_part2} imply  the following inequality for each $1 \leq k \leq 2m-1$:
\begin{equation}\label{eq:recursion_splitting_part3}
\norm{ \left( \bigg( \underset{i=1}{\overset{k-1}{\bigotimes}} \; \sigma_{R_i} \bigg)  \otimes \sigma_{R^{(k)}} \right) \left( \bigg( \underset{i=1}{\overset{k}{\bigotimes}} \; \sigma_{R_i} \bigg) \otimes \sigma_{R^{(k+1)}}  \right)^{-1} - \identity_{CD}}_\infty \leq   \eta(\ell) \, ,
\end{equation}
Let us write for every  $2 \leq k \leq 2m-1$
\begin{equation}\label{eq:definition_xi}
    \xi_k := \left( \bigg( \underset{i=1}{\overset{k-1}{\bigotimes}} \; \sigma_{R_i} \bigg)  \otimes \sigma_{R^{(k)}} \right) \left( \bigg( \underset{i=1}{\overset{k}{\bigotimes}} \; \sigma_{R_i} \bigg) \otimes \sigma_{R^{(k+1)}}  \right)^{-1}  \, ,
\end{equation}
and $\xi_1 := \sigma_{CD} \left(  \sigma_{R_1} \otimes \sigma_{R^{(1)}} \right)^{-1} $.  Hence, it is clear that
\begin{align*}
    \norm{ \sigma_{CD} \left(  \bigotimes_{k=1}^{2m} \, \sigma_{R_k}\right)^{-1} - \identity_{CD}  }_\infty & =    \norm{ \sigma_{CD} \left(  \sigma_{R_1} \otimes \sigma_{R^{(1)}} \right)^{-1} \left(  \sigma_{R_1} \otimes \sigma_{R^{(1)}} \right) \ldots  \left(  \bigotimes_{k=1}^{2m} \, \sigma_{R_k}\right)^{-1} - \identity_{CD}  }_\infty \\
    & = \norm{ \underset{k=1}{\overset{2m-1}{\prod}} \, \xi_k - \identity_{CD} }_\infty \, .
\end{align*}
Now, note that
\begin{align*}
    \norm{ \underset{k=1}{\overset{2m-1}{\prod}} \, \xi_k - \identity_{CD} }_\infty &= \norm{( \xi_1 - \identity_{CD}) \underset{k=2}{\overset{2m-1}{\prod}} \, \xi_k +  \underset{k=2}{\overset{2m-1}{\prod}} \, \xi_k   - \identity_{CD} }_\infty \\
    & \leq \norm{ \xi_1 - \identity_{R_1}}_\infty \norm{\underset{k=2}{\overset{2m-1}{\prod}} \, \xi_k  }_\infty  + \norm{ \underset{k=2}{\overset{2m-1}{\prod}} \, \xi_k - \identity_{R^{(1)}}}_\infty  \\
    & \leq \eta(\ell) \underset{k=2}{\overset{2m-1}{\prod}} \norm{\xi_k}_\infty  +  \norm{ \underset{k=2}{\overset{2m-1}{\prod}} \, \xi_k - \identity_{R^{(1)}}}_\infty   \, ,
\end{align*}
where we have used \eqref{eq:recursion_splitting_part3} in the last inequality, as well as triangle inequality and submultiplicativy of Schatten norms repeatedly. We further estimate each of the terms in the product of the first summand of the right-hand side by:
\begin{equation}\label{eq:bound_norm_xis_eta}
    \norm{\xi_k}_\infty \leq \norm{\xi_k - \identity_{R^{(k)}}}_\infty + 1 \leq \eta(\ell) + 1 \, .
\end{equation}
Therefore,
\begin{equation*}
        \norm{ \underset{k=1}{\overset{2m-1}{\prod}} \, \xi_k - \identity_{CD} }_\infty \leq \eta (\ell) \left(  \eta(\ell) + 1 \right)^{2m-2} +  \norm{ \underset{k=2}{\overset{2m-1}{\prod}} \, \xi_k - \identity_{R^{(1)}}}_\infty  \, .
\end{equation*}
Repeating the same procedure $2m - 2$ times on the last term in the right-hand side, we obtain
\begin{align}\label{eq:bound_product_xi}
     \norm{ \underset{k=1}{\overset{2m-1}{\prod}} \, \xi_k - \identity_{CD} }_\infty & \leq \eta(\ell) \underset{k=0}{\overset{2m-2}{\sum}} \left(  \eta(\ell) + 1 \right)^{k}  \nonumber \\
     & = \left(  \eta(\ell) + 1 \right)^{2m-1}  - 1 \, .
\end{align}

\subsection{ Joining step}

In the first part of the proof, we have provided an estimate for the distance of $\sigma_{CD}$ from being a tensor product between all the segments of the form $\{ C_i , D_i \}_{i=1}^{m}$. Now, we need to approximate:
\begin{equation*}
\underset{i=1}{\overset{m}{\bigotimes}} \, \sigma_{C_i} \sim \sigma_C \; \; , \; \; \underset{i=1}{\overset{m}{\bigotimes}} \,\sigma_{D_i} \sim \sigma_D \, .
\end{equation*}
The idea followed here is similar to that of the previous step, but in a reversed direction. Let us denote, for every $1 \leq k \leq m$,
\begin{equation}\label{eq:def_C(k)D(k)}
    C^{(k)}:= \underset{i=1}{\overset{k}{\bigcup}}  \, C_i \; \;  , \quad \;   D^{(k)}:= \underset{i=1}{\overset{k}{\bigcup}} \, D_i \, .
\end{equation}
Similarly to \eqref{eq:recursion_splitting_part3}, it is clear that the following holds for every $1 \leq k \leq m-1$,
\begin{equation}\label{eq:recursion_joining_part_C}
\norm{ \left( \sigma_{C^{(k-1)}}  \otimes  \bigg( \underset{i=k}{\overset{m}{\bigotimes}} \; \sigma_{C_i} \bigg)  \right) \left( \sigma_{C^{(k)}} \otimes  \bigg( \underset{i=k+1}{\overset{m}{\bigotimes}} \; \sigma_{C_i} \bigg)   \right)^{-1} - \identity_{C}}_\infty \leq   \eta(\ell) \, ,
\end{equation}
and analogously for $D$. Note that, in this case, we actually have $\eta(2 (r + \ell) - 1)$, since that is the distance between every two segments $C_i$ and $C_{i+1}$. However, since $\eta$ is monotonically decreasing, we just consider $\eta(\ell)$ as an upper bound for the previous norm for simplicity. Let us denote
\begin{align}\label{eq:definition_xi_CD}
    \xi^C_k & := \left( \sigma_{C^{(k-1)}}  \otimes  \bigg( \underset{i=k}{\overset{m}{\bigotimes}} \; \sigma_{C_i} \bigg)  \right) \left( \sigma_{C^{(k)}} \otimes  \bigg( \underset{i=k+1}{\overset{m}{\bigotimes}} \; \sigma_{C_i} \bigg)   \right)^{-1}  \, , \\
        \xi^D_k & := \left( \sigma_{D^{(k-1)}}  \otimes  \bigg( \underset{i=k}{\overset{m}{\bigotimes}} \; \sigma_{D_i} \bigg)  \right) \left( \sigma_{D^{(k)}} \otimes  \bigg( \underset{i=k+1}{\overset{m}{\bigotimes}} \; \sigma_{D_i} \bigg)   \right)^{-1}   \, .
\end{align}
Then, we have
\begin{align*}
    \norm{\left( \underset{i=1}{\overset{m}{\bigotimes}} \left( \sigma_{C_i} \otimes \sigma_{D_i} \right) \right) \left( \sigma_C \otimes \sigma_D \right)^{-1} - \identity_{CD}}_\infty & =     \norm{\left( \bigg( \underset{i=1}{\overset{m}{\bigotimes}} \, \sigma_{C_i}  \bigg) \sigma_C^{-1} \right) \otimes \left(  \bigg( \underset{i=1}{\overset{m}{\bigotimes}} \, \sigma_{D_i} \bigg) \sigma_D^{-1}  \right)  - \identity_{CD}}_\infty \\
    &= \norm{ \left( \underset{k=1}{\overset{m-1}{\prod}} \, \xi^C_k \right) \otimes \left( \underset{k=1}{\overset{m-1}{\prod}} \, \xi^D_k \right) - \identity_{CD}}_\infty
\end{align*}
Next, we need to separate the norms of the difference between each of the terms $\xi_k^{C,D}$ and identity from the others, as we did in the previous step. The main difference now lies in the fact that we need to do it for both $C$ and $D$. We first fix the terms in $D$ and work on the terms with support on $C$:
\begin{align*}
    & \norm{ \left( \underset{k=1}{\overset{m-1}{\prod}} \, \xi^C_k \right) \otimes \left( \underset{k=1}{\overset{m-1}{\prod}} \, \xi^D_k \right) - \identity_{CD}}_\infty \\
    & \phantom{asdasdasdasd}= \norm{ (\xi_1^C - \identity_{CD})\left( \underset{k=2}{\overset{m-1}{\prod}} \, \xi^C_k \right) \otimes \left( \underset{k=1}{\overset{m-1}{\prod}} \, \xi^D_k \right) +  \left( \underset{k=2}{\overset{m-1}{\prod}} \, \xi^C_k \right) \otimes \left( \underset{k=1}{\overset{m-1}{\prod}} \, \xi^D_k \right) - \identity_{CD}}_\infty \\
    & \phantom{asdasdasdasd}\leq \norm{\xi_1^C - \identity_{C^{(1)}} }_\infty   \norm{\underset{k=2}{\overset{m-1}{\prod}} \, \xi^C_k}_\infty \norm{\underset{k=1}{\overset{m-1}{\prod}} \, \xi^D_k }_\infty  + \norm{   \left( \underset{k=2}{\overset{m-1}{\prod}} \, \xi^C_k \right) \otimes \left( \underset{k=1}{\overset{m-1}{\prod}} \, \xi^D_k \right) - \identity_{\Lambda \setminus C^{(1)}}}_\infty \\
    & \phantom{asdasdasdasd}\leq \eta(\ell) \underset{k=2}{\overset{m-1}{\prod}} \norm{\xi^C_k}_\infty \underset{k=1}{\overset{m-1}{\prod}} \norm{\xi^D_k}_\infty + \norm{   \left( \underset{k=2}{\overset{m-1}{\prod}} \, \xi^C_k \right) \otimes \left( \underset{k=1}{\overset{m-1}{\prod}} \, \xi^D_k \right) - \identity_{\Lambda \setminus C^{(1)}}}_\infty \, ,
\end{align*}
where we have used \eqref{eq:recursion_joining_part_C} as well as triangle inequality and submultiplicativity for the operator norm. Therefore,  using now an analogue of \eqref{eq:bound_norm_xis_eta} for $\xi_k^{C,D}$, we obtain
\begin{equation*}
    \norm{ \left( \underset{k=1}{\overset{m-1}{\prod}} \, \xi^C_k \right) \otimes \left( \underset{k=1}{\overset{m-1}{\prod}} \, \xi^D_k \right) - \identity_{CD}}_\infty \leq   \eta(\ell) \left( \eta(\ell)  + 1 \right)^{2m -3 } + \norm{   \left( \underset{k=2}{\overset{m-1}{\prod}} \, \xi^C_k \right) \otimes \left( \underset{k=1}{\overset{m-1}{\prod}} \, \xi^D_k \right) - \identity_{\Lambda \setminus C^{(1)}}}_\infty \, .
\end{equation*}
By repeating the same procedure $m-2$ times on the terms with support on $C$, we get
\begin{equation*}
    \norm{\left( \underset{i=1}{\overset{m}{\bigotimes}} \left( \sigma_{C_i} \otimes \sigma_{D_i} \right) \right) \left( \sigma_C \otimes \sigma_D \right)^{-1} - \identity_{CD}}_\infty \leq  \eta(\ell) \underset{k=m-1}{\overset{2m-3}{\sum}} \left(  \eta(\ell) + 1 \right)^{k}   + \norm{   \left( \underset{k=1}{\overset{m-1}{\prod}} \, \xi^D_k \right) - \identity_{D}}_\infty \, .
\end{equation*}
Now, following the same idea for the terms on $D$, we can clearly conclude:
\begin{align}\label{eq:final_joining_step}
    \norm{\left( \underset{i=1}{\overset{m}{\bigotimes}} \left( \sigma_{C_i} \otimes \sigma_{D_i} \right) \right) \left( \sigma_C \otimes \sigma_D \right)^{-1} - \identity_{CD}}_\infty & \leq  \eta(\ell) \underset{k=m-1}{\overset{2m-3}{\sum}} \left(  \eta(\ell) + 1 \right)^{k}   + \eta(\ell) \underset{k=0}{\overset{m-2}{\sum}}  \left(  \eta(\ell) + 1 \right)^{k} \, \nonumber \\
    &=  \eta(\ell) \underset{k=0}{\overset{2m-3}{\sum}}  \left(  \eta(\ell) + 1 \right)^{k} \nonumber \\
    & = \left(  \eta(\ell) + 1 \right)^{2m-2} -1 \, .
\end{align}

\subsection{Merging both steps}

To conclude the proof of Lemma \ref{lemma:globaltolocal}, we need to combine \eqref{eq:bound_product_xi} and \eqref{eq:final_joining_step}. We also need the following estimate on the norm of the difference of a product of observables $X_1, X_2$ and the identity:
\begin{equation}\label{eq:estimate_infty_norm_product}
\norm{X_1 X_2 - \identity}_\infty \leq \norm{X_1 - \identity}_\infty \norm{X_2 - \identity}_\infty + \norm{X_1 - \identity}_\infty +\norm{X_2 - \identity}_\infty \, .
\end{equation}
With this at hand, we can prove
\begin{align*}
    & \norm{\sigma_{CD} (\sigma_C^{-1} \otimes \sigma_D^{-1})- \identity_{CD}}_\infty \\
    & \phantom{asda}=  \Bigg\|\underbrace{\sigma_{CD} \left( \underset{i=1}{\overset{m }{\bigotimes}} (\sigma_{C_i} \otimes \sigma_{D_i})\right)}_{=: \Xi_1} \underbrace{\left( \underset{i=1}{\overset{m }{\bigotimes}} (\sigma_{C_i} \otimes \sigma_{D_i})\right)^{-1} (\sigma_C^{-1} \otimes \sigma_D^{-1})}_{=: \Xi_2}- \identity_{CD} \Bigg\|_\infty \\
    & \phantom{asda}\leq \norm{\Xi_1 - \identity_{CD}}_\infty \norm{\Xi_2 - \identity_{CD}}_\infty +\norm{\Xi_1 - \identity_{CD}}_\infty +\norm{\Xi_2 - \identity_{CD}}_\infty \\
    & \phantom{asda}\leq \left[ \left(  \eta(\ell) + 1 \right)^{2m-1} -1 \right] \left[ \left(  \eta(\ell) + 1 \right)^{2m-2} -1 \right] + \left[ \left(  \eta(\ell) + 1 \right)^{2m-1} -1 \right] + \left[ \left(  \eta(\ell) + 1 \right)^{2m-2} -1 \right] 
    \\  &\phantom{asda} \leq \left(  \eta(\ell) + 1 \right)^{4m-3} -1 \
    .
\end{align*}
Finally, let us recall that $\eta(\ell)= \mathcal{K} \operatorname{e}^{- \gamma \ell}$ for some constants $\mathcal{K}, \gamma >0 $ determined by \eqref{eq:mixing}. Thus, choosing $\ell= \mathcal{O}(\ln(n))$ for $n= \abs{\Lambda}$, we clearly have $\abs{A_i} = \abs{B_i} = \mathcal{O} (\ln(n))$ for every $i=1, \ldots , m$, due to the explicit form of the covering considered. Moreover, the number of small sub-regions considered is $m=  \mathcal{O}(n / \ln(n) )$. Hence, we can control the above upper bound. Indeed, note that, in such a case, the limit of the bound above with $m$ tending to infinity is a constant, that we can make smaller than $1/2$ by choosing properly the specific value of $\ell$.

Therefore,  we have proven that \eqref{ATmixingcondrel} holds with $\|h(\sigma_{A^cB^c})\|_\infty\le \mathcal{C}$ independent of $n$ as long as the regions $A_i$ and $B_j$ grow logarithmically with $n$, since those regions depend linearly on $\ell$.

\section{Proof of local control of the constant (\Cref{lemmatechnical1})}\label{sec:localcontrol}

This section is devoted to proving the other main technical tool in the proof of Theorem \ref{thm:main_result_formal}, namely \Cref{lemmatechnical1}.  We first state a slightly more general formulation of Lemma \ref{lemmatechnical1} before proving the latter. Recall that $\chi_\Phi$ denotes the module Choi operator for a bimodule map $\Phi$.
\begin{lem}\label{lemmatechnicalgeneral}
Let $\cN\subset \cM$ be finite-dimensional von Neumann subalgebras and let $E_\cN:\cM\to\cN$ be a conditional expectation. Suppose $\Phi:\cM\to \cM$ is a  $\cN$-bimodule map.
\begin{enumerate}
\item[$\operatorname{i)}$] If $\norm{\chi_\Phi-\chi_{E_\cN}}_{\cB((l_2^n))\ten \cM}{}\le \eps\le 1$, then
\[ (1-\eps)E_{\cN}\le_{\operatorname{cp}} \Phi\le_{\operatorname{cp}} (1+\eps) E_{\cN} \ .\]
where the order $\Phi\le_{\operatorname{cp}}\Psi$ means $\Psi-\Phi$ is completely positive.
\item[$\operatorname{ii)}$] Assume that $\Phi$ is unital and self-adjoint with respect to the $\operatorname{KMS}$ inner product $\langle.,.\rangle_{\sigma}$ for an (arbitrary) invertible invariant state $\sigma=E_{\cN*}(\sigma)$ and $\lambda:=\norm{\Phi-E_\cN:L_2(\sigma)\to L_2(\sigma)}{}<1$. Then for $k>\frac{\ln C_{\tau,\operatorname{cb}}(\cM:\cN)}{-\ln \lambda}$,
\[ (1-\epsilon)E_\cN\le_{\operatorname{cp}} \Phi^k\le_{\operatorname{cp}} (1+\epsilon)E_\cN  \]
for $\epsilon=\lambda^kC_{\tau,\operatorname{cb}}(\cM:\cN)<1 $. In particular, one can choose $k$ such that $\eps=\lambda^k\mu_{\min}(\sigma)^{-1}C_{\operatorname{cb}}(\cM:\cN)<1$ for any invertible invariant state $\sigma=E_{\cN*}(\sigma)$.
\end{enumerate}
\end{lem}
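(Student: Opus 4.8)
The plan is to derive (ii) from (i) applied to the power $\Phi^k$, so I begin with (i). The key point is that the module Choi operator of an $\cN$-bimodule map is confined to a fixed corner. Put $q:=\chi_{E_\cN}=\sum_{i}\ket{i}\bra{i}\ten p_i$. From the resolution $\xi_j=\sum_i\xi_iE_\cN(\xi_i^\dagger\xi_j)$ and \eqref{eq:modulebasis} one reads off $\xi_j=\xi_jp_j$, hence $\xi_i^\dagger\xi_j=p_i\xi_i^\dagger\xi_jp_j$, and since $\Phi$ is an $\cN$-bimodule map, $\Phi(\xi_i^\dagger\xi_j)=p_i\Phi(\xi_i^\dagger\xi_j)p_j$; therefore $\chi_\Phi=q\chi_\Phi q$, and $q$ is a projection because the $p_i\in\cN$ are. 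Recalling that, for the completely positive order to be meaningful, $\Phi$ is understood to be Hermiticity preserving, $\chi_\Phi$ is self-adjoint, so the hypothesis $\norm{\chi_\Phi-q}=\norm{\chi_\Phi-\chi_{E_\cN}}\le\eps$ says, inside the corner $C^*$-algebra $q\,(\cB(l_2^n)\ten\cM)\,q$ with unit $q$, that $(1-\eps)q\le\chi_\Phi\le(1+\eps)q$. By linearity of $\Phi\mapsto\chi_\Phi$ and $\chi_{E_\cN}=q$, this is the same as $\chi_{(1+\eps)E_\cN-\Phi}\ge0$ and $\chi_{\Phi-(1-\eps)E_\cN}\ge0$; both $(1\pm\eps)E_\cN-\Phi$ are again $\cN$-bimodule maps, so the equivalence ``$\Psi$ completely positive $\iff$ $\chi_\Psi\ge0$'' established in \Cref{subsec:bimodule_maps} yields $(1-\eps)E_\cN\le_{\operatorname{cp}}\Phi\le_{\operatorname{cp}}(1+\eps)E_\cN$.

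For (ii) the reduction is algebraic. Since $\Phi$ is unital and $\cN$-bimodular, $\Phi\circ E_\cN=E_\cN$ (as in \Cref{subsec:bimodule_maps}); since both $\Phi$ and $E_\cN$ are self-adjoint for $\langle\cdot,\cdot\rangle_\sigma$, taking adjoints gives $E_\cN\circ\Phi=E_\cN$. Hence $\Phi$ and $E_\cN$ commute, $E_\cN^2=E_\cN$, and the binomial expansion of $(\Phi-E_\cN)^k$ collapses to $\Phi^k-E_\cN=(\Phi-E_\cN)^k$. The map $\Phi^k$ is again a unital, Hermiticity preserving, KMS-self-adjoint $\cN$-bimodule map, so to invoke (i) for $\Phi^k$ it suffices to bound $\norm{\chi_{\Phi^k}-\chi_{E_\cN}}=\norm{\chi_{(\Phi-E_\cN)^k}}$. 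This is where the operator space input enters: for a KMS-self-adjoint $\cN$-bimodule map $\Psi$ on $\cM$ one has
\begin{align*}
\norm{\chi_\Psi}_{\cB(l_2^n)\ten\cM}\le C_{\tau,\operatorname{cb}}(\cM:\cN)\,\norm{\Psi:L_2(\sigma)\to L_2(\sigma)}\,,
\end{align*}
i.e.\ upgrading an $L_2(\sigma)$ bound to the completely-bounded-type norm carried by the module Choi operator costs exactly a factor of the generalized completely bounded Pimsner--Popa index; this step rests on identifying $\norm{\chi_\Psi}$ with a genuine CB norm on an amalgamated $L_\infty$-module (\Cref{sec:opspace}) together with the description of $C_{\tau,\operatorname{cb}}$ in \Cref{sec:indices}. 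Applying it with $\Psi=(\Phi-E_\cN)^k$, which has $L_2(\sigma)$-operator norm at most $\lambda^k$ by submultiplicativity, gives $\norm{\chi_{\Phi^k}-\chi_{E_\cN}}\le\lambda^k\,C_{\tau,\operatorname{cb}}(\cM:\cN)=:\eps$.

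It remains to assemble. If $k>\frac{\ln C_{\tau,\operatorname{cb}}(\cM:\cN)}{-\ln\lambda}$ then $\eps=\lambda^kC_{\tau,\operatorname{cb}}(\cM:\cN)<1$, so part (i) applied to $\Phi^k$ gives $(1-\eps)E_\cN\le_{\operatorname{cp}}\Phi^k\le_{\operatorname{cp}}(1+\eps)E_\cN$, as claimed. For the last assertion, substitute the bound $C_{\tau,\operatorname{cb}}(\cM:\cN)\le\mu_{\operatorname{min}}(\sigma)^{-1}C_{\operatorname{cb}}(\cM:\cN)$ from \eqref{eq:CtoCsigma}: then $\lambda^k\mu_{\operatorname{min}}(\sigma)^{-1}C_{\operatorname{cb}}(\cM:\cN)$ is a larger, hence still admissible, choice of $\eps$, and it is $<1$ once $k$ is large enough, since $C_{\operatorname{cb}}(\cM:\cN)$ and $\mu_{\operatorname{min}}(\sigma)$ do not depend on $k$. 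I expect the genuine obstacle to be the displayed estimate $\norm{\chi_\Psi}\le C_{\tau,\operatorname{cb}}(\cM:\cN)\norm{\Psi}_{L_2(\sigma)\to L_2(\sigma)}$: realizing $\norm{\chi_\Psi}$ as an honest completely bounded norm and then descending from the $L_\infty$ side to the $L_2$ estimate with constant exactly the index is the one non-routine ingredient, everything else being bookkeeping with the corner projection $q$ and the linearity of $\Phi\mapsto\chi_\Phi$.
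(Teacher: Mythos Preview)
Your argument is correct and follows essentially the same route as the paper. Your treatment of (i) via the corner projection $q=\chi_{E_\cN}$ is in fact more explicit than the paper, which simply cites \cite{gao2021spectral}; for (ii) you correctly isolate the one substantive step, and the paper establishes your black-boxed estimate $\norm{\chi_\Psi}\le C_{\tau,\operatorname{cb}}(\cM:\cN)\,\norm{\Psi}_{L_2\to L_2}$ exactly along the lines you indicate: it first proves $\norm{\chi_\Psi}\le\norm{\Psi:L_\infty^1(\cN\subset\cM)\to\cM}_{\operatorname{cb}}$ (via the auxiliary map $\cT_1(l_2^n)\to\cM$, $\ket{i}\bra{j}\mapsto\xi_i^\dagger\xi_j$, which is a complete contraction into $L_\infty^1$), then factors $L_\infty^1\to L_\infty^2\to L_\infty^2\to\cM$, uses that for a KMS-self-adjoint bimodule map the $L_\infty^2\to L_\infty^2$ CB norm collapses to the ordinary $L_2$ operator norm, and bounds the two identity pieces by $\norm{\id:L_\infty^1\to\cM}_{\operatorname{cb}}^{1/2}$ each, the square being controlled by $C_{\tau,\operatorname{cb}}(\cM:\cN)$ via \Cref{lemmaindexlpspace}.
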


The proof of \Cref{lemmatechnicalgeneral} is postponed to \Cref{prooflemmatechngen}. For now, we show how it implies \Cref{lemmatechnical1}:

\begin{proof}[Proof of Lemma \ref{lemmatechnical1}]
Specializing \Cref{lemmatechnicalgeneral} to the spin chain setting, we choose the map $\Phi$ to be $\prod_{i\in A} E_i$ and $E_\cN:=E_A$ for a given region $A\subseteq\Lambda$. Then it suffices to control the constants $\mu_{\min}(\sigma^{(A\partial) })$ and $C_{\operatorname{cb}}(\cB(\cH_{A\partial}):\cN_A)$, where $\sigma^{(A\partial)}$ is taken as the Gibbs state on region $A\partial$, which is a fixed point of $E_{A*}$. First of all
\begin{align*}
    \sigma^{(A\partial)}=e^{-\beta H_{A\partial}}/\tr e^{-\beta H_{A\partial}}\ge e^{-\beta \|H_{A\partial}\|_\infty}/\tr e^{\beta \|H_{A\partial}\|_\infty}\ge d^{-|A\partial|}e^{-2\beta |A\partial| J}\,,
\end{align*}
where $J$ denotes the interaction strength of $H$. Moreover, $C_{\operatorname{cb}}(\cB(\cH_{A\partial}):\cN_A)\le C_{\operatorname{cb}}(\cB(\cH_{A\partial}):\mathbb{C}\Id_{A\partial})\le d^{|A\partial|}$. Then, the conclusion of Lemma \ref{lemmatechnicalgeneral} holds true for $\eps = \lambda^k\,d^{2|A\partial|}e^{2\beta |A\partial| J}<1$. Equation \eqref{eq:kA} follows after taking $\eps=\frac{1}{2}$ and solving for $k$.

That $\lambda$ is independent of $n$ is a simple consequence of the uniform positivity of the gap:
\begin{align*}
\lambda(\cL_\Lambda^D):=\inf_\Lambda\inf_{X \subset \Lambda}\,\frac{-\langle X,\,\cL^D_\Lambda(X)\rangle_{\sigma}}{\|X-\tr(\sigma X)\Id\|_{L_2(\sigma)}^2}>0
\end{align*}
for 1D Davies generators associated to commuting Hamiltonians \cite[Proposition 29]{kastoryano2016quantum}, together with the detectability lemma \cite{aharonov2009detectability,anshu2016simple} which asserts that:
\begin{align*}
\lambda^2\le\frac{1}{\lambda(\cL_\Lambda^D)/g^2+1}<1\,,
\end{align*}
where $g$ denotes the maximum number of conditional expectations $E_i$ which do not commute with any given $E_j$. In the present framework of a finite-range commuting Hamiltonian, the conditional expectations $E_j$ are supported on finite regions of locality controlled by the range $r$ of $H$. Therefore, $g$ is finite and thus $\lambda<1$ independently of the system size.


\end{proof}

In order to prove \Cref{lemmatechnicalgeneral}, we first need to derive some technical results on amalgamated $L_p$ spaces which might be of independent interest. These are gathered in \Cref{sec:amalgamatedLp}.

\subsection{Amalgamated $L_p$ spaces}\label{sec:amalgamatedLp}

In this section, we briefly review the operator space structure of the weighted amalgamated $L_p$ spaces introduced in \cite{junge2010mixed} (see also \cite{bardet2018hypercontractivity} for a recent account of theses spaces in finite dimensions) and connect them to the completely bounded subalgebra indices introduced in \Cref{sec:indices}. These spaces are generalizations of the non-commutative vector-valued $L_p$ spaces introduced by Pisier in \cite{Pisier98NCvector}. Given a full-rank state $\sigma\in \cD(\cH)$ and $p\ge 1$, we define the weighted $L_p(\sigma)$ space by the norm
\begin{align}\label{kosaki}
    \|x\|_{L_p(\sigma)}:=\Big(\tr\big|\si^{\frac{1}{2p}}x\si^{\frac{1}{2p}} \big|^p\Big)^{\frac{1}{p}}\,.
\end{align}
As expected, for $p=2$, $L_2(\sigma)$ is a Hilbert space associated with the so-called $\sigma$-KMS inner product:
\begin{align}\label{eq:sigmakms}
    \langle x,\,y\rangle_{\sigma}:=\tr\big(x^\dagger\sigma^{\frac{1}{2}} y\,\sigma^{\frac{1}{2}} \big)\,.
\end{align}
Let $\cM$ be a finite dimensional von Neumann algebra equipped with trace $\Tr$. Let $\cN\subset \cB(\cH)$ be a subalgebra and let $E_{\cN}:\cB(\cH)\to \cN$ be a conditional expectation. We define the invariant state
\begin{align*}
    \sigma_{\Tr}:=E_{\cN*}\Big(\frac{\Id_\cM}{\Tr(\Id_\cM)}\Big)\,.
\end{align*}
Note that $\sigma_{\Tr}$ is explicit from \eqref{cd}, and $\sigma_{\Tr}$ and $E_\cN$ determine each other. Moreover $\sigma_{\Tr}\in \cN'$ because for any $x\in \cN,y\in\cM$
\begin{align*}\Tr(\sigma_{\Tr}xy)
&=\Tr\Big( E_{\cN*}\Big(\frac{\Id}{\Tr(\Id_\cM)}\Big)xy\Big)
=\frac{1}{\Tr(\Id_\cM)}\Tr\Big(E_{\cN}(xy)\Big)
=\frac{1}{\Tr(\Id_\cM)}\Tr\Big(xE_{\cN}(y)\Big)\\
&=
\frac{1}{\Tr(\Id_\cM)}\Tr\Big(E_{\cN}(y)x\Big)=\frac{1}{\Tr(\Id_\cM)}\Tr\Big(E_{\cN}(yx)\Big)
=\Tr\Big(E_{\cN*}\Big(\frac{\Id}{\Tr(\Id_\cM)}\Big)yx\Big)\\
&=\Tr\Big(xE_{\cN*}\Big(\frac{\Id}{\Tr(\Id_\cM)}\Big)y\Big)=\Tr\Big(x\sigma_{\Tr}y\Big)\,.
\end{align*}
In other words, $\si_\tr$ restricted to $\cN$ is a trace.
Let $1\leq q,p\le\infty$ and fix $\frac{1}{r}=|\frac{1}{q}-\frac{1}{p}|$. The amalgamated $L_p$ spaces are defined via the following norms: for $p\ge q$,
\begin{align}
  	\norm{x}_{L_q^p(\Ncal\subset \cM)}:=\inf_{a,b\in\DF,\, y\in\cM, x=ayb}\,
   	\norm{a}_{L_{2r}(\sigma_{\tr})}\label{eq_def_DFnorms}\,
   	 \norm{b}_{L_{2r}(\sigma_{\tr})}\,\norm{y}_{L_p(\sigma_{\tr})}\, ,
  \end{align}
where the infimum is over all factorizations $x=ayb$ for $a,b\in \cN$ and $y\in \cM$. For $p\le q$,
  \begin{align}
 	\norm{y}_{L_q^p(\cN\subset\cM)}:=\sup_{a,b\in\DF}\,\frac{\norm{ayb}_{L_p(\sigma_{\Tr})}}{\norm{a}_{L_{2r}(\sigma_{\tr})}\norm{b}_{L_{2r}(\sigma_{\tr})}}\,\label{linfty1pq}
\end{align}
where the supremum is over all $a,b\in \cN$.
For any $1\le q,p\le \infty$, we denote by ${L}_q^p(\cN\subset \cM)$ the space $\cM$ equipped with the above norms. We gather some basic properties of amalgamated $L_p$-norms before discussing their operator space structure. In the following, we fix $p'$ and $q'$ to be the H\"{o}lder conjugate of $p$ and $q$ respectively such as $\frac{1}{p}+\frac{1}{p'}=1$.
\begin{prop}\label{prop_duality}
Let $\cN\subset\cM$ be finite dimensional von Neumann algebras. Let $1\le q,p\le \infty$. Then
 \begin{itemize}
 	\item[$\operatorname{(i)}$] H\"{o}lder's inequality: for any $x\in{L}_p^q(\cN\subset \cM)$ and $y\in{L}_{p'}^{q'}(\cN\subset \cM)$,
 	\begin{align}\label{eq:holberamal}
 		|\langle x,y\rangle_{\sigma_\tr}|\le \|x\|_{L_q^p(\cN\subset\cM)}\|y\|_{L_{q'}^{p'}(\cN\subset\cM)}\,.
 		\end{align}
 	\item[$\operatorname{(ii)}$] Duality: For any $x\in L_q^p(\cN\subset\cM)$,
	\begin{align}\label{eq:duality}
 		\|x\|_{L_q^p(\cN\subset\cM)}=\sup\Big\{ |\langle x,y\rangle_{\sigma_\tr}|:~\|y\|_{L_{q'}^{p'}(\cN\subset\cM)}= 1\Big\}\,. \end{align}
 \item[$\operatorname{(iii)}$] The infimum (resp. $supremum$) in the definition  \eqref{eq_def_DFnorms} (resp. \eqref{linfty1pq}) can be restricted to the set of positive semidefinite operators $a,b\ge 0$. Furthermore, for all positive semidefinite $x$,
 	\begin{align}\label{eq_prop_normp_positive}
 		&\norm{x}_{L_{q}^p(\cN\subset \cM)}=\underset{\substack{a\in\cN,~a>0\\~\|a\|_{L_1(\sigma_\tr)}=1}}{\inf}\,\norm{a^{-\frac{1}{2r}}\,x\,a^{-\frac{1}{2r}}}_{L_{p}(\sigma_\tr)}\,\,  &\text{if $p\geq q$} \\
 \label{eq_prop_normq_positive}
 		&\norm{x}_{L_p^q(\cN\subset \cM)}=\underset{\substack{
 		a\in\cN,~a>0 \\\|a\|_{L_1(\sigma_\tr)}=1}}{\sup}\,\norm{a^{\frac{1}{2r}}\,x\,a^{\frac{1}{2r}}}_{L_q(\sigma_{\Tr})}\, \, &\text{if $p\leq q$}
 	\end{align}
\item[$\operatorname{(iv)}$] The following complex interpolation relation holds (see.
e.g. \cite{bergh2012interpolation} for an introduction to interpolation spaces):
 		\begin{align}\label{eq:interp}
 		    L_q^p(\cN\subset\cM)\cong \big[L_{q_0}^{p_0}(\cN\subset\cM),\,L_{q_1}^{p_1}(\cN\subset\cM)\big]_{\theta}\,,
 		\end{align}
 		where $0\le \theta\le 1$, $(1-\theta)/p_0+\theta/p_1=1/p$, $(1-\theta)/q_0+\theta/q_1=1/q$ and $(p_1-q_1)(p_0-q_0)\ge 0$.
 \item[$\operatorname{(v)}$] Relation with $L_p(\sigma_\tr)$ norms: if $q\le p$, then for any $x\in L_p(\sigma_\tr)$,
	\begin{align}\label{eq24}
 		& \|x\|_{L_q(\sigma_\tr)}\le\|x\|_{L_{q}^p(\cN\subset \cM)}\le \|x\|_{L_p(\sigma_\tr)},\\
 		& \|x\|_{L_q(\sigma_\tr)}\le \|x\|_{L_p^q(\cN\subset \cM)}\le \|x\|_{L_p(\sigma_\tr)}\,.\label{eq25}
 	\end{align}	
In particular, the equality holds for $p=q$, which is referred as Fubini's Theorem.

	
 \end{itemize}	
\end{prop}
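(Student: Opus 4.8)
The plan is to settle all five items by reducing to the corresponding (classical) facts for the weighted Kosaki spaces $L_p(\sigma_\tr)$: the three-term H\"older inequality $\|abc\|_{L_s(\sigma_\tr)}\le\|a\|_{L_{s_1}(\sigma_\tr)}\|b\|_{L_{s_2}(\sigma_\tr)}\|c\|_{L_{s_3}(\sigma_\tr)}$ for $1/s=1/s_1+1/s_2+1/s_3$; the duality $L_p(\sigma_\tr)^*\cong L_{p'}(\sigma_\tr)$ via $\langle\cdot,\cdot\rangle_{\sigma_\tr}$; and the complex interpolation identity $[L_{p_0}(\sigma_\tr),L_{p_1}(\sigma_\tr)]_\theta\cong L_p(\sigma_\tr)$. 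The one structural input specific to our setting, used throughout, is that $\sigma_\tr\in\cN'$ (established just above), so every $a\in\cN$ commutes with $\sigma_\tr^{t}$ for all $t$; in particular $\|a^\dagger\|_{L_{2r}(\sigma_\tr)}=\|a\|_{L_{2r}(\sigma_\tr)}$, $\|\Id\|_{L_{2r}(\sigma_\tr)}=1$, and for $a\ge0$ in $\cN$ one has $\|a^{1/2r}\|_{L_{2r}(\sigma_\tr)}=\|a\|_{L_1(\sigma_\tr)}^{1/2r}$. Everything is finite-dimensional, so all infima and suprema are attained. For item (v), assuming $q\le p$ (the $\inf$ side \eqref{eq_def_DFnorms}), the bound $\|x\|_{L_q^p(\cN\subset\cM)}\le\|x\|_{L_p(\sigma_\tr)}$ comes from the trivial factorisation $x=\Id\cdot x\cdot\Id$, while $\|x\|_{L_q(\sigma_\tr)}\le\|x\|_{L_q^p(\cN\subset\cM)}$ follows by applying three-term H\"older to any factorisation $x=ayb$ with the exponent relation $1/q=1/(2r)+1/p+1/(2r)$ and taking the infimum; the bound \eqref{eq25} is the same computation on the $\sup$ side using \eqref{linfty1pq}. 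Specialising $p=q$ (so $r=\infty$) sandwiches $\|x\|_{L_p^p(\cN\subset\cM)}$ between $\|x\|_{L_p(\sigma_\tr)}$ and itself, which is Fubini's theorem.

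For item (i), I would first record the elementary identity $\langle ayb,z\rangle_{\sigma_\tr}=\langle y,a^\dagger z b^\dagger\rangle_{\sigma_\tr}$ for $a,b\in\cN$ and $y,z\in\cM$, which is immediate once one commutes $a^\dagger,b^\dagger$ past $\sigma_\tr^{1/2}$. Taking $x\in L_q^p(\cN\subset\cM)$ with $p\ge q$ (described by \eqref{eq_def_DFnorms}), $y\in L_{q'}^{p'}(\cN\subset\cM)$ (note $p'\le q'$, so $y$ is described by the $\sup$-formula \eqref{linfty1pq} with the same $r$), and any factorisation $x=a y_0 b$, one gets $|\langle x,y\rangle_{\sigma_\tr}|=|\langle y_0,a^\dagger y b^\dagger\rangle_{\sigma_\tr}|\le\|y_0\|_{L_p(\sigma_\tr)}\|a^\dagger y b^\dagger\|_{L_{p'}(\sigma_\tr)}\le\|y_0\|_{L_p(\sigma_\tr)}\|a\|_{L_{2r}(\sigma_\tr)}\|b\|_{L_{2r}(\sigma_\tr)}\|y\|_{L_{q'}^{p'}(\cN\subset\cM)}$, the first inequality being H\"older in $L_p(\sigma_\tr)$ and the second the definition of $\|y\|_{L_{q'}^{p'}}$; taking the infimum over factorisations gives \eqref{eq:holberamal}, and the case $p\le q$ is symmetric. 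For item (ii), the inequality $\sup\{|\langle x,y\rangle_{\sigma_\tr}|:\|y\|_{L_{q'}^{p'}}=1\}\le\|x\|_{L_q^p}$ is exactly (i). For the reverse, treating $y$ on the $\sup$ side ($p'\le q'$), choose near-optimal $c,d\in\cN$ with $\|c\|_{L_{2r}(\sigma_\tr)}=\|d\|_{L_{2r}(\sigma_\tr)}=1$ and $\|cyd\|_{L_{p'}(\sigma_\tr)}\ge(1-\epsilon)\|y\|_{L_{q'}^{p'}}$, then by Kosaki duality pick $w\in\cM$ with $\|w\|_{L_p(\sigma_\tr)}=1$ and $\langle w,cyd\rangle_{\sigma_\tr}=\|cyd\|_{L_{p'}(\sigma_\tr)}$, and set $x:=c^\dagger w d^\dagger$; then $\|x\|_{L_q^p}\le\|c^\dagger\|_{L_{2r}(\sigma_\tr)}\|w\|_{L_p(\sigma_\tr)}\|d^\dagger\|_{L_{2r}(\sigma_\tr)}=1$ while $\langle x,y\rangle_{\sigma_\tr}=\langle w,cyd\rangle_{\sigma_\tr}\ge(1-\epsilon)\|y\|_{L_{q'}^{p'}}$. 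Letting $\epsilon\to0$ identifies $\|\cdot\|_{L_{q'}^{p'}}$ as the norm dual to $\|\cdot\|_{L_q^p}$ under $\langle\cdot,\cdot\rangle_{\sigma_\tr}$; finite-dimensionality makes this duality reflexive, giving \eqref{eq:duality} on both sides.

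For item (iii), given a factorisation $x=ayb$ I would use the polar decompositions $a=u|a|$, $b=|b^\dagger|v$ inside $\cN$ and absorb the unitaries into $y$: since $u,v\in\cN$ commute with $\sigma_\tr^{1/2p}$, this changes none of $\|y\|_{L_p(\sigma_\tr)}$, $\|a\|_{L_{2r}(\sigma_\tr)}$, $\|b\|_{L_{2r}(\sigma_\tr)}$, which proves one may restrict to $a,b\ge0$ (and likewise in \eqref{linfty1pq}). For positive $x$ one further symmetrises a factorisation $x=ayb$ with $a,b\ge0$ into one of the form $x=\alpha\tilde y\alpha$ with $0\le\alpha\in\cN$ and $\|\alpha\|_{L_{2r}(\sigma_\tr)}^2\|\tilde y\|_{L_p(\sigma_\tr)}\le\|a\|_{L_{2r}(\sigma_\tr)}\|b\|_{L_{2r}(\sigma_\tr)}\|y\|_{L_p(\sigma_\tr)}$ — the classical symmetrisation lemma for amalgamated norms, cf.\ \cite{junge2010mixed} (see also \cite{bardet2018hypercontractivity}); writing $a'=\alpha^{2r}/\|\alpha\|_{L_{2r}(\sigma_\tr)}^{2r}$, so that $\|a'\|_{L_1(\sigma_\tr)}=1$ and $\tilde y=a'^{-1/2r}x\,a'^{-1/2r}$ up to scalars, converts this into \eqref{eq_prop_normp_positive}; \eqref{eq_prop_normq_positive} is obtained identically from \eqref{linfty1pq}. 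Finally, item (iv) is the complex interpolation theorem for amalgamated $L_p$ spaces: I would take near-optimal endpoint factorisations and run Stein interpolation on suitable analytic families built from them, mapped into the interpolation scale of the $L_p(\sigma_\tr)$'s exactly as in \cite{junge2010mixed}, the hypothesis $(p_1-q_1)(p_0-q_0)\ge0$ ensuring one stays on a single ($\inf$ or $\sup$) side of the definition along the whole strip.

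The main obstacle is the symmetrisation step inside item (iii): the reduction to positive $a,b$ is formal, but replacing an asymmetric factorisation $x=ayb$ of a positive $x$ by a symmetric $x=\alpha\tilde y\alpha$ with $\|\alpha\|_{L_{2r}(\sigma_\tr)}^2\|\tilde y\|_{L_p(\sigma_\tr)}$ controlled is the one genuinely non-formal ingredient, and it is precisely what produces the square on $\|\alpha\|_{L_{2r}(\sigma_\tr)}$ and the $L_1(\sigma_\tr)$-normalisation in \eqref{eq_prop_normp_positive}--\eqref{eq_prop_normq_positive}; I would invoke it from \cite{junge2010mixed} rather than reprove it. Item (iv) similarly rests on the interpolation machinery of \cite{junge2010mixed}, while (i), (ii) and (v) are routine once one combines H\"older and Kosaki duality with the commutation $\sigma_\tr\in\cN'$.
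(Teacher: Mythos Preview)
Your proposal is correct and aligns with the paper's approach: items (i)--(iii), (v) are attributed to \cite{bardet2018hypercontractivity} and item (iv) to \cite{junge2010mixed}, which is exactly what the paper does. The one difference worth noting is in emphasis: you spell out (i), (ii), (v) explicitly via H\"older and Kosaki duality while deferring the symmetrisation step in (iii) entirely to the literature, whereas the paper does the reverse --- it only cites references for (i), (ii), (v) but supplies the explicit symmetrisation for the $p=\infty$ case of \eqref{eq_prop_normp_positive}, since this is the case actually used downstream. Concretely, given $x=x^\dagger=ayb$ with $a,b\ge0$ and $\|a\|_{L_{2q}(\sigma_\tr)},\|b\|_{L_{2q}(\sigma_\tr)},\|y\|_\infty<1$, the paper sets $Q=(\tfrac12 a^2+\tfrac12 b^2+\delta\Id)^{1/2}\in\cN$ and writes $x=QYQ$ with $Y=\tfrac12 Q^{-1}(ayb+by^\dagger a)Q^{-1}$; a $2\times2$ block argument gives $\|Y\|_\infty\le1$ while $\|Q\|_{L_{2q}(\sigma_\tr)}^2\le1+\delta$, and one then sets $A=Q^{2q}/\|Q\|_{L_{2q}(\sigma_\tr)}^{2q}$ to obtain the form in \eqref{eq_prop_normp_positive}. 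This is precisely the ``symmetrisation lemma'' you black-boxed, so your identification of it as the only non-formal step is accurate.
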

\begin{proof}The property (i-iii) and (v) are proved in \cite[Proposition 3.1]{bardet2018hypercontractivity}. The complex interpolation (iv) is proved in \cite[Theorem 3.2 \& 4.6]{junge2010mixed} (for the reduction from the Haagerup $L_p$-norm to the Kosaki $L_p$-norm we used in \eqref{kosaki}, see \cite[Proposition 2.4]{GYZ19interpolation}). For completeness, we include the proof of \eqref{eq_prop_normp_positive} $p=\infty$, which will be sufficient for our discussion.
We show that when $x=x^\dag$ is self-adjoint, it suffices to consider $a=b>0$ in the infimum of the definition of the $L_q^\infty$-norm \eqref{eq_def_DFnorms}. Without losing generality, we assume $\norm{x}_{L_q^\infty(\cN\subset \cM)}<1$ with $x=ayb$ such that
\[\ \norm{a}_{L_{2q}(\sigma_\tr)}=\norm{b}_{L_{2q}(\sigma_\tr)}<1\ , \ \norm{y}_{\infty}<1\,.\]
By polar decomposition, we can further assume $a,b\ge 0$.
Denote $Q=(\frac{1}{2}a^{2}+\frac{1}{2}b^{2}+\delta \Id)^{1/2}\in \cN$ for some small $\delta>0$.
Since $x=\frac{1}{2}(x+x^\dag)=\frac{1}{2}(ayb+b y^\dag a)$, we have $x=QYQ$ for
\begin{align*} Y=\frac{1}{2}Q^{-1}(ayb+b y^\dag a)Q^{-1}
=
\frac{1}{2}Q^{-1}\left[\begin{array}{cc} a& b\end{array}\right]
\cdot\left[\begin{array}{cc} y&0\\ 0 &y^\dag \end{array}\right]\cdot
\left[\begin{array}{c} a\\ b\end{array}\right]
Q^{-1}\,.
\end{align*}
Note that $$\norm{Q}_{L_{2q}(\sigma_\tr)}=\norm{\frac{1}{2}a^{2}+\frac{1}{2}b^{2}+\delta \Id}_{L_{q}(\sigma_\tr)}^{1/2}
\le (\frac{1}{2}\norm{a^2}_{L_{q}(\sigma_\tr)}+\frac{1}{2}\norm{b^2}_{L_{q}(\sigma_\tr)}+\delta)^{1/2}=(1+\delta)^{1/2}$$
and
\begin{align*}\norm{\left[\begin{array}{cc} Q^{-1}a& Q^{-1}b
\end{array}\right]}_{\infty}=&\norm{Q^{-1}(a^2+b^2)Q^{-1}}_{\infty}\\ =&\norm{(a^2+b^2)^{1/2}(\frac{1}{2} a^{2}+\frac{1}{2}b^{2}+\delta \Id)^{-1} (a^2+b^2)^{1/2}}_{\infty}
\\ \le&\norm{(a^2+b^2)^{1/2}(\frac{1}{2} a^{2}+\frac{1}{2}b^{2})^{-1} (a^2+b^2)^{1/2}}_{\infty}
\le \sqrt{2}\,.
\end{align*}
Thus
\[ \norm{Y}_{\infty}\le \frac{1}{2}
\norm{\left[\begin{array}{cc} Q^{-1}a& Q^{-1}b
\end{array}\right]}_{\infty}\norm{
\left[\begin{array}{cc} y&0\\ 0 &y^\dag \end{array}\right]}_{\infty} \norm{
\left[\begin{array}{c} aQ^{-1}\\ bQ^{-1}\end{array}\right]}_{\infty}
\le \norm{y}_{\infty}=1\,.
\]
Take $A=\frac{1}{\norm{Q}_{L_{2q}(\si_\tr)}^2}Q^{2q}$. We have $A>0,\norm{A}_{L_1(\si_\tr)}=1$ and
\[\norm{A^{-1/2} xA^{-1/2}}_{\infty}\le \norm{Q}_{L_{2q}(\si_\tr)}^2\norm{Y}_{\infty}\le 1+\delta\ \]
Since $\delta>0$ is arbitrary, this proves \eqref{eq_prop_normp_positive} for $p=\infty$. 

\end{proof}

We will also need the following two factorization Lemmas. The first one is dual form of \cite[Theorem 3.19]{junge2010mixed} at $p=1$.
\begin{lem}\label{factorization}
Let $\cN\subset\cM$ be finite dimensional subalgebras and
$\si_\Tr$ be defined as above. Then for any $x\in \cM$,  \begin{align}\label{eq:factorization2} \norm{x}_{L_1^\infty(\cN\subset\cM)}=\inf_{x=yz}\norm{yy^\dagger}_{L_1^\infty(\cN\subset\cM)}^{1/2}\norm{z^\dagger z}_{L_1^\infty(\cN\subset\cM)}^{1/2}\end{align}
where the infimum is over all factorization $x=yz$ with $y,z\in \cM$.
\end{lem}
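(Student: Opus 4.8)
The plan is to prove the two inequalities in \eqref{eq:factorization2} directly; this is the Cauchy--Schwarz type factorization that dualizes \cite[Theorem 3.19]{junge2010mixed} at the endpoint $p=1$. Throughout I would use that for $q=1$, $p=\infty$ one has $\tfrac1r=1$, so the relevant Hölder exponent is $2r=2$, and I would freely switch between the two available descriptions of the norm: the defining factorization \eqref{eq_def_DFnorms}, namely $\norm{x}_{L_1^\infty(\cN\subset\cM)}=\inf\{\norm{a}_{L_2(\sigma_\tr)}\norm{b}_{L_2(\sigma_\tr)}\norm{y}_\infty:\ x=ayb,\ a,b\in\cN,\ y\in\cM\}$, and, for a positive semidefinite $P$, the optimized expression \eqref{eq_prop_normp_positive}, $\norm{P}_{L_1^\infty(\cN\subset\cM)}=\inf\{\norm{a^{-1/2}Pa^{-1/2}}_\infty:\ a\in\cN,\ a>0,\ \norm{a}_{L_1(\sigma_\tr)}=1\}$. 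A small identity used repeatedly: since $\sigma_\tr\in\cN'$, for any $a\in\cN$ with $a\ge0$ one has $\norm{a^{1/2}}_{L_2(\sigma_\tr)}^2=\tr(\sigma_\tr a)=\norm{a}_{L_1(\sigma_\tr)}$, and likewise $\norm{a^{-1/2}y}_\infty^2=\norm{a^{-1/2}yy^\dagger a^{-1/2}}_\infty$.

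For the inequality ``$\le$'', I would fix a factorization $x=yz$ with $y,z\in\cM$ and $\varepsilon>0$, and apply \eqref{eq_prop_normp_positive} to $P=yy^\dagger$ and to $P=z^\dagger z$: this produces $a,b\in\cN$ with $a,b>0$, $\norm{a}_{L_1(\sigma_\tr)}=\norm{b}_{L_1(\sigma_\tr)}=1$, $\norm{a^{-1/2}y}_\infty^2\le\norm{yy^\dagger}_{L_1^\infty(\cN\subset\cM)}+\varepsilon$ and $\norm{zb^{-1/2}}_\infty^2=\norm{b^{-1/2}z^\dagger z\,b^{-1/2}}_\infty\le\norm{z^\dagger z}_{L_1^\infty(\cN\subset\cM)}+\varepsilon$. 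Then I would write $x=a^{1/2}\big(a^{-1/2}yz\,b^{-1/2}\big)b^{1/2}$ with $a^{1/2},b^{1/2}\in\cN$ and $a^{-1/2}yzb^{-1/2}\in\cM$, so that the defining formula together with $\norm{a^{1/2}}_{L_2(\sigma_\tr)}=\norm{b^{1/2}}_{L_2(\sigma_\tr)}=1$ and submultiplicativity of the operator norm give $\norm{x}_{L_1^\infty(\cN\subset\cM)}\le\norm{a^{-1/2}y}_\infty\,\norm{zb^{-1/2}}_\infty\le\big(\norm{yy^\dagger}_{L_1^\infty(\cN\subset\cM)}+\varepsilon\big)^{1/2}\big(\norm{z^\dagger z}_{L_1^\infty(\cN\subset\cM)}+\varepsilon\big)^{1/2}$. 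Letting $\varepsilon\to0$ and taking the infimum over factorizations $x=yz$ yields this direction.

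For ``$\ge$'' I would exhibit one good factorization. Given $\varepsilon>0$, choose $a,b\in\cN$, $y\in\cM$ with $x=ayb$ and $\norm{a}_{L_2(\sigma_\tr)}\norm{b}_{L_2(\sigma_\tr)}\norm{y}_\infty\le\norm{x}_{L_1^\infty(\cN\subset\cM)}+\varepsilon$, where by Proposition~\ref{prop_duality}(iii) we may take $a,b\ge0$. Set $Y:=ay\in\cM$, $Z:=b\in\cM$, so $YZ=x$. To estimate $\norm{YY^\dagger}_{L_1^\infty(\cN\subset\cM)}=\norm{ayy^\dagger a}_{L_1^\infty(\cN\subset\cM)}$ I would feed the test element $c_\delta:=(a^2+\delta\Id)/\norm{a^2+\delta\Id}_{L_1(\sigma_\tr)}\in\cN$ into \eqref{eq_prop_normp_positive}; since $\norm{(a^2+\delta\Id)^{-1/2}a}_\infty\le1$ this gives $\norm{YY^\dagger}_{L_1^\infty(\cN\subset\cM)}\le\norm{a^2+\delta\Id}_{L_1(\sigma_\tr)}\norm{y}_\infty^2$, and $\delta\to0$ yields $\le\norm{a}_{L_2(\sigma_\tr)}^2\norm{y}_\infty^2$. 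The same computation with $y$ replaced by $\Id$ and $a$ by $b$ gives $\norm{Z^\dagger Z}_{L_1^\infty(\cN\subset\cM)}=\norm{b^2}_{L_1^\infty(\cN\subset\cM)}\le\norm{b}_{L_2(\sigma_\tr)}^2$. Multiplying the square roots gives $\norm{YY^\dagger}_{L_1^\infty(\cN\subset\cM)}^{1/2}\norm{Z^\dagger Z}_{L_1^\infty(\cN\subset\cM)}^{1/2}\le\norm{a}_{L_2(\sigma_\tr)}\norm{y}_\infty\norm{b}_{L_2(\sigma_\tr)}\le\norm{x}_{L_1^\infty(\cN\subset\cM)}+\varepsilon$, and $\varepsilon\to0$ closes the argument.

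I do not expect any genuine analytic obstacle here: everything reduces to the two explicit descriptions of the $L_1^\infty$ norm and to submultiplicativity of the operator norm. The only points needing care are the reduction to $a,b\ge0$ in the factorization (which is exactly Proposition~\ref{prop_duality}(iii)) and the fact that the near-optimal weights need not be invertible, which is handled by the $a^2+\delta\Id$ regularization above; and one must remember that the identity $\norm{a^{1/2}}_{L_2(\sigma_\tr)}^2=\norm{a}_{L_1(\sigma_\tr)}$ used throughout relies on $\sigma_\tr$ lying in the commutant $\cN'$, which was verified when $\sigma_\tr$ was introduced.
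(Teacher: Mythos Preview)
Your proof is correct, and notably more elementary than the paper's in one direction.

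For ``$\le$'' you do exactly what the paper does: pick near-optimal weights $a,b\in\cN$ for $yy^\dagger$ and $z^\dagger z$ using \eqref{eq_prop_normp_positive}, and feed the factorization $x=a^{1/2}(a^{-1/2}yz\,b^{-1/2})b^{1/2}$ into the defining infimum \eqref{eq_def_DFnorms}. This is identical in spirit.

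For ``$\ge$'' you take a genuinely different, and much shorter, route. The paper first proves that the right-hand side $\norm{\cdot}_h$ is a norm (via a $2\times 2$ matrix argument for the triangle inequality), and then runs a full Hahn--Banach/Grothendieck--Pietsch separation argument: it produces a norming functional on $\norm{\cdot}_h$, represents it via a probability measure on the positive unit ball of $L_\infty^1$, extracts barycenters $a_0,b_0$, and finally reads off an element of the $L_\infty^1$ dual ball that witnesses $\norm{x}_{L_1^\infty}\ge\norm{x}_h$. Your argument bypasses all of this by observing that a near-optimal $\cN$-factorization $x=ayb$ with $a,b\ge0$ (Proposition~\ref{prop_duality}(iii)) is \emph{already} an $\cM$-factorization $x=(ay)\cdot b$, and that testing $ayy^\dagger a$ against $c_\delta=(a^2+\delta\Id)/\norm{a^2+\delta\Id}_{L_1(\sigma_\tr)}$ in \eqref{eq_prop_normp_positive} gives $\norm{ayy^\dagger a}_{L_1^\infty}\le\norm{a}_{L_2(\sigma_\tr)}^2\norm{y}_\infty^2$ directly. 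The payoff of your approach is economy: no norm verification, no separation theorem, no measure-theoretic step. The payoff of the paper's approach is that it is the canonical Grothendieck--Pietsch machinery and, as a byproduct, establishes that $\norm{\cdot}_h$ is a norm and identifies its dual representation explicitly; this extra structure is sometimes useful elsewhere, and the method transfers more robustly to infinite-dimensional settings where finite-dimensional shortcuts (everything invertible after a $\delta$-perturbation) are unavailable.
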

\begin{proof}In the following proof, we use the short norm notation $\norm{\cdot}_{}:=\norm{\cdot}_{\infty}$,
 $\norm{\cdot}_{p}:=\norm{\cdot}_{L_p(\si_\tr)}$ and $\norm{\cdot }_{L_1^\infty}:=\norm{\cdot }_{L_1^\infty(\cN\subset\cM)}$. First, for any factorization $x=yz$,
\begin{align*} \norm{x}_{L_1^\infty}=&\inf_{A,B>0, \norm{A}_{1}=\norm{B}_{1}=1}\norm{A^{-1/2}xB^{-1/2}}
\\ \le& \inf_{A,B>0, \norm{A}_{1}=\norm{B}_{1}=1}\norm{A^{-1/2}y}\norm{zB^{-1/2}}
\\ =& \inf_{A>0, \norm{A}_{1}=1}\norm{A^{-1/2}yy^\dagger A^{-1/2}}\inf_{B>0, \norm{B}_{1}=1}\norm{B^{-1/2}z^\dagger zB^{-1/2}}
\\ =&\norm{yy^\dagger}_{L_1^\infty}^{1/2}\norm{z^\dagger z}_{L_1^\infty}^{1/2}\ ,
\end{align*}
where in the last step we used the property \eqref{eq_prop_normp_positive} for $yy^\dagger,z^\dagger z\ge 0$.

For the other direction,
 let us denote
 \[\norm{x}_{h}:=\inf_{x=yz}\norm{yy^\dagger}_{L_1^\infty}^{1/2}\norm{z^\dagger z}_{L_1^\infty}^{1/2}\]
 where the infimum takes over all factorization $x=yz$. We first show $\norm{\cdot}_{h}$ is a norm. To verify the triangle inequality, it suffices to show that for
 any $x_1=y_1z_1$, $x_2=y_2z_2$,
 and $\delta>0$, there exist $y_0,z_0$ such that $x_1+x_2=y_0z_0$ and

 \begin{align*} \norm{y_0y^\dagger_0}_{L_1^\infty}^{1/2}\norm{z^\dagger_0z_0}_{L_1^\infty}^{1/2}\le \norm{y_1y_1^\dagger}_{L_1^\infty}^{1/2}\norm{z^\dagger_1z_1}_{L_1^\infty}^{1/2}+\norm{y_2y_2^\dagger}_{L_1^\infty}^{1/2}\norm{z^\dagger_2z_2}_{L_1^\infty}^{1/2}+\delta.
 \end{align*}

By rescaling $x_1=ty_1\cdot t^{-1}z_1$, we can assume
\begin{align} &\norm{y_1y_1^\dagger}_{L_1^\infty}^{1/2}\norm{z^\dagger_1z_1}_{L_1^\infty}^{1/2}+
\norm{y_2y_2^\dagger}_{L_1^\infty}^{1/2}\norm{z^\dagger_2z_2}_{L_1^\infty}^{1/2}
\nonumber\\ &\qquad \qquad\qquad =\Big(\norm{y_1y_1^\dagger}_{L_1^\infty}+\norm{y_2y_2^\dagger}_{L_1^\infty}\Big)^{1/2}
\Big(\norm{z^\dagger_1z_1}_{L_1^\infty}+\norm{z^\dagger_2z_2}_{L_1^\infty}\Big)^{1/2} \label{eq:scale}
\end{align}
Take $\delta>0$ and
\begin{align*} &y=(y_1y_1^\dagger +y_2y_2^\dagger +\delta\Id)^{1/2}\ ,   z=(z_1^\dagger z_1+z_2^\dagger z_2+\delta\Id)^{1/2}\ ,\\ \  & X=y^{-1} (y_1z_1+y_2z_2)z^{-1}=y^{-1}(x_1+x_2)z^{-1}\ .
\end{align*}
We have
$\norm{\left[\begin{array}{cc}y^{-1}y_1& y^{-1}y_2\\ 0&0
                               \end{array}\right]}_{}\le 1$
  and $\norm{\left[\begin{array}{cc}z_1z^{-1}& 0 \\ z_2z^{-1} & 0\end{array}\right]}{}\le 1$. Thus \[\norm{X}=\norm{\left[\begin{array}{cc}X&0\\
                          0&0     \end{array}\right]}{}\  \le\   \norm{\left[\begin{array}{cc}y^{-1}y_1& y^{-1}y_2\\ 0&0
                               \end{array}\right]}{}\norm{ \left[\begin{array}{cc}z_1z^{-1}& 0\\ z_2z^{-1}& 0
                               \end{array}\right]}{}\ \le \ 1\ .
                               \]
Then we have $x=x_1+x_2=yXz$ and
\begin{align*}
&\norm{yy^\dagger }_{L_1^\infty}\norm{z^\dagger X^\dagger Xz}_{L_1^\infty}
\\ &\qquad\qquad\qquad\le \norm{yy^\dagger }_{L_1^\infty}\norm{z^\dagger z}_{L_1^\infty}
\\ &\qquad\qquad\qquad\le\norm{y_1y_1^\dagger +y_2y_2^\dagger +\delta}_{L_1^\infty}\norm{z_1^\dagger z_1+z_2^\dagger z_2+\delta}_{L_1^\infty}
\\ &\qquad\qquad\qquad\le \Big(\norm{y_1y_1^\dagger }_{L_1^\infty}+\norm{y_2y_2^\dagger}_{L_1^\infty}+\delta\Big)\Big(\norm{z_1^\dagger z_1}_{L_1^\infty}+\norm{z_2^\dagger z_2}_{L_1^\infty}+\delta\Big)\
\end{align*}
Since $\delta>0$ is arbitrary, we have by \eqref{eq:scale}
\begin{align*}
&\norm{yy^\dagger }_{L_1^\infty}^{1/2}\norm{z^\dagger X^\dagger Xz}_{L_1^\infty}^{1/2}\\ &\qquad\qquad\qquad\le \norm{y_1y_1^\dagger}_{L_1^\infty}^{1/2}\norm{y_2y_2^\dagger}_{L_1^\infty}^{1/2}+\norm{z_1^\dagger z_1}_{L_1^\infty}^{1/2}\norm{z_2^\dagger z_2}_{L_1^\infty}^{1/2}
\end{align*}
This proves the triangular inequality and also
\[\norm{x}_{h}=\inf_{x=\sum_{i}y_iz_i}\norm{\sum_{i}y_iy_i^\dagger }_{L_1^\infty}^{1/2}\norm{\sum_{i}z^\dagger_iz_i}_{L_1^\infty}^{1/2} ,\]
where the supremum is over all finite families $\{y_i\}$ and $\{z_i\}$ such that $\sum_{i=1}^ky_iz_i=x$. We now use a standard Grothendieck-Pietsch factorization to show $\norm{\cdot}_{h}=\norm{\cdot}_{L_1^\infty}$.
Suppose $\norm{x}_{h}=1$. By Hahn-Banach Theorem, there exists a linear functional $\phi:\cM\to \mathbb{C}$ such that $\phi(x)=\norm{x}_{h}=1$ and for any finite families $\{y_i\}$ and $\{z_i\}$,
\begin{align*} |\sum_{i=1}^k\phi(y_iz_i)|\le \sup_{ \norm{a}_{L_\infty^1}=1}\langle a,\sum_{i}y_iy_i^\dagger \rangle_{\si_\tr}^{1/2}\sup_{\norm{b}_{L_\infty^1}=1}\langle b, \sum_{i}z_i^\dagger z_i\rangle_{\si_\tr}^{1/2}\,.
\end{align*}
Here we use the duality $L_1^\infty(\cN\subset\cM)^*=L_\infty^1(\cN\subset\cM)$ and for positive $Y\ge 0$,
\[ \norm{Y}_{L_1^\infty}=\sup_{\norm{a}_{L_\infty^1}=1}|\langle Y, a\rangle_{\si_\tr}|=\sup_{a\ge 0, \norm{a}_{L_\infty^1}=1}\langle Y, a\rangle_{\si_\tr}\ .\]
By modifying the phase factor and arithmetic-geometric mean inequality, we have
 \begin{align} \label{eq:positivity} \sup_{a\ge 0, \norm{a}_{L_\infty^1}=1}\langle a,\sum_{i}y_iy_i^\dagger \rangle_{\si_\tr}+\sup_{b\ge 0, \norm{b}_{L_\infty^1}=1}\langle b, \sum_{i}z_i^\dagger z_i\rangle_{\si,\tr}-2\sum_{i=1}^k \text{Re }\phi(y_iz_i)\ge 0\,.
\end{align}
Denote $B_{+}$ as the positive unit ball of $L_\infty^1(\cN\subset\cM)$ and
$C(B_{+}\times B_{+})$ as the real continuous function space. For each pair of finite families ${\bf y}=\{y_i\}$ and ${\bf z}=\{z_i\}$, we define the function \[f_{{\bf y}, {\bf z}}:B_+\times B_+\to \mathbb{R}\ , f_{{\bf y}, {\bf z}}(a,b)=\langle a,\sum_{i}y_iy_i^\dagger \rangle_{\si_\tr}+\langle b, \sum_{i}z_i^\dagger z_i\rangle_{\si,\tr}-\sum_{i=1}^k \text{Re} \phi(y_iz_i)\,.\  \]
We define the cones in $C(B_{+}\times B_{+})$
\begin{align*}
    &C=\{f_{{\bf y}, {\bf z}} \ | \ \{y_i\}, \{z_i\}\subset \cM  \}
    \\ &C_-=\{f\in C(B_{+}\times B_{+},\mathbb{R})  \ | \sup f<0  \}\,.
\end{align*}
Note that both $C$ and $C_-$ are convex and $C_-$ is open. Moreover, $C\cap C_-=\emptyset$ because of \eqref{eq:positivity}. By Hahn-Banach separation Theorem, there exists a linear functional
$\psi: C(B_{+}\times B_{+})\to \mathbb{R}$ such that
\[ \psi(f_-) \le \lambda \le \psi(f_{{\bf y}, {\bf z}})\]
for any $f_-\in C_-$ and $f_{{\bf y}, {\bf z}}\in C$. Since $C_-$ is a cone, $\lambda\ge 0$ and hence $\psi$ is a positive linear function. Up to normalization, there exists a probability measure $\mu$ on $B_+\times B_+$ such that $\psi(f)=\int_{B_+\times B_+}f(a,b)\ d\mu(a,b)$. Take
\[a_0=\int_{B_+\times B_+} a\  d\mu(a,b)\ ,\  b_0=\int_{B_+\times B_+} b\  d\mu(a,b)\,.\]
By convexity of $B_+$, we have $a_0,b_0\in B_+$ and moreover for every $y,z\in \cM$,
\begin{align*}
\psi(f_{\{y\}, \{z\}})=&\int_{B_+\times B_+}f(a,b)d\mu(a,b)\\ =&
\int_{B_+\times B_+}\langle a,yy^\dagger \rangle_{\si_\tr}d\mu(a,b)+\int_{B_+\times B_+}\langle b z^\dagger z\rangle_{\si,\tr}d\mu(a,b)-2 \text{Re} \phi(yz)
\\=&
\langle a_0,yy^\dagger \rangle_{\si_\tr}+\langle b_0,z^\dagger z\rangle_{\si,\tr}- \text{Re} \phi(yz)\ge 0\,.
\end{align*}
Rescaling $y_i$ and $z_i$ again, we have
\begin{align*}
|\phi(yz)|\le \tr(a_0\si_\tr^{1/2}yy^\dagger \si_\tr^{1/2})^{1/2} \tr(b_0\si_\tr^{1/2}z^\dagger z\si_\tr^{1/2})^{1/2}=\norm{a_0^{1/2}\si_\tr^{1/2}y}_{2,\tr}\norm{z\si_\tr^{1/2}b_0^{1/2}}_{2,\tr}\,,
\end{align*}
where $\|.\|_{2,\tr}$ denotes the Hilbert Schmidt norm. One can further find invertible $a_1,b_1\in (1+\eps)B_+$ such that
\begin{align*}
|\phi(yz)|\le \norm{a_1^{1/2}\si_\tr^{1/2}y}_{2,\tr}\norm{z\si_\tr^{1/2}b_1^{1/2}}_{2,\tr}
\end{align*}
Because of the invertibility of $a_1,b_1$, there exists a contraction $u$ such that
\[ \phi(yz)=\tr( ua_1^{1/2}\si_\tr^{1/2}yz\si_\tr^{1/2}b_1^{1/2})=\langle a_1^{1/2}u^\dag b_1^{1/2}, yz \rangle_{\si_\tr}\]
Note that by H\"older inequality,
\begin{align*} \norm{a_1^{1/2}u^\dag b_1^{1/2}}_{L_\infty^1}=&\sup_{\norm{X}_{2}=\norm{Y}_{2}=1}\norm{Xa_1^{1/2}u^\dag b_1^{1/2}Y}_{1}\\
\\=&\sup_{X,Y}\norm{\si_\tr^{1/2}Xa_1^{1/2}}_{2,\tr}
\norm{u^\dag b_1^{1/2}Y \si_\tr^{1/2}}_{2,\tr}
\\ =&\sup_{X,Y}\tr(\si_\tr Xa_1X^\dag)^{1/2}
\tr(Y^\dag b_1^{1/2}uu^\dag b_1^{1/2}Y\si_\tr)
\\ \le &\sup_{X,Y}\norm{Xa_1X^\dag}_{1}^{1/2}
\norm{Y^\dag b_1 Y}_{1}^{1/2}\le \norm{a_1}_{L_\infty^1}^{1/2}\norm{b_1}_{L_\infty^1}^{1/2}=1+\eps.
\end{align*}
Therefore, for any factroization $x=yz$,
$$\norm{x}_{h}=\phi(x)= \langle a_1^{1/2}u^\dag b_1^{1/2}, x \rangle_{\si_\tr}\le \norm{a_1^{1/2}u^\dag b_1^{1/2}}_{L_\infty^1}\norm{x}_{L_1^\infty}\le (1+\eps) \norm{x}_{L_1^\infty}.$$
Since $\eps$ is arbitrary, that concludes the proof. \end{proof}

The second factorization lemma is generalization of \cite[Theorem 1.5]{Pisier98NCvector} for subalgebra.
\begin{lem}\label{factorization2}
For any $x\in L_p(\sigma_\tr)$,
 \begin{align}\label{eq:factorization}
     \|x\|_{L_p(\sigma_\tr)}=\inf_{\substack{x=ayb\\a,b\in\cN}}\,\|a\|_{L_{2p}(\sigma_\tr)}\,\|y\|_{L_\infty^p(\cN\subset\cM)}\,\|b\|_{L_{2p}(\sigma_\tr)}\,.
 \end{align}
 where the infimum is over all factorizations $x=ayb$ with $a,b\in \cN$ and $y\in \cM$.
 \end{lem}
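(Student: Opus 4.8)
The plan is to treat the two inequalities in \eqref{eq:factorization} separately. One of them is immediate: for any factorization $x=ayb$ with $a,b\in\cN$ and $y\in\cM$, plugging this particular pair $(a,b)$ into the supremum \eqref{linfty1pq} defining $\norm{y}_{L_\infty^p(\cN\subset\cM)}$ gives $\norm{x}_{L_p(\sigma_\tr)}=\norm{ayb}_{L_p(\sigma_\tr)}\le\norm{a}_{L_{2p}(\sigma_\tr)}\,\norm{y}_{L_\infty^p(\cN\subset\cM)}\,\norm{b}_{L_{2p}(\sigma_\tr)}$, so minimising over factorizations yields $\norm{x}_{L_p(\sigma_\tr)}\le N(x)$, where $N(x)$ denotes the right-hand side of \eqref{eq:factorization}. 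The content is therefore the reverse bound $N(x)\le\norm{x}_{L_p(\sigma_\tr)}$.

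To prove this I would first verify that $N$ is a norm on $\cM$: homogeneity is clear, and the triangle inequality is obtained exactly as in the proof of \Cref{factorization}, by the matrix-amplification trick --- given near-optimal factorizations $x_j=a_jy_jb_j$ ($j=1,2$), set $a:=(a_1a_1^\dagger+a_2a_2^\dagger+\delta\Id)^{1/2}\in\cN$ and $b:=(b_1^\dagger b_1+b_2^\dagger b_2+\delta\Id)^{1/2}\in\cN$ and write $x_1+x_2=a\,Y\,b$ with $Y=\sum_{j=1}^{2}(a^{-1}a_j)\,y_j\,(b_jb^{-1})$, which is the product of the $1\times 2$ row $(a^{-1}a_1,\ a^{-1}a_2)$, the block-diagonal $\operatorname{diag}(y_1,y_2)$ and the $2\times 1$ column with entries $b_1b^{-1}$, $b_2b^{-1}$; the outer row and column have $\cN$-entries and operator norm $\le1$ (their Gram operators being $\le\Id$), so $\norm{Y}_{L_\infty^p(\cN\subset\cM)}\le\max_j\norm{y_j}_{L_\infty^p(\cN\subset\cM)}$ (the $L_\infty^p$-norm of $\operatorname{diag}(y_1,y_2)$ over $M_2(\cN)\subset M_2(\cM)$ being the maximum of those of $y_1,y_2$), and then rescaling $a_j,b_j$ and letting $\delta\to0$ gives subadditivity. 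Once $N$ is a norm, Hahn--Banach reduces the reverse bound to showing that the polar of $\{N\le1\}$ with respect to the KMS pairing $\langle\cdot,\cdot\rangle_{\sigma_\tr}$ lies in the unit ball of $\norm{\cdot}_{L_{p'}(\sigma_\tr)}$. In finite dimension the infimum in $N$ is attained, so $\{N\le1\}=\{ayb:\ a,b\in\cN,\ \norm{a}_{L_{2p}(\sigma_\tr)},\norm{b}_{L_{2p}(\sigma_\tr)}\le1,\ \norm{y}_{L_\infty^p(\cN\subset\cM)}\le1\}$; using $\sigma_\tr\in\cN'$ (shown above) to commute $\cN$-elements past the $\sigma_\tr$-weights gives $\langle\phi,ayb\rangle_{\sigma_\tr}=\langle a^\dagger\phi b^\dagger,y\rangle_{\sigma_\tr}$, so that taking the supremum over $\norm{y}_{L_\infty^p(\cN\subset\cM)}\le1$ and invoking the duality $(L_\infty^p(\cN\subset\cM))^*=L_1^{p'}(\cN\subset\cM)$ from \Cref{prop_duality}(ii) reduces the statement to the identity
\[
\norm{\phi}_{L_{p'}(\sigma_\tr)}=\sup_{\substack{a,b\in\cN\\ \norm{a}_{L_{2p}(\sigma_\tr)},\,\norm{b}_{L_{2p}(\sigma_\tr)}\le1}}\norm{a\,\phi\,b}_{L_1^{p'}(\cN\subset\cM)}\,,\qquad \phi\in\cM
\]
(after relabelling $a\mapsto a^\dagger,\ b\mapsto b^\dagger$, which leaves the constraints unchanged).

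For this identity, one inequality is H\"older's inequality \eqref{eq:holberamal}: for admissible $a,b$ the trivial factorization $a\phi b=a\cdot\phi\cdot b$ already gives $\norm{a\phi b}_{L_1^{p'}(\cN\subset\cM)}\le\norm{a}_{L_{2p}(\sigma_\tr)}\,\norm{\phi}_{L_{p'}(\sigma_\tr)}\,\norm{b}_{L_{2p}(\sigma_\tr)}\le\norm{\phi}_{L_{p'}(\sigma_\tr)}$. The reverse inequality is where the real work lies, namely the production of near-optimal $\cN$-valued density factors $a,b$, and I would carry it out by the Grothendieck--Pietsch separation argument already used in the second half of the proof of \Cref{factorization}: polar-decompose $\sigma_\tr^{\frac{1}{2p'}}\phi\,\sigma_\tr^{\frac{1}{2p'}}$ to reduce to $\phi\ge0$; for finite families of elements of $\cM$, form on a product of positive unit balls inside $\cN$ (in the appropriate $\cN$-amalgamated $L_\infty$-type norm) the affine functions recording the defect in the target inequality; separate the convex cone they generate from the open negative cone in the corresponding space $C(B_+\times B_+)$ by Hahn--Banach; and take the barycenters of the resulting probability measure, which after a normalisation step yield the required $a,b\in\cN$ with $\norm{a}_{L_{2p}(\sigma_\tr)},\norm{b}_{L_{2p}(\sigma_\tr)}\le1$. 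The positivity reductions of \Cref{prop_duality}(iii) (formulas \eqref{eq_prop_normp_positive}--\eqref{eq_prop_normq_positive}) are what let one pass between the one-density and two-sided-sandwich forms of these norms, and $\sigma_\tr\in\cN'$ is used throughout to move $\cN$-elements past the weights.

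I expect this separation step --- the explicit construction of the optimal $\cN$-valued sandwiching densities --- to be the main obstacle; everything else is bookkeeping with H\"older's inequality, the duality of amalgamated $L_p$-spaces, and the positivity reductions. A fallback that avoids a fresh separation argument is to establish the two endpoints first --- $p=\infty$ is immediate, since $L_\infty^\infty(\cN\subset\cM)$ is just $\cM$ with the operator norm and one may take $a=b=\Id$, while $p=1$ is the predual reformulation of \Cref{factorization} --- and then interpolate via the Calder\'on-product compatibility of the amalgamated $L_p$-scale recorded in \Cref{prop_duality}(iv); in the special case $\cN=\cM$ the statement likewise collapses to a classical polar-decomposition factorization of $x$.
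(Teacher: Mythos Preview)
Your outline coincides with the paper's almost entirely: the easy direction from the definition of $\norm{\cdot}_{L_\infty^p}$, the verification that $N$ is a norm via the $2\times2$-amplification (using the identification $\mathbb{M}_2(L_\infty^p(\cN\subset\cM))\cong L_\infty^p(\mathbb{M}_2(\cN)\subset\mathbb{M}_2(\cM))$ together with the $\cN$-contractivity of the row and column factors), and the passage to the dual norm
\[
\norm{z}_{h^*}=\sup_{\substack{a,b\in\cN\\ \norm{a}_{L_{2p}(\sigma_\tr)}=\norm{b}_{L_{2p}(\sigma_\tr)}=1}}\norm{a^\dagger z b^\dagger}_{L_1^{p'}(\cN\subset\cM)}
\]
are all exactly as in the paper. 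The only divergence is in the remaining inequality $\norm{z}_{h^*}\ge\norm{z}_{L_{p'}(\sigma_\tr)}$. You propose to re-run a Grothendieck--Pietsch separation as in \Cref{factorization}; the paper instead unwinds the $L_1^{p'}$-norm as an infimum over positive invertible $a_1,b_1\in\cN$ (via \Cref{prop_duality}(iii)), performs a sup--inf exchange using the min--max inequality, and concludes by evaluating on the diagonal $a=a_1$, $b=b_1$, where the expression collapses to $\norm{z}_{L_{p'}(\sigma_\tr)}$. So the paper's route is shorter and does not invoke the separation machinery a second time. Your GP argument would also succeed but is heavier than what is actually needed here; the interpolation fallback you sketch is likewise sound in principle but is not the approach the paper takes.
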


\begin{proof}Once again, we use the shorter notations  $\norm{\cdot}_{}:=\norm{\cdot}_{\infty}$,
 $\norm{\cdot}_{p}:=\norm{\cdot}_{L_p(\si_\tr)}$ and $\norm{\cdot }_{L_p^q}:=\norm{\cdot }_{L_p^q(\cN\subset\cM)}$. Denote
\[\norm{x}_{h}:=\inf_{x=ayb}\norm{a}_{2p}\norm{y}_{L_{\infty}^{p}}\norm{b}_{2p}\ \]
where the infimum is over all factorizations $x=ayb$ with $a,b\in \cN$ and $y\in \cM$.
We first show that $\norm{\cdot}_{h}$ is a norm. Let $x_1=a_1y_1b_1$ and $x_2=a_2y_2b_2$ with $\norm{y_1}_{L_\infty^p}=\norm{y_2}_{L_\infty^p}=1$.
Take small $\delta>0$ and denote $a=(a_1a_1^\dag+a_2a_2^\dag+\delta \Id)^{1/2}, b=(b_1^\dag b_1+b_2^\dag b_2+\delta \Id)^{1/2}$.
We have $x_1+x_2= ayb$ for
\begin{align*}
y= a^{-1}(x_1+x_2)b^{-1}=a^{-1}(a_1y_1b_1+a_2y_2b_2)b^{-1}
\end{align*}
We show that $$\norm{y}_{L_\infty^p} \le \max\{ \norm{y_1}_{L_\infty^p}, \norm{y_1}_{L_\infty^p}\}\le 1
$$
Indeed, we have
\begin{align*}
\left[\begin{array}{cc}y&0\\ 0&0\end{array}\right]=\left[\begin{array}{cc} a^{-1}a_1&a^{-1}a_2\\ 0&0\end{array}\right]\cdot \left[\begin{array}{cc}y&0\\ 0&0\end{array}\right]\cdot\left[\begin{array}{cc}b_1b^{-1}&0\\ b_2 b^{-1}& 0\end{array}\right]
\end{align*}
Denote $A=\left[\begin{array}{cc} a^{-1}a_1&a^{-1}a_2\\ 0&0\end{array}\right]$ and $B=\left[\begin{array}{cc}b_1b^{-1}&0\\ b_2 b^{-1}& 0\end{array}\right]$. We see that $A,B\in \mathbb{M}_2(\cN)$ and
\begin{align*}
&\norm{A}_{\infty}=\norm{a^{-1}(a_1a_1^\dag+a_2a_2^\dag)a^{-1}}_{\infty}\le 1
\\ &\norm{B}_{\infty}=\norm{b^{-1}(b_1^\dag b_1+b_2^\dag b_2)b^{-1}}_{\infty}\le 1
\end{align*}
Then by the operator space structure of $L_\infty^p$,
\begin{align*}
\norm{y}_{L_\infty^p} =&\norm{\left[\begin{array}{cc}y&0\\ 0&0\end{array}\right]}_{L_\infty^p(\mathbb{M}_2(\cN)\subset\mathbb{M}_2(\cM))}\\
=&\norm{A \cdot \left[\begin{array}{cc}y_1&0\\ 0&y_2\end{array}\right]\cdot B}_{L_\infty^p(\mathbb{M}_2(\cN)\subset\mathbb{M}_2(\cM))}
\\ \le &\norm{A}_{\infty}\cdot \norm{\left[\begin{array}{cc}y_1&0\\ 0&y_2\end{array}\right]}_{L_\infty^p(\mathbb{M}_2(\cN)\subset\mathbb{M}_2(\cM))}\cdot \norm{B}_{\infty}
\\ \le  &\max\{ \norm{y_1}_{L_\infty^p}, \norm{y_1}_{L_\infty^p}\}\le 1\,.
\end{align*}
Therefore, we have $x_1+x_2= ayb$ for
\begin{align*}
&\norm{a}_{2p}=\norm{a_1a_1^\dag+a_2a_2^\dag+\delta \Id}_{p}^{1/2}\le \Big( \norm{a_1a_1^\dag}_{{p}}+\norm{a_2a_2^\dag}_{{p}}+\delta\Big)^{1/2}\\
&\norm{b}_{2p}=\norm{b_1^\dag b_1+b_2^\dag b_2+\delta \Id}_{p}^{1/2}\le \Big( \norm{b_1^\dag b_1}_{{p}}+\norm{b_2^\dag b_2}_{{p}}+\delta\Big)^{1/2}\,.\\
\end{align*}
Thus
\begin{align*}
\norm{x_1+x_2}_{h}\le &\norm{a}_{{2p}}\norm{y}_{L_\infty^p} \norm{b}_{{2p}}
\\ \le &\Big( \norm{a_1a_1^\dag}_{{p}}+\norm{a_2a_2^\dag}_{{p}}+\delta\Big)^{1/2}\Big( \norm{b_1^\dag b_1}_{{p}}+\norm{b_2^\dag b_2}_{{p}}+\delta\Big)^{1/2}\,.
\end{align*}
Taking $\delta\to 0$ and rescaling $a_1,b_1$, we obtain
\begin{align*}
\norm{x_1+x_2}_{h}
\le \norm{a_1}_{{2p}}\norm{b_1}_{{2p}}+\norm{a_2}_{2p}\norm{b_2}_{{2p}}
\end{align*}
This proves the triangle inequality.

We now show that $\norm{\cdot}_{h}=\norm{\cdot}_{p}$ coincide. First, note that by definition of $L_\infty^p$
\[ \norm{y}_{L_{\infty}^{p}}\ge \frac{\norm{ayb}_{{p}}}{\norm{a}_{{2p}}\norm{b}_{2p}}\]
for any $a,b \in \cN, y\in \cM$. This implies $\norm{\cdot}_{h}\ge \norm{\cdot}_{p}$.
To see the converse direction, we consider the dual norm $\norm{\cdot}_{h^*}$ given by the sesquilinear pairing $\langle \cdot,\cdot\rangle_{\si_\tr}$. Then
\begin{align*}
\norm{z}_{h^*}=&\sup_{\norm{x}_{h}=1}|\langle z,x \rangle_{\si_\tr} |
\\ =&\underset{{\substack{a,b\in \cN, y\in \cM \\
\norm{a}_{{2p}}=\norm{b}_{{2p}}=\norm{y}_{L_\infty^{p}} =1}}}{\sup} |\langle z,ayb \rangle_{\si_\tr}|
\\ =&\underset{{\substack{a,b\in \cN, y\in \cM \\
\norm{a}_{{2p}}=\norm{b}_{{2p}}=\norm{y}_{L_\infty^{p}} =1}}}{\sup}  |\langle a^\dag zb^\dag,y  \rangle_{\si_\tr} |
\\ =&\underset{{\substack{a,b\in \cN, y\in \cM \\
\norm{a}_{{2p}}=\norm{b}_{{2p}}=1}}}{\sup}  \norm{a^\dag zb^\dag}_{L_1^{p'}}
\\ =&\underset{\substack{a_1,b_1>0, a_1,b_1\in \cN \\ \norm{a}_{{2p}}=\norm{b}_{{2p}}=1\\ }}{\inf}\underset{{\substack{a,b\in \cN, y\in \cM \\
\norm{a}_{{2p}}=\norm{b}_{{2p}}=1}}}{\sup}\norm{b_1^{-1}b^\dag za^\dag a_1^{-1}}_{{p'}}
\\ \ge &\underset{{\substack{a,b\in \cN, y\in \cM \\
\norm{a}_{{2p}}=\norm{b}_{{2p}}=1}}}{\sup}
\underset{\substack{a_1,b_1>0, a_1,b_1\in \cN\\ \norm{a}_{{2p}}=\norm{b}_{{2p}}=1\\ }}{\inf}\norm{b_1^{-1}b^\dag za^\dag a_1^{-1}}_{{p'}}
\\ \ge & \norm{z}_{{p'}}
\end{align*}
where the last inequality follows from choosing $a=a_1,b=b_1$. This proves $\norm{z}_{h^*}\ge \norm{z}_{{p'}}$ which by duality gives $\norm{x}_{h}\le \norm{x}_{{p}}$. That completes the proof.

\end{proof}

We now discuss the operator space structure of amalgamated $L_p$ spaces (see the analogous treatment of the symmetric case in \cite[Appendix A.2]{gaoindex}). We first define the operator space structure of $L_\infty^1(\mathcal{N}\subset\cM)$, then extend the structure to $L_\infty^p(\mathcal{N}\subset\cM)$ for other $p$ by interpolation.
We define the matrix norm of $L_\infty^1(\mathcal{N}\subset\cM)$ via the following isometry,
\begin{align}\label{eq:opspacestruc}
    \mathbb{M}_n(L_\infty^1(\cN\subset\cM))\cong L_\infty^1(\mathbb{M}_n(\cN)\subset \mathbb{M}_n(\cM))\,,
\end{align}
where for each $n\in\mathbb{N}$ the enlarged amalgamated space on $L_\infty^1(\mathbb{M}_n(\cN)\subset \mathbb{M}_n(\cM))$ is defined with respect to the state $\sigma_\tr^{(n)}:=n^{-1}\Id_n\otimes\sigma_\tr $. We verify that these norms satisfy Ruan's axioms recalled in \Cref{sec:opspace}. Let $e_1$ (resp. $e_2$) be projection onto $\mathbb{M}_m(\cM)$ (resp. $\mathbb{M}_n(\cM)$).
 Given $v\in \mathbb{M}_m(\cM)$ and $w\in \mathbb{M}_n(\cM)$, we have $x:=v\oplus w$ where $v=e_1xe_1$ and $x=e_2xe_2$. Then,
\begin{align}
    \|x\|_{L_\infty^1(\mathbb{M}_{m+n}(\cN)\subset \mathbb{M}_{m+n}(\cM))}&=\sup_{\substack{\|a\|_{L_2(\sigma_\tr^{(m+n)})}=1\nonumber\\ \|b\|_{L_2(\sigma_\tr^{(m+n)})}=1}}\,\|axb\|_{L_1(\sigma_\tr^{(m+n)})}\\
   & \le\sup_{\substack{\|a\|_{L_2(\sigma_\tr^{(m+n)})}=1\nonumber\\ \|b\|_{L_2(\sigma_\tr^{(m+n)})}=1}}\,\|ae_1xe_1b\|_{L_1(\sigma_\tr^{(m+n)})}+\|ae_2xe_2b\|_{L_1(\sigma_\tr^{(m+n)})}\nonumber\\
   &\le\sup_{\substack{\|a\|_{L_2(\sigma_\tr^{(m+n)})}=1\nonumber\\ \|b\|_{L_2(\sigma_\tr^{(m+n)})}=1}}\,\|ae_1ve_1b\|_{L_1(\sigma_\tr^{(m+n)})}+\|ae_2we_2b\|_{L_1(\sigma_\tr^{(m+n)})}\nonumber\\
  &\le\sup_{\substack{\|a\|_{L_2(\sigma_\tr^{(m+n)})}=1\nonumber\\ \|b\|_{L_2(\sigma_\tr^{(m+n)})}=1}}\,\||ae_1|\,v\,|(e_1b)^\dagger|\|_{L_1(\sigma_\tr^{(m)})}   +\||ae_2|\,w\,|(e_2b)^\dagger|\|_{L_1(\sigma_\tr^{(n)})}\label{eq:taking||}\,. \\
\end{align}
In \eqref{eq:taking||} above, we used that
\begin{align*}
    \|ae_1ve_1b\|_{L_1(\sigma_\tr^{(m+n)})}&\overset{(1)}{=}\tr\Big|(\sigma_\tr^{(m+n)})^{\frac{1}{2}} ae_1ve_1b(\sigma_\tr^{(m+n)})^{\frac{1}{2}}\Big|\overset{(1)}{=}\tr\Big|(\sigma_\tr^{(m+n)})^{\frac{1}{2}} U|ae_1|\,v \,|(e_1b)^\dagger|U'(\sigma_\tr^{(m+n)})^{\frac{1}{2}}\Big|\,\\
    &\overset{(2)}{=}\tr\Big|U(\sigma_\tr^{(m+n)})^{\frac{1}{2}} |ae_1|\,v \,|(e_1b)^\dagger|(\sigma_\tr^{(m+n)})^{\frac{1}{2}}U'\Big|=\||ae_1|\,v\,|(e_1b)^\dagger|\|_{L_1(\sigma_\tr^{(m+n)})}
\end{align*}
where in $(1)$, $U$ and $U'$ respectively are given by the polar decomposition $ae_1:=U|ae_1|$ and $(e_1b)^\dagger=|(e_1b)|(U')^\dag$. In $(2)$, we use the fact that both $U'$ and $U$ belong to $\mathbb{M}_{m+n}(\cN)$ and hence commute with $\sigma_\tr^{(m+n)}$. The same trick applies to $ae_2$ and $e_2b$. Moreover, by the fact that both $|ae_1|$ and $|(e_1b)^\dagger|$ belong to $\mathbb{M}_{m}(\cN)$ (resp. $|ae_2|, \,|(e_2b)^\dagger|\in\mathbb{M}_n(\cN)$), \eqref{eq:taking||} becomes
\begin{align}
     \|x\|_{L_\infty^1(\mathbb{M}_{m+n}(\cN)\subset \mathbb{M}_{m+n}(\cM))}&\le \sup_{\substack{\|a\|_{L_2(\sigma_\tr^{(m+n)})}=1\nonumber\\ \|b\|_{L_2(\sigma_\tr^{(m+n)})}=1}}\,\||ae_1|\|_{L_2(\sigma_\tr^{(m)})}\|v\|_{L_\infty^1(\mathbb{M}_m(\cN)\subset \mathbb{M}_m(\cM))}\||(e_1b)^\dagger|\|_{L_2(\sigma_\tr^{(m)})}   \\
     &\qquad +\||ae_2|\|_{L_2(\sigma_\tr^{(n)})}\|w\|_{L_\infty^1(\mathbb{M}_n(\cN)\subset \mathbb{M}_n(\cM))}\||(e_2b)^\dagger|\|_{L_2(\sigma_\tr^{(n)})}\,.\label{eq:almostthere}
\end{align}
Now, we observe that
 \begin{align*}
   \| a\|_{L_2(\sigma_\tr^{(m+n)})}^2&=\|a(e_1+e_2)\|_{L_2(\sigma_\tr^{(m+n)})}^2\\ &=\tr\big((\sigma_\tr^{(m+n)})^{\frac{1}{2}}\,(e_1+e_2)^\dagger a^\dagger (\sigma_\tr^{(m+n)})^{\frac{1}{2}}\, a(e_1+e_2)\big)\\
&\overset{(1)}{=}\tr\big((\sigma_\tr^{(m+n)})^{\frac{1}{2}}
   \,(e_1+e_2)(e_1+e_2)^\dagger a^\dagger {(\sigma_{\tr}^{(m+n)})}^{\frac{1}{2}} \,a\big)\\
   &=\tr\big((\sigma_\tr^{(m+n)})^{\frac{1}{2}}
   \,(e_1+e_2)(e_1+e_2)^\dagger a^\dagger {(\sigma_{\tr}^{(m+n)})}^{\frac{1}{2}} \,a\big)\\
   &=\tr\big((\sigma_\tr^{(m+n)})^{\frac{1}{2}}
   \,e_1 a^\dagger {(\sigma_{\tr}^{(m+n)})}^{\frac{1}{2}} \,a\big)+\tr\big((\sigma_\tr^{(m+n)})^{\frac{1}{2}}
   \,e_2 a^\dagger {(\sigma_{\tr}^{(m+n)})}^{\frac{1}{2}} \,a\big)\\
   &\overset{(2)}{=}\| ae_1\|_{L_2(\sigma_\tr^{(m+n)})}^2+\| ae_2\|_{L_2(\sigma_\tr^{(m+n)})}^2\\
   &=\| |ae_1|\|_{L_2(\sigma_\tr^{(m)})}^2+\| |ae_2|\|_{L_2(\sigma_\tr^{(n)})}^2\,.
\end{align*}
 In $(1)$ and $(2)$, we used the tracial property and the fact that both projections $e_1$ and $e_2$ commute with $\sigma_\tr^{(m+n)}$ since the
  $e_i$'s act trivially on $\cM$ whereas $\sigma_\tr^{(m+n)}$ acts trivially on $\mathbb{M}_{m+n}$. Similarly,
 \begin{align*}
     \| b\|_{L_2(\sigma_\tr^{(m+n)})}^2=  \| |(e_1b)^\dagger|\|_{L_2(\sigma_\tr^{(m)})}^2+\| |(e_2b)^\dagger|\|_{L_2(\sigma_\tr^{(n)})}^2\,.
 \end{align*}
Applying H\"{o}lder's inequality to \eqref{eq:almostthere}, we have that
 \begin{align*}
 &  \|x\|_{L_\infty^1(\mathbb{M}_{m+n}(\cN)\subset \mathbb{M}_{m+n}(\cM))}\\
& \qquad\le \Big(\| |ae_1|\|_{L_2(\sigma_\tr^{(m)})}^2+\| |ae_2|\|_{L_2(\sigma_\tr^{(n)})}^2\Big)^{\frac{1}{2}}\cdot \max\{\|v\|_{L_\infty^1(\mathbb{M}_m(\cN)\subset \mathbb{M}_m(\cM))},\|w\|_{L_\infty^1(\mathbb{M}_n(\cN)\subset \mathbb{M}_n(\cM))}\} \\
   &\qquad\qquad  \cdot \Big (\|(e_1b)^\dagger\|_{L_2(\sigma_\tr^{(m)})}^2+ \| |(e_2b)^\dagger|\|_{L_2(\sigma_\tr^{(n)})}^2\Big)^{\frac{1}{2}}\\
   &\qquad =\max\{\|v\|_{L_\infty^1(\mathbb{M}_m(\cN)\subset \mathbb{M}_m(\cM))},\|w\|_{L_\infty^1(\mathbb{M}_n(\cN)\subset \mathbb{M}_n(\cM))}\}\,.
\end{align*}
This verifies (i) in Ruan's axioms. For the second axiom (ii), consider $x\in \mathbb{M}_m(\cM)$ and $A\in \mathbb{M}_{n,m}$, $B\in \mathbb{M}_{m,n}$. We have
\begin{align*}
    \|AxB\|_{L_\infty^1(\mathbb{M}_n(\cN)\subset \mathbb{M}_n(\cM))}&=\sup_{\substack{\|a\|_{L_2(\sigma_\tr^{(n)})}=1\\ \|b\|_{L_2(\sigma_\tr^{(n)})}=1}} \|aAxBb\|_{L_1(\sigma_\tr^{(n)})}\\
    &\overset{(1)}{\le} \sup_{\substack{\|a\|_{L_2(\sigma_\tr^{(n)})}=1\\ \|b\|_{L_2(\sigma_\tr^{(n)})}=1}} \|aA\|_{L_2(\sigma_\tr^{(n)})}\,\|x\|_{L_\infty^1(\mathbb{M}_m(\cN)\subset \mathbb{M}_m(\cM))}\,\|Bb\|_{L_2(\sigma_\tr^{(n)})}\\
    &\overset{(2)}{\le} \sup_{\substack{\|a\|_{L_2(\sigma_\tr^{(n)})}=1\\ \|b\|_{L_2(\sigma_\tr^{(n)})}=1}} \|a\|_{L_2(\sigma_\tr^{(n)})}\,\|A\|_{\infty}\,\|x\|_{L_\infty^1(\mathbb{M}_m(\cN)\subset \mathbb{M}_m(\cM))}\,\|B\|_\infty\,\|b\|_{L_2(\sigma_\tr^{(n)})}\\
     &=  \,\|A\|_{\infty}\,\|x\|_{L_\infty^1(\mathbb{M}_m(\cN)\subset \mathbb{M}_m(\cM))}\,\|B\|_\infty\,.
\end{align*}
Here, equation (1) above follows again from the definition of $L_\infty^1(\mathbb{M}_m(\cN)\subset \mathbb{M}_m(\cM))$.
In $(2)$, we used H\"{o}lder's inequality and the fact that $\norm{A}_{\infty}:=\norm{A}_{\mathbb{M}_{n,m}}=\norm{A\ten {\bf 1}_\cM}_{L_\infty(\sigma_\tr^{(n)})}$, and similarly for $B$. This verifies Ruan's axiom (ii), which implies that \Cref{eq:opspacestruc} indeed provides an operator space structure for $L_\infty^1(\cN\subset\cM)$. As mentioned, this observation can be extended to the case of $L_\infty^p(\mathcal{N}\subset\cM)$ by complex interpolation:
\begin{prop}\label{opspaceLqp}
For any $1\le p\le \infty$, the identification
\begin{align*}
    \mathbb{M}_n(L_\infty^p(\cN\subset \cM))\cong
     L_\infty^p(\mathbb{M}_n(\cN)\subset\mathbb{M}_n(\cM))\,
\end{align*}
defines an operator space structure on $L_\infty^p(\cN\subset \cM)$.
\end{prop}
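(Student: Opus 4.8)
The plan is to build the operator space structure on $L_\infty^p(\cN\subset\cM)$ for all $1\le p\le\infty$ out of the two endpoint cases $p=1$ and $p=\infty$ by complex interpolation. For $p=1$, the required verification of Ruan's axioms (i) and (ii) for the matrix norms defined through \eqref{eq:opspacestruc} is precisely the computation carried out immediately above the statement. For $p=\infty$, the equality case of \eqref{eq24} (Fubini's theorem) identifies $L_\infty^\infty(\mathbb{M}_n(\cN)\subset\mathbb{M}_n(\cM))$ isometrically with $L_\infty(\sigma_\tr^{(n)})=(\mathbb{M}_n(\cM),\norm{\cdot}_\infty)$, so the identification claimed in the statement at $p=\infty$ is nothing but the canonical operator space structure of $\cM\subset\cB(\cH)$, which trivially satisfies Ruan's axioms.

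First I would observe that $(L_\infty^\infty(\cN\subset\cM),L_\infty^1(\cN\subset\cM))$ is a compatible couple of operator spaces: the two spaces share the underlying finite-dimensional vector space $\cM$, and by \eqref{eq24} the formal identity $\cM=L_\infty^\infty(\cN\subset\cM)\to L_\infty^1(\cN\subset\cM)$ is contractive; applying \eqref{eq24} to $\mathbb{M}_n(\cN)\subset\mathbb{M}_n(\cM)$ with the amplified state $\sigma_\tr^{(n)}=n^{-1}\Id_n\otimes\sigma_\tr$ shows it is in fact completely contractive, so the couple is admissible for the theory of complex interpolation of operator spaces \cite{Pisier03}. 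Taking $\theta=1/p$ in \eqref{eq:interp} with $q_0=q_1=\infty$, $p_0=\infty$, $p_1=1$ — for which the admissibility condition $(p_1-q_1)(p_0-q_0)\ge 0$ holds since both $p_i\le q_i$ — gives the isometric Banach-space identity $L_\infty^p(\cN\subset\cM)\cong[L_\infty^\infty(\cN\subset\cM),L_\infty^1(\cN\subset\cM)]_{1/p}$. Since the complex interpolation space of a compatible couple of operator spaces is again an operator space, with matrix norms $\mathbb{M}_n([X_0,X_1]_\theta)\cong[\mathbb{M}_n(X_0),\mathbb{M}_n(X_1)]_\theta$, this equips $L_\infty^p(\cN\subset\cM)$ with an operator space structure, which for $p\in\{1,\infty\}$ reduces to the two structures recalled above.

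It remains to check that this coincides with the structure described in the statement. For this I would apply \eqref{eq:interp} one level up, to the inclusion $\mathbb{M}_n(\cN)\subset\mathbb{M}_n(\cM)$ equipped with $\sigma_\tr^{(n)}$, obtaining $L_\infty^p(\mathbb{M}_n(\cN)\subset\mathbb{M}_n(\cM))\cong[L_\infty^\infty(\mathbb{M}_n(\cN)\subset\mathbb{M}_n(\cM)),L_\infty^1(\mathbb{M}_n(\cN)\subset\mathbb{M}_n(\cM))]_{1/p}$. Combining this with the two endpoint identifications $L_\infty^\infty(\mathbb{M}_n(\cN)\subset\mathbb{M}_n(\cM))\cong\mathbb{M}_n(\cM)\cong\mathbb{M}_n(L_\infty^\infty(\cN\subset\cM))$ (Fubini) and $L_\infty^1(\mathbb{M}_n(\cN)\subset\mathbb{M}_n(\cM))\cong\mathbb{M}_n(L_\infty^1(\cN\subset\cM))$ (the defining isometry \eqref{eq:opspacestruc}), and using that $\mathbb{M}_n(-)=\mathbb{M}_n\otimes_{\min}(-)$ (Proposition \ref{thm:ruan}) commutes with complex interpolation, one obtains
\[
\mathbb{M}_n(L_\infty^p(\cN\subset\cM))\cong[\mathbb{M}_n(L_\infty^\infty(\cN\subset\cM)),\mathbb{M}_n(L_\infty^1(\cN\subset\cM))]_{1/p}\cong L_\infty^p(\mathbb{M}_n(\cN)\subset\mathbb{M}_n(\cM)),
\]
which is exactly the asserted identity. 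The only point that requires genuine care is this last chain of isometric identifications: there is no deep obstacle, but one must make sure the matrix-level instances of \eqref{eq:interp} and of Fubini are taken with respect to the correct amplified state $\sigma_\tr^{(n)}$, so that the endpoint operator space structures entering the interpolation are precisely the ones already constructed for $p=1$ and $p=\infty$.
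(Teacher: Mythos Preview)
Your proposal is correct and follows essentially the same approach as the paper: both use the endpoint identifications at $p=\infty$ (via Fubini) and $p=1$ (via \eqref{eq:opspacestruc}, whose Ruan axioms were verified just above), and then interpolate using \eqref{eq:interp} applied to $\mathbb{M}_n(\cN)\subset\mathbb{M}_n(\cM)$ together with the fact that $\mathbb{M}_n(-)$ commutes with complex interpolation. Your version is somewhat more explicit about the compatible-couple structure and the role of $\sigma_\tr^{(n)}$, but the argument is the same.
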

\begin{proof}
We recall that $L_\infty^\infty(\cN\subset \cM)\cong L_\infty(\cM)$ by definition and its operator space structure is naturally given by $\mathbb{M}_n(L_\infty(\cM))\cong L_\infty(\mathbb{M}_n(\cM))$. Then by the interpolation \eqref{eq:interp}:
\begin{align*}
    L_\infty^p(\mathbb{M}_{n}(\cN)\subset \mathbb{M}_n(\cM))&\cong \big[L_\infty(\mathbb{M}_n(\cM)),\,L_\infty^1(\mathbb{M}_n(\cN)\subset \mathbb{M}_n(\cM)) \big]_{\frac{1}{p}}\\
    &\cong \big[\mathbb{M}_n(L_\infty(\cM)),\,\mathbb{M}_n(L_\infty^1(\cN\subset \cM)) \big]_{\frac{1}{p}}\\
    &\cong \mathbb{M}_n(L_\infty^p(\cN\subset\cM))\,.
\end{align*}
\end{proof}

Combining \Cref{opspaceLqp} with \eqref{eq:factorization}, we find (see also \cite[Lemma 3.12]{gao2020fisher} in the tracial case):
\begin{prop}
 For any $\cN$-bimodule map $\Phi:\cM\to \cM$ and $1\le p,q,s\le \infty$,
 		\begin{align}\label{eq:bimodulecb}
 	\|\Phi:\,L_\infty^p(\cN\subset \cM)\to L_\infty^q(\cN\subset \cM)\|=\|\Phi:L_s^p(\cN\subset\cM)\to L_s^q(\cN\subset \cM)\|\,.
 		\end{align}
In the case $p=2$, assuming that $\Phi$ is self-adjoint with respect to some invariant state $\sigma=E_{\cN*}(\sigma)$, we further have
\begin{align}\label{eqgapcb}
\norm{\Phi:L_2(\rho)\to L_2(\rho)}{}&=\norm{\Phi:L_2(\sigma_\tr)\to L_2(\sigma_\tr)}{}\\
&= \norm{\Phi:L_\infty^2(\cN\subset\cM)\to L_\infty^2(\cN\subset \cM)}{} \,,\nonumber
\end{align}
for any other invariant state $\rho=E_{\cN*}(\rho)$.
\end{prop}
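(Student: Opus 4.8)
The plan is to derive both parts of the proposition from one structural fact: that the amalgamated space $L_s^p(\cN\subset\cM)$ is, isometrically, an $\cN$-amalgamated space built over $L_\infty^p(\cN\subset\cM)$ whose weight exponent depends only on the outer index $s$. First I would establish this for $1\le s\le p$ by combining the defining formula \eqref{eq_def_DFnorms}---which writes $x\in L_s^p$ as $x=ayb$ with $a,b\in\cN$, two weights in $L_{2r}(\sigma_\tr)$, $1/r=1/s-1/p$, and $y\in L_p(\sigma_\tr)$---with the factorization \Cref{factorization2} applied to $\norm{y}_{L_p(\sigma_\tr)}=\norm{y}_{L_p^p(\cN\subset\cM)}$; since $\sigma_\tr$ restricts to a trace on $\cN$ and lies in $\cN'$, the two adjacent $\cN$-weights may be merged by H\"older's inequality in the tracial $L_p$-spaces over $\cN$, which yields
\begin{align*}
\norm{x}_{L_s^p(\cN\subset\cM)}=\inf\big\{\norm{a}_{L_{2s}(\sigma_\tr)}\,\norm{y}_{L_\infty^p(\cN\subset\cM)}\,\norm{b}_{L_{2s}(\sigma_\tr)}\ :\ x=ayb,\ a,b\in\cN,\ y\in\cM\big\}\,,
\end{align*}
together with the matching bound $\norm{ayb}_{L_s^p}\le\norm{a}_{L_{2s}(\sigma_\tr)}\norm{y}_{L_\infty^p}\norm{b}_{L_{2s}(\sigma_\tr)}$ for every such factorization. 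For $p\le s$ I would instead use the supremum form \eqref{linfty1pq} with \Cref{factorization}, or simply dualize.

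Given this, \eqref{eq:bimodulecb} follows quickly in one direction: if $\norm{x}_{L_s^p}\le1$, pick $x=ayb$ with $a,b\in\cN$ and $\norm{a}_{L_{2s}}\norm{y}_{L_\infty^p}\norm{b}_{L_{2s}}\le1+\eps$; since $\Phi$ is an $\cN$-bimodule map, $\Phi(x)=a\,\Phi(y)\,b$, and because the weight exponent for the target space $L_s^q$ is again $2s$,
\begin{align*}
\norm{\Phi(x)}_{L_s^q}\le\norm{a}_{L_{2s}}\,\norm{\Phi(y)}_{L_\infty^q}\,\norm{b}_{L_{2s}}\le(1+\eps)\,\norm{\Phi:L_\infty^p\to L_\infty^q}\,,
\end{align*}
so $\norm{\Phi:L_s^p\to L_s^q}\le\norm{\Phi:L_\infty^p\to L_\infty^q}$ whenever $s\le\min\{p,q\}$. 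Since $\sigma_\tr\in\cN'$, the $\langle\cdot,\cdot\rangle_{\sigma_\tr}$-adjoint $\Phi^*$ is again a bimodule map and $(L_s^p)^*=L_{s'}^{p'}$ by \Cref{prop_duality}(ii); applying the previous bound to $\Phi^*$ with outer index $1$ gives $\norm{\Phi:L_1^p\to L_1^q}=\norm{\Phi:L_\infty^p\to L_\infty^q}$, and, dually, the same value for every $s\ge\max\{p,q\}$, while the intermediate range is covered by complex interpolation \eqref{eq:interp} between $s=p$ and $s=q$ (the condition $(p_1-q_1)(p_0-q_0)\ge0$ holds because, for each inner index, one of the two factors vanishes). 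I expect the genuinely hard part to be the reverse inequality---that passing to a smaller outer index cannot strictly lower the norm of a bimodule map---and this is exactly where the operator-space machinery is needed: one must know that the displayed factorization is an \emph{isometric} identification of $L_s^p(\cN\subset\cM)$ with the $\cN$-amalgamated product $L_{2s}(\sigma_\tr)\otimes_{\cN}L_\infty^p(\cN\subset\cM)\otimes_{\cN}L_{2s}(\sigma_\tr)$, which relies on the amalgamated norms satisfying Ruan's axioms (\Cref{opspaceLqp}) and on the exactness of \Cref{factorization,factorization2}; a bimodule map between such amalgamated products then has precisely the norm of its ``core'', namely $\norm{\Phi:L_\infty^p\to L_\infty^q}$.

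For \eqref{eqgapcb} with $p=2$, the middle equality $\norm{\Phi:L_2(\sigma_\tr)\to L_2(\sigma_\tr)}=\norm{\Phi:L_\infty^2(\cN\subset\cM)\to L_\infty^2(\cN\subset\cM)}$ is the case $p=q=s=2$ of \eqref{eq:bimodulecb}, using $L_2^2(\cN\subset\cM)=L_2(\sigma_\tr)$ (Fubini's theorem, the equality case $p=q$ of \eqref{eq24}). For the first equality I would argue as follows. Every state with $\rho=E_{\cN*}(\rho)$ has the canonical form \eqref{cd} with the \emph{same} family $\{\tau_i\}$ as $\sigma_\tr$; hence $\rho$ and $\sigma_\tr$ are simultaneously block-diagonal in $\bigoplus_i\cH_i\otimes\cK_i$ with identical $\cK_i$-marginals, so they commute, and in $v:=\rho^{1/4}\sigma_\tr^{-1/4}=\sigma_\tr^{-1/4}\rho^{1/4}$ the $\tau_i$-factors cancel, leaving $v\in\cN$ positive and invertible. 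A short computation with the KMS inner products \eqref{eq:sigmakms} shows that $V:L_2(\rho)\to L_2(\sigma_\tr)$, $V(x):=v\,x\,v$, is a surjective isometry; and since $v,v^{-1}\in\cN$ and $\Phi$ is a bimodule map, $V\Phi V^{-1}=\Phi$, so $\norm{\Phi:L_2(\rho)\to L_2(\rho)}=\norm{\Phi:L_2(\sigma_\tr)\to L_2(\sigma_\tr)}$. (Only bimodularity enters here; the self-adjointness hypothesis is what lets one read off this common value as the spectral gap of $\Phi$.)
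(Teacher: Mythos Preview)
Your approach is essentially the paper's. The key ingredients---factorizing $L_s^p$ through $L_\infty^p$ via \Cref{factorization2}, pushing $\Phi$ through the $\cN$-weights using bimodularity, and then combining duality and complex interpolation---are exactly what the paper uses. The paper carries out the direct factorization argument for $s=1$ first and reaches other $s$ by interpolation between $s=1$ and $s=\infty$; you instead state the general factorization
\[
\|x\|_{L_s^p(\cN\subset\cM)}=\inf_{x=ayb}\|a\|_{L_{2s}(\sigma_\tr)}\,\|y\|_{L_\infty^p(\cN\subset\cM)}\,\|b\|_{L_{2s}(\sigma_\tr)}
\]
and read off the forward bound for every $s\le\min\{p,q\}$ directly, which is a clean repackaging of the same content.

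The one genuine divergence is the reverse inequality. You expect it to require the isometric identification of $L_s^p$ with the amalgamated product $L_{2s}(\sigma_\tr)\otimes_\cN L_\infty^p\otimes_\cN L_{2s}(\sigma_\tr)$ and the associated operator-space machinery. The paper does not go this route: it observes that $\sigma_\tr\in\cN'$ forces the KMS-adjoint $\Phi^{\mathrm{KMS}}$ to again be an $\cN$-bimodule map, applies the already-proved forward bound to $\Phi^{\mathrm{KMS}}$, and closes the chain by duality \eqref{eq:duality}. So the reverse inequality in the paper is obtained with exactly the same tools as the forward one; no further Ruan-axiom or amalgamated-tensor-product input is needed beyond what already went into the interpolation step. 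Your amalgamated-product viewpoint is a perfectly valid alternative, but it is heavier than what the paper actually invokes.

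Your treatment of \eqref{eqgapcb} is correct and more explicit than the paper's. The paper simply asserts that for self-adjoint $\Phi$ all the $L_2$-norms coincide with its spectral radius and cites \cite{gao2021spectral} for independence of the invariant state; your conjugation by $v=\rho^{1/4}\sigma_\tr^{-1/4}\in\cN$ (well-defined because every $E_{\cN*}$-invariant state shares the same $\tau_i$-blocks, so $\rho$ and $\sigma_\tr$ commute and the $\tau_i$ cancel) gives a direct proof that $V:L_2(\rho)\to L_2(\sigma_\tr)$, $Vx=vxv$, is a surjective isometry intertwining $\Phi$ with itself.
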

\begin{proof}
We first prove the direction "$\ge$" in \eqref{eq:bimodulecb} for $s=1$: let $x\in L_1^p(\cN\subset\cM)$ with $\norm{x}_{L_1^p(\cN\subset\cM)}<1$. By definition, there exists a decomposition $x=ayb$ of $x$ with $a,b\in\cN$ and $\|a\|_{L_{2p'}(\sigma_\tr)},\,\|b\|_{L_{2p'}(\sigma_\tr)}< 1$ and $\|y\|_{L_p(\sigma_\tr)}\le 1$, where $p'$ is the H\"{o}lder conjugate of $p$. Moreover, by the factorization property \eqref{eq:factorization}, $y=czd$ with $c,d\in\cN$, $\|c\|_{L_{2p}(\sigma_\tr)},\,\| d\|_{L_{2p}(\sigma_\tr)}< 1$ and $\|z\|_{L_\infty^p(\cN\subset\cM)}< 1$. Therefore, $x=(ac)z(db)$ and by H\"older inequality
\begin{align}
    \|ac\|_{L_2(\sigma_\tr)}\le \|a\|_{L_{2p'}(\sigma_\tr)}\|c\|_{L_{2p}(\sigma_\tr)}< 1\label{eq:ac}
\end{align}
where we used the fact $\sigma_\tr$ is a trace on $\cN$. Similarly, we have $\|db\|_{L_2(\sigma_\tr)}< 1$. Therefore, one can find $a',b',c',d'\in \cN$ such that $ac=a'c'$, $db=d'b'$ with $$\|a'\|_{L_{2q}(\sigma_\tr)}, \|c'\|_{L_{2q'}(\sigma_\tr)}, \|b'\|_{L_{2q'}(\sigma_\tr)}, \|d'\|_{L_{2q}(\sigma_\tr)}\le 1 \ .$$
Indeed, using the polar decomposition $ac=U|ac|=U|ac|^{1/q}|ac|^{1/q'}$, one can choose $a'=U|ac|^{1/q}$ and $c'=|ac|^{1/q'}$. We then use the module property of the map $\cN$, so that $\Phi(x)=\Phi(a'c'zd'b')=a'c'\Phi(z)d'b'$.
Since $\|z\|_{L_\infty^p(\cN\subset\cM)}\le 1$, we have
\begin{align*}
    \|\Phi(z)\|_{L_\infty^q(\cN\subset\cM)}&\le \|\Phi:L_\infty^p(\cN\subset\cM)\to L_\infty^q(\cN\subset\cM)\|\,.
\end{align*}
By definition of the $L_1^q$ and $L_\infty^q$ norms,
\begin{align*}\|\Phi(x)\|_{L_1^q(\cN\subset\cM)}\le \|c'\Phi(z)d'\|_{L_q(\cN\subset\cM)}\le&  \|c'\|_{L_{2q}(\sigma_\tr)}\,\|\Phi(z)\|_{L_\infty^q(\cN\subset\cM)}\,\|d'\|_{L_{2q}(\sigma_\tr)}\\ \le&
\|\Phi:L_\infty^p(\cN\subset\cM)\to L_\infty^q(\cN\subset\cM)\|\,,\end{align*}
Therefore
$$\|\Phi:L_1^p(\cN\subset\cM)\to L_1^q(\cN\subset\cM)\|\le  \|\Phi:L_\infty^p(\cN\subset\cM)\to L_\infty^q(\cN\subset\cM)\|\,.$$ By complex interpolation \cite{bergh2012interpolation}, we can prove that this last claim holds true for all $1\le s\le \infty$:
\begin{align}\label{eq:interpola}
    \|\Phi:L_s^p(\cN\subset\cM)\to L_s^q(\cN\subset\cM)\|\le  \|\Phi:L_\infty^p(\cN\subset\cM)\to L_\infty^q(\cN\subset\cM)\|\,.
\end{align}
We denote by $\Phi^{\operatorname{KMS}}$ the dual of $\Phi$ with respect to the $\sigma_\tr$-KMS inner product \eqref{eq:sigmakms}. Then, applying duality \eqref{eq:duality} to \eqref{eq:interpola}, we get
\begin{align*}
    \|\Phi:L_s^p(\cN\subset\cM)\to L_s^q(\cN\subset\cM)\|&=    \|\Phi^{\operatorname{KMS}}:L_{s'}^{q'}(\cN\subset\cM)\to L_{s'}^{p'}(\cN\subset\cM)\|\\
    &\le \|\Phi^{\operatorname{KMS}}:L_{\infty}^{q'}(\cN\subset\cM)\to L_{\infty}^{p'}(\cN\subset\cM)\|\\
    &= \|\Phi:L_{1}^{p}(\cN\subset\cM)\to L_{1}^{q}(\cN\subset\cM)\|\,.
\end{align*}
Using duality one last time, we get
\begin{align*}
    \|\Phi: L_\infty^p(\cN\subset\cM)\to L_\infty^q(\cN\subset\cM)\|&=\|\Phi^{\operatorname{KMS}}: L_1^{q'}(\cN\subset\cM)\to L_1^{p'}(\cN\subset\cM)\|\\
    &\le \|\Phi^{\operatorname{KMS}}: L_{s'}^{q'}(\cN\subset\cM)\to L_{s'}^{p'}(\cN\subset\cM)\|\\
    &= \|\Phi: L_{s}^{p}(\cN\subset\cM)\to L_{s}^{q}(\cN\subset\cM)\|\\
    &  \le \|\Phi: L_\infty^p(\cN\subset\cM)\to L_\infty^q(\cN\subset\cM)\|\,,
\end{align*}
and hence all the norms coincide. This concludes the proof of \eqref{eq:bimodulecb}. 
In the case $p=2$, it is easy to see that all matrix norms are equal to each others and correspond to the spectral radius of $\Phi$ as a self-adjoint operator on $L_2$, from which \eqref{eqgapcb} follows. (see \cite[Lemma 2.6]{gao2021spectral} for independence of invariant state $\rho$).
\end{proof}

We end this section with a Lemma that connects up the notion of $\operatorname{cb}$-index to the operator space structure of amalgamated $L_p$ norms (see \cite[Theorem 3.9]{gaoindex} for the tracial case).
\begin{lem}\label{lemmaindexlpspace}
Let $\cN\subset \cM$ be finite dimensional von Neumann subalgebras and $E_{\cN}:\cM\to \cN$ be a conditional expectation. Then
\begin{align}
&C_{\tau}(\cM:\cN)\ge  \|\id:L_\infty^1(\cN\subset\cM)\to \cM\|_{} \label{eq:indexnorm}\\
&C_{\tau,\operatorname{cb}}(\cM:\cN)\ge  \|\id:L_\infty^1(\cN\subset\cM)\to \cM\|_{\operatorname{cb}}\,.
\label{eq:cbindexnorm}
\end{align}
\end{lem}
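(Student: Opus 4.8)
## Proof plan for Lemma \ref{lemmaindexlpspace}

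The plan is to prove \eqref{eq:indexnorm} directly and then bootstrap to \eqref{eq:cbindexnorm} by the tensorization built into the definitions of $C_{\tau,\operatorname{cb}}$ and of the operator space structure on $L_\infty^1(\cN\subset\cM)$. For \eqref{eq:indexnorm}, recall from \eqref{eq_prop_normp_positive} at $p=\infty$ that for positive $x$ one has
\[
\|x\|_{L_\infty^1(\cN\subset\cM)}=\inf_{\substack{a\in\cN,~a>0\\ \|a\|_{L_1(\sigma_\tr)}=1}}\,\|a^{-\frac12}xa^{-\frac12}\|_\infty\,,
\]
and by Lemma \ref{factorization} the general case factors through positive elements. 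So it suffices to show that $\|x\|_\infty \le C_\tau(\cM:\cN)\,\|x\|_{L_\infty^1(\cN\subset\cM)}$ for $x\ge 0$, i.e. that for \emph{every} positive $a\in\cN$ with $\|a\|_{L_1(\sigma_\tr)}=1$ one has $\|x\|_\infty\le C_\tau(\cM:\cN)\,\|a^{-1/2}xa^{-1/2}\|_\infty$, equivalently $x\le C_\tau(\cM:\cN)\,\|a^{-1/2}xa^{-1/2}\|_\infty\, a$. Writing $c:=\|a^{-1/2}xa^{-1/2}\|_\infty$, we have $x\le c\,a$, so it is enough to show $a \le C_\tau(\cM:\cN)\,E_{\cN*}(\text{something})$... rather, the cleanest route is: normalize $\|x\|_\infty=1$ and take the specific choice $a = E_{\cN*}(\rho)$ where $\rho$ is a well-chosen state. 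Concretely, I would test against states: for any state $\rho\in\cD(\cM)$, by definition of $C_\tau$ we have $\rho\le C_\tau(\cM:\cN)\,E_{\cN*}(\rho)$, and $E_{\cN*}(\rho)\in\cN$ is positive with $\|E_{\cN*}(\rho)\|_{L_1(\sigma_\tr)}=\tr(E_{\cN*}(\rho))=1$ (using that $\sigma_\tr$ restricted to $\cN$ is a trace, established in the excerpt, and that $E_{\cN*}$ is trace-preserving). Then, with $a=E_{\cN*}(\rho)$, $\|a^{-1/2}xa^{-1/2}\|_\infty \le C_\tau(\cM:\cN)$ would follow if $x\le C_\tau(\cM:\cN)\,E_{\cN*}(\rho)$; choosing $\rho$ to be (proportional to, and supported where) $x$ itself — say $x\ge 0$ with $\tr(x\,\text{appropriate weight})$ normalized — closes the loop. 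The bookkeeping of which trace/weight normalization makes $\rho$ a genuine state while keeping $x\le C_\tau(\cM:\cN)E_{\cN*}(\rho)$ is the one subtle point, and I would mirror exactly the tracial argument of \cite[Theorem 3.9]{gaoindex}, substituting $\sigma_\tr$-weighted quantities for the tracial ones via \eqref{eqENENtr}.

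For \eqref{eq:cbindexnorm}, I would amplify: by Proposition \ref{opspaceLqp}, $\mathbb{M}_n(L_\infty^1(\cN\subset\cM))\cong L_\infty^1(\mathbb{M}_n(\cN)\subset\mathbb{M}_n(\cM))$ with respect to $\sigma_\tr^{(n)}=n^{-1}\Id_n\otimes\sigma_\tr$, which is precisely the trace-normalized invariant state for the conditional expectation $E_\cN\otimes\id_n:\mathbb{M}_n(\cM)\to\mathbb{M}_n(\cN)$. Hence $\|\id_n\otimes\id : \mathbb{M}_n(L_\infty^1(\cN\subset\cM))\to \mathbb{M}_n(\cM)\|$ equals $\|\id: L_\infty^1(\mathbb{M}_n(\cN)\subset\mathbb{M}_n(\cM))\to\mathbb{M}_n(\cM)\|$, which by \eqref{eq:indexnorm} applied to the inclusion $\mathbb{M}_n(\cN)\subset\mathbb{M}_n(\cM)$ is bounded by $C_{\tau\otimes\Id_n}(\mathbb{M}_n(\cM):\mathbb{M}_n(\cN))$. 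Taking the supremum over $n$ and using the definition \eqref{index} of $C_{\tau,\operatorname{cb}}$ gives exactly \eqref{eq:cbindexnorm}.

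I expect the main obstacle to be the first inequality \eqref{eq:indexnorm}, and within it the precise juggling of the $\sigma_\tr$-weighted $L_1$ normalization: one must confirm that the optimal $a\in\cN$ in the variational formula for $\|\cdot\|_{L_\infty^1(\cN\subset\cM)}$ can indeed be taken of the form $E_{\cN*}(\rho)/\tr(E_{\cN*}(\rho))$ for a state $\rho$ comparable to $x$, and that the resulting constant is exactly $C_\tau(\cM:\cN)$ rather than something larger. The duality \eqref{eq:duality} gives an alternative, possibly cleaner, route: $\|\id: L_\infty^1(\cN\subset\cM)\to\cM\| = \|\id:\cT_1(\cM)\to L_1^\infty(\cN\subset\cM)\|$ (predual side), and on the predual one computes $\|\rho\|_{L_1^\infty(\cN\subset\cM)}=\sup_{a\in\cN,a\ge 0,\|a\|_{L_1(\sigma_\tr)}=1}\|a^{1/2}\rho a^{1/2}\|_1$... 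I would carry whichever of the two formulations makes the comparison $\rho\le C_\tau E_{\cN*}(\rho)$ drop out most transparently. Everything else is routine interpolation/tensorization bookkeeping already set up in the excerpt.
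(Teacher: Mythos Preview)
Your overall architecture is right and matches the paper's: dualize, reduce to positive elements via the factorization Lemma~\ref{factorization}, then pick the optimal element in $\cN$ using the defining inequality $\rho\le C_\tau(\cM:\cN)E_{\cN*}(\rho)$; finally amplify with Proposition~\ref{opspaceLqp} for the $\operatorname{cb}$-version. However, the details are garbled in two places that would prevent the argument from going through as written.

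First, you have swapped the variational formulas for $L_\infty^1$ and $L_1^\infty$. In the paper's convention $L_q^p$ has subscript $q$ and superscript $p$; for $L_\infty^1$ one has $p=1\le q=\infty$, so \eqref{linfty1pq} applies and $\|x\|_{L_\infty^1}$ is a \emph{supremum} over $a\in\cN$, not an infimum. The infimum formula you quote is the one for $L_1^\infty$ (which is also the norm appearing in Lemma~\ref{factorization}). This is exactly why your ``direct'' route tangles: you correctly observe that an infimum would force you to check every $a$, then try to escape by picking a single $a$, which is incoherent under your own hypothesis. The clean fix---which is what the paper does and what your final paragraph gestures at---is to work on the dual side $\|\id:L_1(\sigma_\tr)\to L_1^\infty(\cN\subset\cM)\|$, where the target norm \emph{is} an infimum and exhibiting one good $a$ suffices.

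Second, $E_{\cN*}(\rho)$ is not an element of $\cN$; it is an $E_{\cN*}$-invariant state. (Correspondingly, $\|E_{\cN*}(\rho)\|_{L_1(\sigma_\tr)}\ne\tr(E_{\cN*}(\rho))$ in general.) The correct dictionary, implicit in \eqref{eqENENtr}, is the substitution $\rho=\sigma_\tr^{1/2}x\,\sigma_\tr^{1/2}$ and $\omega=\sigma_\tr^{1/2}a\,\sigma_\tr^{1/2}$: then $x\ge0$ with $\tr(\sigma_\tr x)=1$ corresponds to a state $\rho$, while $a\in\cN$, $a>0$, $\tr(\sigma_\tr a)=1$ corresponds to an invariant state $\omega$, and $x\le ca\Leftrightarrow\rho\le c\omega$. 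Choosing $\omega=E_{\cN*}(\rho)$ (equivalently $a=\sigma_\tr^{-1/2}E_{\cN*}(\rho)\,\sigma_\tr^{-1/2}\in\cN$) then gives $\|x\|_{L_1^\infty}\le C_\tau(\cM:\cN)$ immediately. Your amplification step for \eqref{eq:cbindexnorm} is correct as stated.
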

\begin{proof}
First by duality, we have
\begin{align}
    \|\id:L_1(\si_\Tr)\to L_1^\infty(\cN\subset\cM)\|=\|\id: L_\infty^1(\cN\subset \cM)\to \cM\|.
\end{align}
We show that it suffices to consider positive elements in the optimization of the norm on the left hand side. Denote
\[\|\id:L_1(\si_\Tr)\to L_1^\infty(\cN\subset\cM)\|_{+}=\underset{\substack{x\ge 0\\
 	\tr(\si x)=1}}{\sup} \norm{x}_{L_1^\infty(\cN\subset\cM)} . \]
For general $x\in \cM$, we have
\begin{align*}
\norm{x}_{L_1(\si_\Tr)}=\norm{\si_\Tr^{1/2}x\si_\Tr^{1/2}}_{L_1(\tr)}=\inf_{y'z'=\si_\Tr^{1/2}x\si_\Tr^{1/2}}\norm{y'}_{L_2(\tr)} \norm{z'}_{L_2(\tr)}\,,
\end{align*}
where $\|.\|_{L_p(\tr)}$ denotes the usual $p$-Schatten norm. Take $y=\si_\Tr^{-1/2}y'$ and $z=z'\si_\Tr^{-1/2}$. We have
\begin{align*}
\norm{x}_{L_1(\si_\Tr)}=&\inf_{yz=x}\norm{\si_\Tr^{1/2}y}_{L_2(\tr)} \norm{z\si_\Tr^{1/2}}_{L_2(\tr)}
\\=&\inf_{yz=x}\norm{\si_\Tr^{1/2}yy^\dagger \si_\Tr^{1/2}}_{L_1(\tr)}^{1/2} \norm{\si_\Tr^{1/2}z^\dagger z\si_\Tr^{1/2}}_{L_1(\tr)}^{1/2}
\\=&\inf_{yz=x}\norm{yy^\dagger }_{L_1(\si_\tr)}^{1/2} \norm{z^\dagger z}_{L_1(\si_\tr)}^{1/2}\,.
\end{align*}
Then, we have by the factorization property \eqref{eq:factorization2} that
\begin{align*}\norm{x}_{L_1^\infty(\cN\subset\cM)}\le &\inf_{x=yz} \norm{yy^\dagger }_{L_1^\infty(\cN\subset\cM)}^{1/2}\norm{z^\dagger z}_{L_1^\infty(\cN\subset\cM)}^{1/2} \\ \le&
\norm{\id:L_1(\si_\Tr)\to L_1^\infty(\cN\subset\cM)}_{+}\inf_{x=yz} \norm{yy^\dagger }_{L_1(\si_\tr)}^{1/2}\norm{z^\dagger z}_{L_1(\si_\tr)}^{1/2}
\\ =&
\norm{\id:L_1(\si_\Tr)\to L_1^\infty(\cN\subset\cM)}_{+}\norm{x}_{L_1(\si_\Tr)}.
\end{align*}
which proves $$\norm{\id:L_1(\si_\Tr)\to L_1^\infty(\cN\subset\cM)}_{+}=\norm{\id:L_1(\si_\Tr)\to L_1^\infty(\cN\subset\cM)}{}\,.$$
Therefore, by definition of the $L_1^\infty(\cN\subset\cM)$ norm
\begin{align*}
    \norm{\id:L_1(\si_\Tr)\to L_1^\infty(\cN\subset\cM)}_{+}=& \underset{\substack{x\ge 0\\
 	\tr(\si x)=1}}{\sup} \norm{x}_{L_1^\infty(\cN\subset\cM)}
    \\ =& \underset{\substack{x\ge 0\\
 	\tr(\si x)=1}}{\sup} \underset{\substack{a>0, a\in \cN \\  \tr(\si_\tr a)=1}}{\inf} \norm{a^{-1/2}x{a^{-1/2}}}_{\infty}
    \\ =& \underset{\substack{x\ge 0\\
 	\tr(\si x)=1}}{\sup} \underset{\substack{a>0, a\in \cN \\  \tr(\si_\tr a)=1}}{\inf} \inf \{c>0\ |\ x\le ca\} \
     \\ \overset{(1)}{=}& \underset{\substack{\rho\ge 0 \\ \tr(\rho)=1}}{\sup} \underset{\substack{\omega>0, E_{\cN*}(\omega)=\omega \\  \tr(\omega)=1}}{\inf} \inf \{c>0\ |\ \rho\le c\omega\}
     \\ \overset{(2)}{\le }& \underset{\substack{\rho\ge 0 \\ \tr(\rho)=1}}{\sup}  \inf \{c>0\ |\ \rho\le cE_{\cN *}(\rho)\}  := C_{\tau}(\cM:\cN)\ .
\end{align*}
Here, the equality (1) uses the substitution $\rho=\si_{\tr}^{\frac{1}{2}}x \si_{\tr}^{\frac{1}{2}}$ and $\omega=\si_{\tr}^{\frac{1}{2}}a \si_{\tr}^{\frac{1}{2}}$. The inequality (2) follows by choosing $\omega=E_{\cN *}(\rho)$.  This proves the inequality \eqref{eq:indexnorm}. Applying \eqref{eq:indexnorm} to the inclusion $\mathbb{M}_n(\cN)\subset \mathbb{M}_n(\cM)$ for all $n$ yields the CB-version \eqref{eq:cbindexnorm}.
\end{proof}

\subsection{Proof of \Cref{lemmatechnicalgeneral}}\label{prooflemmatechngen}

The proof of i) is identical to that of \cite[Lemma B.2]{gao2021spectral} in the trace-symmetric case. ii) is a consequence of i) and \Cref{thm:ruan} (see also \cite[Lemma B.2]{gao2021spectral} and \cite[Lemma 3.14]{gao2020fisher}). Recall that for a $\cN$-bimodule map $\Phi$, its module Choi operator $\chi_\Phi=\sum_{i,j}\ket{i}\bra{j}\ten \Phi(\xi_i^\dag\xi_j)\in \cB(l_2^n)\ten \cM$ where $\xi_1,\ldots,\xi_n$ is a module basis for the conditional expectation $E_\cN$ such that
\[ E_\cN(\xi^\dag\xi_j)=\delta_{ij}p_i\ \]
with some projections $p_i\in \cN$. We first show the following inequality (which is actually an equality but we only need one direction here, for the other direction see \cite{gao2020fisher})
\begin{align}\label{eq:inequality} \norm{\Phi:L_\infty^1(\cN\subset \cM)\to \cM}_{\operatorname{cb}}\ge \norm{\chi_\Phi}_{\cB(l_2^n)\ten \cM}\,.\ \end{align}
Define the map $$\Psi:\cT_1(l_2^n)\to \cM, \Psi(|i\rangle\langle j|)=\xi_i^\dagger\xi_j.$$
Its (standard) Choi matrix is $J_{\Psi}=\sum_{i,j}\ket{i}\bra{j}\ten \xi_i^\dag\xi_j\in \cB(l_2^n)\ten \cM$. Then, by \Cref{thm:ruan} and the duality $\cT_1(l_2^n)^*=\cB(\cH)$, we have
\begin{align}
    \|\Psi:\,\cT_1(l_2^n)\to L_\infty^1(\cN\subset \cM)\|_{\operatorname{cb}}&=\|J_\Psi\|_{\cB(l_2^n)\otimes_{\min} L_\infty^1(\cN\subset \cM)}\nonumber\\
    & \overset{(1)}{=}\|J_\Psi\|_{\mathbb{M}_n(L_\infty^1(\cN\subset \cM))}\nonumber\\
&\overset{(2)}{=} \|J_\Psi\|_{L_\infty^1(\mathbb{M}_n(\cN)\subset\label{eq:chipsi} \mathbb{M}_n(\cM))}\,
\end{align}
where (1) is a consequence of the second part of \Cref{thm:ruan} and (2) comes from the operator space structure of $L_\infty^1$ in \Cref{opspaceLqp}. Because $J_\Psi$ is a positive element in $\mathbb{M}_n(\cM)$, we have by \eqref{eq_prop_normq_positive}:
\begin{align*}
    \|J_\Psi\|_{L_\infty^1(\mathbb{M}_n(\cN)\subset \mathbb{M}_n(\cM))}&=\underset{\substack{
 		a\in\mathbb{M}_n(\cN),~a>0 \\\|a\|_{L_1(\sigma_\tr^{(n)})}=1}}{\sup}\,\norm{a^{1/2}\,J_\Psi\,a^{1/2}}_{L_1(\sigma_{\Tr}^{(n)})}\\
 		&\overset{(1)}{=}\underset{\substack{
 			a\in\mathbb{M}_n(\cN),~a>0 \\\|a\|_{L_1(\sigma_\tr^{(n)})}=1}}{\sup}\,\langle a,\,J_\Psi\rangle_{\sigma_\tr^{(n)}}\\
 		&\overset{(2)}{=}\underset{\substack{
 			a\in\mathbb{M}_n(\cN),~a>0 \\\|a\|_{L_1(\sigma_\tr^{(n)})}=1}}{\sup}\,\langle a,\,(\id_n\otimes E_\cN)(J_\Psi)\rangle_{\sigma_\tr^{(n)}}\\
 			&\overset{(3)}{=}\|(\id_n\otimes E_\cN)(J_\Psi)\|_{\mathbb{M}_n(\cN)}\\
 			&=\big\|\sum_{i}|i\rangle\langle i|\otimes p_i\big\|_\infty \\
 			&\le 1\,,
\end{align*}
where in $(1)$ we use that $[\sigma_\tr^{(n)},\mathbb{M}_n(\cN)]=0$, whereas $(2)$ follows by $(\id_n\otimes E_\cN)(a)=a$ and self-adjointness of $\id_n\otimes E_\cN$ with respect to $\langle .,.\rangle_{\sigma_\tr^{(n)}}$. Moreover, $(3)$ follows by duality. Combining this bound with \eqref{eq:chipsi}, we obtain
\begin{align}\label{psicompletecontraction}
    \|\Psi:\,\cT_1(l_2^n)\to L_\infty^1(\cN\subset \cM)\|_{\operatorname{cb}}\le 1\,.
\end{align}
Note that the standard Choi matrix of $\Phi\circ \Psi$ is the module Choi matrix of $\Phi$, i.e.
$J_{\Phi\circ \Psi}=\chi_{\Phi}$.
Then, by Proposition \ref{thm:ruan} again
\begin{align*}
\norm{\chi_\Phi}_{\cB(l_2^n)\ten \cM}
= \|J_{\Phi\circ \Psi}\|_{\cB(l_2^n)\ten \cM}
= \norm{\Phi\circ \Psi:\cT_1(l_2^n)\to \cM}_{\operatorname{cb}}
\le \norm{\Phi:L_\infty^1(\cN\subset \cM)\to \cM}_{\operatorname{cb}} \, ,
\end{align*}
where the last inequality uses \eqref{psicompletecontraction}. This proves \eqref{eq:inequality}. Now consider $\Phi$ to be a unital $\cN$-bimodule map that is self-adjoint w.r.t. the KMS inner product $\langle \cdot ,\cdot \rangle_{\si_\Tr}$. It follows that $\Phi\circ E_\cN=E_\cN\circ \Phi=E_\cN$. Indeed, for any $x\in \cM,y\in \cN$:
\[ \Phi(y)=y \ , \ \Phi(\Id)=\Id\ , \ \langle y, E_\cN\circ \Phi(x)\rangle_{\si_\Tr}=\langle \Phi\circ E_\cN(y), x\rangle_{\si_\Tr}=\langle E_\cN(y), x\rangle_{\si_\Tr}=\langle y, E_\cN(x)\rangle_{\si_\Tr}\,.\]
Thus, $(\Phi-E_{\cN})^k=\Phi^k-E_{\cN}$ and we control the norm of the difference between the module Choi operators of $\Phi^k$ and $E_\cN$ as follows:
\begin{align*}
&\norm{\chi_{\Phi^k}-\chi_{E_{\cN}}}_{\cB(l_2^n)\ten \cM}\\
&~~~~~\overset{(1)}{\le}\norm{\Phi^k-E_{\cN}:L_\infty^1(\cN\subset \cM)\to \cM }_{\operatorname{cb}}\\
&~~~~~=\norm{(\Phi-E_{\cN})^k:L_\infty^1(\cN\subset \cM)\to \cM }_{\operatorname{cb}}
\\
&~~~~~=\norm{\id:L_\infty^1(\cN\subset \cM)\to L_\infty^2(\cN\subset \cM)}_{\operatorname{cb}}
\cdot
\norm{(\Phi-E_{\cN})^k:L_\infty^2(\cN\subset \cM)\to L_\infty^2(\cN\subset \cM)}_{\operatorname{cb}}\\ &~~~~~\quad\quad\,\qquad\qquad\qquad\qquad\qquad\qquad\qquad \qquad\, \qquad\cdot
\norm{\id:L_\infty^2(\cN\subset \cM)\to  \cM}_{\operatorname{cb}}
\\&~~~~~\overset{(2)}{=} \norm{\id:L_\infty^1(\cN\subset\cM)\to \cM}_{\operatorname{cb}}\,\lambda^k.
\end{align*}
Here, (1) is a consequence of the inequality \eqref{eq:inequality} proved above. $(2)$ follows the duality
\begin{align*}
    \norm{\id:L_\infty^1(\cN\subset \cM)\to L_\infty^2(\cN\subset \cM)}_{\operatorname{cb}} & =\norm{\id:L_\infty^2(\cN\subset \cM)\to  \cM}_{\operatorname{cb}} \\
    & =\norm{\id:L_\infty^1(\cN\subset \cM)\to \cM}_{\operatorname{cb}}^{1/2}\, 
\end{align*}
and by \Cref{eq:bimodulecb}, we have for the $L_\infty^2\to L_\infty^2$ norm
\begin{align*}
&\norm{(\Phi-E_{\cN})^k:L_\infty^2(\cN\subset \cM)\to L_\infty^2(\cN\subset \cM)}_{\operatorname{cb}}\\
=& \sup_{n}\norm{\id_{M_n}\ten(\Phi-E_{\cN})^k:L_\infty^2(M_n(\cN)\subset M_n(\cM))\to L_\infty^2(M_n(\cN)\subset M_n(\cM)}_{}\\
=& \sup_{n}\norm{\id_{M_n}\ten(\Phi-E_{\cN})^k:L_2( M_n(\cM))\to L_2(M_n( M_n(\cM)}_{}\\
=& \norm{(\Phi-E_{\cN})^k:L_2( \cM)\to L_2( \cM)}_{}\le \lambda^k
\end{align*}
Therefore, the assertion ii) follows from i) and Lemma \ref{lemmaindexlpspace}.
\begin{flushright}
   \qedsymbol
\end{flushright}

\section*{Acknowledgements}
IB is supported by French A.N.R. grant: ANR-20-CE47-0014-01 ``ESQuisses''. AC was partially supported by an MCQST Distinguished Postdoc and the Seed Funding Program of the MCQST (EXC-2111/Projekt-ID: 390814868). AL acknowledges support from the BBVA Fundation and the Spanish ``Ramón y Cajal'' Programme (RYC2019-026475-I / AEI / 10.13039/501100011033). This project has received funding from the European Research Council (ERC) under the European Union’s Horizon 2020 research and innovation programme (grant agreement No 648913).  DPG acknowledges support from MINECO (grant MTM2017-88385-P) and from Comunidad de Madrid (grant QUITEMAD-CM, ref. P2018/TCS-4342). CR acknowledges financial support from a Junior Researcher START Fellowship from the MCQST, and AC and CR also acknowledge financial support by the DFG cluster of excellence 2111 (Munich Center for Quantum Science and Technology).

\bibliographystyle{abbrv}
\bibliography{references}

\end{document}